\newtheorem{theorem}{Theorem}[section]
\newtheorem{proposition}[theorem]{Proposition}
\newcommand{\be}{\begin{equation}}
\newcommand{\ee}{\end{equation}}
\newcommand{\bey}{\begin{eqnarray}}
\newcommand{\eey}{\end{eqnarray}}
\newcommand{\norm}[1]{\|#1\|}
\newcommand{\PP}{{\mathbb P }}
\newcommand{\eps}{\varepsilon}
\newcommand{\bx}{\textbf{x}}
\newcommand{\ph}{\varphi}
\newcommand{\Ph}{\Phi}
\newcommand{\g}{\gamma}
\newcommand{\e}{\varepsilon}
\newcommand{\s}{\sigma}
\newcommand{\cU}{{\cal U}}
\newcommand{\bR}{{\mathbb R}}
\newcommand{\bN}{{\mathbb N}}
\newcommand{\bZ}{{\mathbb Z}}
\newcommand{\tr}{\mbox{Tr}}
\newcommand{\wt}{\widetilde}
\newcommand{\cF}{{\cal F}}
\newcommand{\cE}{{\cal E}}
\newcommand{\cV}{{\cal V}}
\newcommand{\cK}{{\cal K}}
\newcommand{\cH}{{\cal H}}
\newcommand{\cL}{{\cal L}}
\newcommand{\cN}{{\cal N}}
\newcommand{\cJ}{{\cal J}}
\newcommand{\D}{\Delta}
\newcommand{\h}{\mathfrak{h}}
\newcommand{\G}{{\Gamma}}
\newcommand{\donothing}[1]{}
\title{Effective Evolution Equations from Quantum Dynamics}
\author{Niels Benedikter$^{*}$, Marcello Porta$^{\dag}$ and Benjamin Schlein$^{\dag}$\\ \vspace{.3cm} \\ $^{*}$Department of Mathematical Sciences, University of Copenhagen\\
Universitetsparken 5, 2100 K\o{}benhavn \O, Denmark \\ \vspace{.1cm} \\ $^{\dag}$Mathematics Department, University of Z\"urich, \\ Winterthurerstrasse 190, 8057 Z\"urich, Switzerland \\ \vspace{.05cm} }
\begin{document}

\maketitle

\begin{abstract}
In these notes we review the material presented at the summer school on ``Mathematical Physics, Analysis and Stochastics'' held at the University of Heidelberg in July 2014. We consider the time-evolution of quantum systems and in particular the rigorous derivation of effective equations approximating the many-body Schr\"odinger dynamics in certain physically interesting regimes. 
\end{abstract}

\section{Introduction}
\setcounter{equation}{0}
\label{sec:intro}

Systems of interest in physics and other natural sciences can be described at the microscopic and the macroscopic level. Microscopically, a system is described in terms of its elementary constituents and their fundamental interactions. 
While such a description is very accurate, it is typically not well-suited for computations because of the large number of degrees of freedom. Examples of microscopic theories include Newton's theory of classical mechanics, Schr\"odinger's quantum mechanics, quantum electrodynamics and Einstein's general relativity\footnote{Of course we are not claiming these theories to be absolutely fundamental from the view of a physicist. It would be more correct to consider them as different levels between fundamental and effective, and which theory we call effective and which fundamental depends on the pair we are looking at. For example, we could also consider Newtonian mechanics as a macroscopic theory arising as an effective theory from quantum mechanics. On the next level we could view nonrelativistic quantum mechanics as an effective theory arising from the Standard Model.}. On the other hand, a macroscopic description of the system does not resolve the constituents and only takes into account effective interactions. It focuses on macroscopically observable quantities which arise from the collective behavior of the system and are of interest for the observer. Such a description is less accurate but it is much more accessible to computations. Examples of macroscopic theories are Boltzmann's kinetic theory of gases, the Navier-Stokes and the Euler equations of hydrodynamics, the Hartree and Hartree-Fock theory, the BCS theory of superconductors and superfluids, the Ginzburg-Landau theory, the Gross-Pitaevskii theory of Bose-Einstein condensation and the Vlasov theory of plasma physics. 

Because of the great importance of effective macroscopic theories for making qualitative and quantitative predictions about the behavior of physically interesting systems, a key goal of statistical mechanics is to understand their  emergence from microscopic theories in appropriate scaling regimes (also called limits, even though we often think of the parameter as being large but finite). Here mathematical physics can and should play a central role to put the effective theories, which are often obtained merely by heuristic and phenomenological arguments,  on solid grounds and understand the range and the limits of their validity. 

Let us present two examples of physical systems which can be described by effective equations that can be rigorously derived from microscopic theories in appropriate limits. 

\medskip

\emph{Large atoms and molecules.} We consider a quantum-mechanical system of $N$ electrons and $M$ nuclei of charges $Z_1, \dots , Z_M > 0$ located at positions $R_1, \dots , R_M \in \bR^3$. We assume the system to be neutral, i.\,e.\ $N = \sum_{i=1}^M Z_i$. For simplicity we work in the Born-Oppenheimer approximation, i.\,e.\ we keep the nuclei fixed (only the electrons are dynamical particles in this approximation). At zero temperature the system is in its ground state, with energy  
\begin{equation}\label{eq:ham-min} E (N) = \min_{\psi \in L^2_a (\bR^{3N}) : \| \psi \| = 1} \langle \psi, H_{N} \psi \rangle \end{equation}
where $H_{N}$ denotes the Hamilton operator
\begin{equation}\label{eq:ham-TF} H_{N} = \sum_{j=1}^N  \left[-\Delta_{x_j} - \sum_{i=1}^M \frac{Z_i}{|x_j - R_i|}\right] + \sum_{i<j}^N \frac{1}{|x_i - x_j|} + \sum_{i<j}^M \frac{Z_i Z_j}{|R_i - R_j|} .
\end{equation}
Notice that $H_{N}$ acts on the subspace $L^2_a (\bR^{3N})$ of $L^2 (\bR^{3N})$ consisting of all functions that are antisymmetric with respect to permutations of the $N$ electrons. (Of course $E (N)$ and $H_{N}$ also depend on $M$, on the charges $Z_1, \dots , Z_M$ and on the positions $R_1, \dots , R_M$). Observe that the last term on the r.\,h.\,s.\ of (\ref{eq:ham-TF}) is just a constant representing the interaction among the nuclei. Already for $N \simeq 20$ it is extremely difficult to compute the ground state energy $E (N)$ numerically since the eigenvalue equation one has to solve is a partial differential equation in $3N$ coupled variables.

Thomas and Fermi postulated already in the early stages of quantum mechanics at the end of the 1920s that the ground state energy $E (N)$ can be approximated by 
\begin{equation}\label{eq:TF-min} E (N) \simeq E_{TF} (N) = \inf_{\rho \geq 0, \| \rho \|_1 = N} \cE_{TF} (\rho) \end{equation}
with the Thomas-Fermi functional 
\[ \cE_{TF} (\rho) =  \frac{3}{5} c_{TF} \int \rho^{5/3} (x) dx - \sum_{i=1}^M Z_i \int \frac{\rho (x)}{|x - R_i|} dx + \frac{1}{2} \int \frac{\rho (x) \rho (y)}{|x-y|} dx dy + \sum_{i<j}^M \frac{Z_i Z_j}{|R_i - R_j|}. \]
Notice that on the r.\,h.\,s.\ of (\ref{eq:TF-min}) we are looking for a function $\rho \in L^1 (\bR^3)$; as a consequence, in terms of numerical computations, the minimization problem (\ref{eq:TF-min}) is much simpler than the original problem (\ref{eq:ham-min}) despite the fact that the resulting Euler-Lagrange equations are nonlinear. 

In \cite{LSi} Lieb and Simon proved that the approximation (\ref{eq:TF-min}) becomes exact in the limit of large $N$. More precisely, they showed that, while $E(N)$ and $E_{TF} (N)$ are both proportional to $N^{7/3}$, 
\[ |E (N) - E_{TF} (N)| \leq C N^{7/3-1/30} \]
for an appropriate constant $C>0$. This gives a mathematically rigorous derivation of the Thomas-Fermi theory, and it tells us how big $N$ should be in order for $E_{TF} (N)$ to be a good approximation of the true quantum energy (later, better bounds have been obtained \cite{Hu, SiWe1, SiWe2}: one knows that the error with respect to Thomas-Fermi theory is of the order $N^{2}$ in the limit of large $N$). 

\medskip

\emph{Kinetic theory of dilute gases.} Consider now a gas of $N$ classical particles moving according to Newton's equations 
\begin{equation}\label{eq:new-bolt} \begin{split} \dot{x}_j (t) &= v_j (t) \\ \dot{v}_j (t)  &= - \sum_{i \not = j}^N \nabla V (x_i (t) - x_j (t)) \end{split} \end{equation}
for $j=1, \dots, N$. Here $V$ is a short range (compactly supported), regular potential. Eq.\ (\ref{eq:new-bolt}) is a system of $6N$ coupled ordinary differential equations. Given appropriate initial data, it is known to have a unique solution for all $t \in \bR$. 
However, since the number of particles $N$ is typically extremely large, it is almost impossible to deduce from (\ref{eq:new-bolt}) interesting qualitative or quantitative properties of the solution. 

At the beginning of the twentieth century Boltzmann, based on clever heuristic arguments, proposed to describe the dynamics of the gas by the nonlinear equation 
\begin{equation}\label{eq:boltz} \partial_t f_t (x,v) + v \cdot \nabla_x f_t (x,v) = \int dv' \int_{S^2} d\omega \, B (v-v' ; \omega) \left( f_t (x,v_{\text{out}}) f_t (x, v'_{\text{out}}) - f_t (x,v) f_t (x,v') \right) 
\end{equation}
for the phase-space density $f_t (x,v)$, which should measure the number of particles at time $t$ that are located close to $x \in \bR^3$ and have velocity close to $v \in \bR^3$. Here 
\[ \begin{split} v_{\text{out}} &= v + \omega \cdot (v'-v) \omega \\
v'_{\text{out}} &= v'-\omega \cdot (v'-v) \omega \end{split} \]
are the velocity of two particles emerging from the collision of two particles with velocities $v,v'$ and collision vector $\omega \in S^2$. Boltzmann's equation is a partial differential equation in only six variables, and is therefore much more accessible to computations than (\ref{eq:new-bolt}). 

For several years after the work of Boltzmann, his equation was not accepted by the physics community. In contrast to Newton's equations, Boltzmann's equation (\ref{eq:boltz}) is not time-reversal invariant. It took a while to understand that this fact does not contradict the validity of Boltzmann's equation, but only restricts the set of initial data for which (\ref{eq:boltz}) is a good approximation.

In \cite{Gr} Grad realized that Boltzmann's equation becomes correct in the low-density limit, where the density of particles $\rho$ is very small, $N$ is very large, and $N \rho^2$ is fixed of order one (the low-density or Boltzmann-Grad limit). Finally Lanford proved in \cite{Lan} that indeed in this limit Boltzmann's equation can be rigorously derived from Newton's equations, at least for sufficiently short times (recently \cite{PSS,GST} have extended the validity of (\ref{eq:boltz}) to a larger class of interaction potentials). 

\medskip

\emph{Plan of the notes.} Notice that the first example we discussed above (the Thomas-Fermi theory for large atoms and molecules) is based on many-body quantum mechanics while the second one (kinetic theory of gases in the low density limit) is based on classical Newtonian mechanics. Observe, moreover, that in the first example we were interested in an equilibrium property of the system (its ground state energy) while in the second case we considered a non-equilibrium problem (the time-evolution of the gas). 

In these notes we will focus on the derivation of time-dependent effective theories (non-equilibrium question) approximating many-body quantum dynamics. 
In the rest of this section, we will briefly recall the main properties of many-body quantum systems and their time evolution. In Section \ref{sec:mf} we will then introduce the mean-field regime for bosonic systems and we will explain how the many-body dynamics can be approximated by the Hartree equation in this limit. In Section \ref{sec:cohe} we will present a method, based on the use of coherent states and inspired by \cite{Hepp,GV}, to rigorously prove the convergence towards the Hartree dynamics. The fluctuations around the Hartree equation will be considered in Section \ref{sec:fluc-bos}. In Section \ref{sec:GP} we will discuss a more subtle regime, in which the many-body evolution can be approximated by the nonlinear Gross-Pitaevskii equation. In Section  \ref{sec:mf_fermions} we will discuss fermionic systems (characterized by antisymmetric wave functions). The fermionic mean-field regime is naturally linked with a semiclassical regime, and we will prove that the evolution of approximate Slater determinants can be approximated by the nonlinear Hartree-Fock equation.  Finally, in Section \ref{sec:mixed}, we will consider the same fermionic mean-field regime but this time we will focus on mixed quasi-free initial data approximating thermal states at positive temperature. 

\medskip

\emph{Wave functions and observables.} We will describe quantum systems of $N$ particles in three dimensions through a complex-valued wave function $\psi_N \in L^2 (\bR^{3N}, dx_1, \dots dx_N)$ with $\| \psi_N \|_2 = 1$. The arguments $(x_1, \dots , x_N) \in \bR^{3N}$ of the wave function $\psi_N$ describe the position of the $N$ particles. Since it does not play an important role in our analysis, we always neglect the spin of the particles.

Observables of the quantum system are associated with self-adjoint operators on the Hilbert space $L^2 (\bR^{3N}, dx_1 \dots dx_N)$. If $A$ is such an operator, the spectral theorem allows us to write it as 
\[ A = \int \lambda\, dE_A (\lambda), \]
where $dE_A (\lambda )$ is the spectral measure associated with $A$. Assuming for simplicity that $A$ has purely discrete spectrum, we find $A = \sum_j \lambda_j |\ph_j \rangle \langle \ph_j|$, where $\lambda_j$ are the eigenvalues and $\ph_j$ the eigenvectors of $A$. The expectation of the observable $A$ in the state $\psi_N$ is then given by the inner product 
\[ \langle \psi_N, A \psi_N \rangle = \sum_{j} \lambda_j |\langle \ph_j, \psi \rangle|^2 .\]
This expression for the expectation of $A$ leads to the following interpretation: the eigenvalues $\lambda_j$ are the possible outcomes of a measurement of $A$ while $|\langle \ph_j, \psi \rangle|^2$ is the probability that a measurement of $A$ produces the outcome $\lambda_j$. The positions of the $N$ particles are associated with multiplication operators; hence $|\psi_N (x_1, \dots , x_N)|^2$ is the probability density for finding particles close to $(x_1, \dots , x_N) \in \bR^{3N}$. The momenta of the particles are associated with differential operators $p_j = -i \nabla_{x_j}$. Hence, if $\widehat{\psi}_N$ denotes the Fourier transform of $\psi_N$, $|\widehat{\psi}_N (p_1, \dots , p_N)|^2$ is the probability density for finding particles with momenta close to $(p_1,\dots , p_N) \in \bR^{3N}$. The fact that $\psi_N$ determines the probability distribution of all observables of the system is an important feature of quantum mechanics (it leads for example to Heisenberg's uncertainty principle).

\medskip

\emph{Statistics.} In the following we will study systems of $N$ indistinguishable particles. In this case, there are important restrictions on the behavior of the wave function with respect to permutations. There are two different types of particles. Bosons are characterized by wave functions which are symmetric with respect to permutation, i.\,e.\ $\psi_N (x_{\pi(1)}, \dots, x_{\pi(N)}) = \psi_N (x_1, \dots , x_N)$ for all permutations $\pi \in S_N$. Fermions, on the other hand, are characterized by antisymmetric wave functions, i.\,e.\ $\psi_N (x_{\pi(1)} , \dots , x_{\pi(N)}) = \sigma_\pi \psi_N (x_1, \dots , x_N)$, where $\sigma_\pi$ is the sign of the permutation $\pi \in S_N$\footnote{Bosonic and fermionic wave functions are the only one-dimensional irreducible representations of the permutation group $S_N$; an irreducible representation of $S_N$ with dimension larger than one would lead to redundancies in the description of the system and is therefore meaningless, from the point of view of physics.}. In these notes, we will consider both the time-evolution of bosonic and fermionic systems; as we will see, the different behavior with respect to permutations has important consequences on the dynamics. 

\medskip

\emph{Dynamics.} The time evolution of quantum systems is governed by the Schr\"odinger equation 
\begin{equation}\label{eq:schr0}
i\partial_t \psi_{N,t} = H_N \psi_{N,t} 
\end{equation}
for the $N$-particle wave function $\psi_{N,t}$ (the subscript $t$ indicates the time dependence of $\psi_{N,t}$). On the r.\,h.\,s.\ of (\ref{eq:schr0}), $H_N$ is a self-adjoint operator on $L^2 (\bR^{3N}, dx_1 \dots dx_N)$ known as the Hamilton operator (or Hamiltonian) of the system. In these notes, we will consider Hamilton operators with two-body interactions, having the form
\begin{equation}\label{eq:HN} H_N = \sum_{j=1}^N \left( -\Delta_{x_j} + V_{\text{ext}} (x_j) \right) + \lambda \sum_{i<j}^N V(x_i - x_j) \end{equation}
for appropriate functions $V_{\text{ext}}$ (the external potential) and $V$ (the interaction potential) and for a coupling constant $\lambda \in \bR$ (which we introduce here for later convenience). The sum of the Laplace operators is the kinetic part of the Hamiltonian and generates the evolution of free particles. 

The Schr\"odinger equation (\ref{eq:schr0}) is linear and can always be solved by $\psi_{N,t} = e^{-i H_N t} \psi_{N,0}$, where $e^{-iH_N t}$ denotes the unitary one-parameter group generated by the self-adjoint operator $H_N$. This implies that the well-posedness of the Schr\"odinger equation is not an issue (although sometimes it can be difficult to prove the self-adjointness of the Hamilton operator). Nevertheless, for $N \gg 1$, it is essentially impossible to extract useful qualitative and quantitative information from (\ref{eq:schr0}) that goes beyond the existence and uniqueness of solutions. According to the philosophy outlined above, we look for simpler effective equations which approximate (\ref{eq:schr0}) in interesting regimes. 

\medskip

\emph{Acknowledgments.} We would like to thank Manfred Salmhofer and Christoph Kopper for organizing the summer school ``Mathematical Physics, Analysis and Stochastics'' and for encouraging us to write up these notes. The work of M.\ Porta and of B.\ Schlein has been supported by the ERC grant MAQD-240518. N. Benedikter has been partially supported by the ERC grant CoMBoS-239694 and by the ERC Advanced grant 321029.

\section{Mean-field regime for bosonic systems}
\setcounter{equation}{0}
\label{sec:mf}

One of simplest non-trivial regimes in which it is possible to approximate the many-body dynamics by an effective equation is the mean-field limit for bosonic systems. In the mean-field regime, particles experience a large number of weak collisions, whose cumulative effect can be approximated by an average mean-field potential. To realize the mean-field regime, we consider a system of $N$ bosons with a Hamilton operator of the form (\ref{eq:HN}), in the limit of large $N$ and small coupling constant $\lambda$, with $N \lambda$ of order one. This last condition guarantees that the total force on each particle is of order one, and therefore comparable with the inertia. In other words, we consider the dynamics generated by the mean-field Hamiltonian 
\begin{equation}\label{eq:HNmf} H_N = \sum_{j=1}^N \left( -\Delta_{x_j} + V_{\text{ext}} (x_j) \right) + \frac{1}{N} \sum_{i<j}^N V(x_i -x_j) \end{equation}
acting on the Hilbert space $L^2_s (\bR^{3N})$, the subspace of $L^2 (\bR^{3N})$ consisting of permutation symmetric functions, in the limit of large $N$. 

\medskip

\emph{Initial data.} The choice of the initial data is dictated by physics. In typical experiments a Bose gas is initially trapped by an external confining potential. To study the dynamics of the gas out of equilibrium, we consider the reaction of the system to a change of the external field. In other words, we are going to consider initial data given by equilibrium states of a Hamiltonian of the form (\ref{eq:HNmf}), with $V_{\text{ext}}$ modeling the external traps. In particular, at zero temperature, we are interested in initial data close to the ground state of (\ref{eq:HNmf}). Under appropriate assumptions on the interaction potential $V$ it is known that the ground state $\psi_{N}^{\text{gs}}$ of (\ref{eq:HNmf}) can be approximated, in the limit of large $N$, by a factorized wave function; i.\,e.\ $\psi_N^{\text{gs}} \simeq \ph^{\otimes N}$, for an appropriate $\ph \in L^2 (\bR^3)$ ($\ph$ is the minimizer of the Hartree energy functional). We are interested, therefore, in the solution of the Schr\"odinger equation 
\begin{equation}\label{eq:schr-mf} i\partial_t \psi_{N,t} = H_N \psi_{N,t} \end{equation}
for approximately factorized initial data $\psi_{N,0} \simeq \ph^{\otimes N}$. Note that the external potential in the operator $H_N$ appearing on the r.\,h.\,s.\ of (\ref{eq:schr-mf}) is typically different from the external potential in the trapping Hamiltonian, whose ground state is approximated by $\ph^{\otimes N}$ (otherwise, the dynamics would be trivial). For example the initial data would be taken as an approximation to the ground state in a harmonic trap, and the evolution of this initial data would be studied after switching off the trap, i.\,e.\ $V_{\text{ext}} = 0$.

\medskip

\emph{Hartree equation.} Of course, since $H_N$ is an interacting Hamiltonian, the dynamics does not preserve the factorization of the many-body wave function. Still, since the interaction is weak, we can expect factorization to be approximately (in a sense to be specified later) preserved, in the limit of large $N$. In other words, we can expect that for $N \gg 1$
\begin{equation}\label{eq:fact-t} \psi_{N,t}  (x_1, \dots , x_N ) \simeq \prod_{j=1}^N \ph_t (x_j) \end{equation}
for an evolved one-particle wave function $\ph_t$. Assuming (\ref{eq:fact-t}), it is easy to derive a self-consistent equation for the evolution of the one-particle wave function $\ph_t$. In fact, factorization of the $N$-particle wave function means, in probabilistic terms, that the particles are distributed in space according to the density $|\ph|^2$, independently of each other. The law of large numbers then suggests that the total potential experienced, say, by the $j$-th particle can be approximated by
\[ \frac{1}{N} \sum_{i \not = j} V (x_i -x_j) \simeq \int V(x_j - y) |\ph_t (y)|^2 dy = (V*|\ph|^2 ) (x_j). \]
Hence, $\ph_t$ must satisfy the Hartree equation
\begin{equation}\label{eq:hartree0} i\partial_t \ph_t = (- \Delta + V_{\text{ext}}) \ph_t + (V*|\ph_t|^2) \ph_t 
\end{equation}
where the many-body interaction has been replaced by the effective one-particle potential $V*|\ph_t|^2$, making $\lvert \varphi_t\rvert^2$ a nonlinear equation.

\medskip

\emph{Reduced densities.}  To explain in which sense we can expect (\ref{eq:fact-t}) to hold true, we introduce the notion of reduced density matrices (also known as reduced densities). The one-particle reduced density associated with the wave function $\psi_{N,t}$ is defined by
\[ \gamma^{(1)}_{N,t} = N\, \tr_{2,3,\dots , N} |\psi_{N,t} \rangle \langle \psi_{N,t} |, \]
where $|\psi_{N,t} \rangle \langle \psi_{N,t}|$ denotes the orthogonal projection onto $\psi_{N,t}$ and $\tr_{2, \dots , N}$ is the partial trace over the last $(N-1)$ particles. In other words, the one-particle reduced density $\gamma_{N,t}^{(1)}$ is defined as the non-negative trace-class operator on $L^2 (\bR^3)$ with integral kernel
\[ \gamma^{(1)}_{N,t} (x;y) = N \int dx_2 \dots dx_N \, \psi_{N,t} (x,x_2, \dots , x_N) \overline{\psi}_{N,t} (y,x_2, \dots , x_N). \]
Notice that we chose the normalization $\tr \, \gamma^{(1)}_{N,t} = N$.

Analogously, for $k=2, 3, \dots , N$, we can define the $k$-particle reduced density associated with $\psi_{N,t}$ by
\[ \gamma^{(k)}_{N,t} = \binom{N}{k} \, \tr_{k+1, \dots , N} \lvert\psi_{N,t} \rangle \langle \psi_{N,t} \rvert .\]
The integral kernel of the $k$-particle density matrix is given by
\begin{equation}\label{eq:red-k} \begin{split}\gamma^{(k)}_{N,t} &(x_1, \dots , x_k ; y_1, \dots , y_k) \\ = &\binom{N}{k} \int dx_{k+1} \dots dx_N \, \psi_{N,t} (x_1, \dots, x_k , x_{k+1} , \dots , x_N) \overline{\psi}_{N,t} (y_1, \dots , y_k, x_{k+1}, \dots,  x_N). \end{split} \end{equation}
The normalization is such that $\tr\, \gamma^{(k)}_{N,t} = \binom{N}{k}$. 

Clearly, for $k < N$, the $k$-particle reduced density $\gamma^{(k)}_{N,t}$ does not contain the full information about the system. Still, $\gamma^{(k)}_{N,t}$ is enough to compute the expectation of any $k$-particle observable: Let $J^{(1)}$ be an operator on the one-particle space $L^2 (\bR^3)$, and denote by $J^{(1)}_i = 1 \otimes \dots  \otimes J^{(1)} \otimes \dots \otimes 1$ the operator on $L^2 (\bR^{3N})$ acting like $J^{(1)}$ on the $i$-th particle and trivially on the other $(N-1)$ particles. We write $d\Gamma (J^{(1)}) = \sum_{i=1}^N J^{(1)}_i$. Then
\[ \langle \psi_{N,t} , d\Gamma (J^{(1)}) \psi_{N,t} \rangle = \sum_{i=1}^N \langle \psi_{N,t} , J^{(1)}_i \psi_{N,t} \rangle = N \tr J^{(1)}_i |\psi_{N,t} \rangle \langle \psi_{N,t} | = \tr \, J^{(1)} \, \gamma^{(1)}_{N,t}. \]
Similarly, if $J^{(k)}$ is an operator on the $k$-particle space $L^2 (\bR^{3k})$ and if we denote by $J^{(k)}_{i_1, \dots, i_k}$ the operator acting like $J^{(k)}$ on the $k$ particles $i_1, \dots, i_k$, we have
\[ \langle \psi_{N,t} , \sum_{\{i_1, \dots , i_k\}} J^{(k)}_{i_1, \dots , i_k} \psi_{N,t} \rangle = \tr \, J^{(k)} \, \gamma^{(k)}_{N,t}  \]
where the sum on the l.\,h.\,s.\ runs over all sets of $k$ different indices $\{ i_1, \dots , i_k \}$ chosen among $\{ 1, \dots ,N \}$.

\medskip

\emph{Convergence of reduced densities.} It turns out that reduced densities provide the appropriate language to describe the convergence of the many-body quantum evolution towards the Hartree dynamics. Under appropriate assumptions on the external potential $V_{\text{ext}}$ and, more importantly, on the interaction potential $V$, one can show the  convergence of the reduced densities associated with the solution $\psi_{N,t}$ of the Schr\"odinger equation (\ref{eq:schr-mf}) towards orthogonal projections onto products of solutions of the Hartree equation (\ref{eq:hartree0}). More precisely, consider a sequence of wave functions $\psi_N \in L_s^2 (\bR^{3N})$ with reduced density $\gamma^{(1)}_N$ satisfying  
\begin{equation}\label{eq:conv-t0} \frac{1}{N} \gamma^{(1)}_N \to |\ph \rangle \langle \ph| \quad (N \to \infty) \end{equation}
for a $\ph \in L^2 (\bR^3)$. Let $\psi_{N,t} = e^{-i H_N t} \psi_N$ be the solution of the Schr\"odinger equation (\ref{eq:schr-mf}) with initial data $\psi_N$. Then we expect, and under appropriate assumptions on $V_{\text{ext}},V$ we can show that 
\begin{equation}\label{eq:conv-mf} \frac{1}{N} \gamma^{(1)}_{N,t} \to |\ph_t \rangle \langle \ph_t| \quad (N \to \infty).\end{equation}
Here $\ph_t$ denotes the solution of the Hartree equation (\ref{eq:hartree0}) with initial data $\ph_0 = \ph$. The convergence in (\ref{eq:conv-mf}) can be understood in the trace-class topology. In fact, since the limit is a rank-one projection, weak convergence implies convergence in the trace norm\footnote{First, by testing the difference $N^{-1} \gamma^{(1)}_{N,t} - |\ph_t \rangle \langle \ph_t|$ against $|\ph_t \rangle \langle \ph_t|$ it implies Hilbert-Schmidt convergence. Then, since $|\ph_t \rangle \langle \ph_t|$ is a rank-one  projection, the operator $N^{-1} \gamma^{(1)}_{N,t} - |\ph_t \rangle \langle \ph_t|$ has exactly one negative eigenvalue. (If there were two linearly independent eigenvectors $\xi_1,\xi_2$ with negative eigenvalue, one could find a linear combination $\xi$ such that $\langle \xi, \gamma^{(1)}_{N,t}\xi\rangle < 0$.) Since $\tr\, \gamma^{(1)}_{N,t} - |\ph_t \rangle \langle \ph_t| = 0$, its absolute value is equal to the sum of all positive eigenvalues, and therefore the trace norm is twice the operator norm.}. Moreover, notice that convergence of the one-particle reduced density towards a rank-one orthogonal projection also implies convergence of higher order reduced densities in the limit $N \to \infty$ (the argument is outlined in \cite{LS}, after Theorem 1), 
\begin{equation}\label{eq:gk} \frac{1}{\binom{N}{k}} \gamma^{(k)}_{N,t} \to |\ph_t \rangle \langle \ph_t|^{\otimes k}. \end{equation}
The convergence here (and in \eqref{eq:conv-mf}) is for fixed $t$ and $k$.
Eq.\ (\ref{eq:conv-mf}) and (\ref{eq:gk}) explain in which sense one should understand the approximate factorization (\ref{eq:fact-t}). (For any operator $A$ we use the notation $A^{\otimes k} = \bigotimes_{i=1}^k A$ for its $k$-fold tensor product.)

\medskip

\emph{Some results.} The first rigorous results of the form (\ref{eq:conv-mf}) have been obtained for smooth interactions by Hepp \cite{Hepp} and for singular potentials by Ginibre and Velo \cite{GV}. For bounded potential, a proof of the convergence (\ref{eq:conv-mf}), based on the analysis of the so-called BBGKY hierarchy, has been given by Spohn \cite{Sp} (see next paragraph). The BBGKY technique has been later extended to potentials with Coulomb singularity in \cite{EY,BGM} and, for semi-relativistic bosons, in \cite{ES}. A new approach giving a precise estimate on the rate of the convergence towards the Hartree dynamics has been developed in \cite{RS} for potentials with Coulomb singularities (and further improved in \cite{CLS}); this approach, based on the ideas of \cite{Hepp,GV}, will be presented in Section \ref{sec:cohe}. Other bounds on the rate of convergence towards the Hartree evolution have been established in \cite{KP} (based on ideas proposed in \cite{P}) and recently in \cite{AFP}. In \cite{FKS,FKP} the convergence of the many-body evolution has been interpreted as a Egorov-type theorem. In \cite{AN1,AN2} the authors study the propagation of the Wigner measure in the bosonic mean-field limit. Next order corrections to the Hartree dynamics have been considered in \cite{GMM1,GMM2}, leading to a better approximation of the many-body evolution. (A related problem is the study of the fluctuations around the Hartree evolution, which will be discussed in Section \ref{sec:cohe}.) Instead of a fixed interaction $V$, it is also interesting to consider $N$-dependent potentials, scaling like $V_N (x) = N^{3\alpha} V(N^\alpha x)$ (in the three dimensional case) and converging towards a delta-function in the limit of large $N$. In this case (assuming $\alpha < 1$; for $\alpha = 1$, one recovers instead the Gross-Pitaevskii regime, which will also be discussed in Section \ref{sec:cohe}), the many-body evolution can be approximated by a nonlinear Schr\"odinger equation with a local cubic nonlinearity. Results in this direction have been obtained in \cite{AGT,AB,CH2} in the one-dimensional setting, in \cite{KSS} in the two-dimensional case and in \cite{ESY1} in three dimensions. It is also possible to start from Hamiltonians with three-body interactions; in this case, the evolution can be approximated by a quintic nonlinear Schr\"odinger equation; see \cite{CP,XC}.

\medskip

\emph{The BBGKY approach.} The main idea of the BBGKY approach, which was first applied to many-body quantum systems in the mean-field regime in \cite{Sp}, is to study directly the evolution of the reduced densities defined in (\ref{eq:red-k}). To explain this idea it is convenient to normalize the reduced densities associated with the solution $\psi_{N,t}$ of the Schr\"odinger equation, defining, for $k=1,\dots , N$, 
\[ \wt{\gamma}^{(k)}_{N,t} = \frac{1}{\binom{N}{k}} \gamma^{(k)}_{N,t}. \]
The new density matrices are normalized so that $\tr \, \wt\gamma^{(k)}_{N,t} = 1$ for all $N \in \bN$ and all $k=1, \dots, N$. {F}rom the Schr\"odinger equation for $\psi_{N,t}$ it is easy to derive evolution equations for the family $\{ \wt\gamma^{(k)} \}_{k=1}^N$. It turns out that the evolution of the $N$ reduced densities is governed by a hierarchy\footnote{i.\,e.\ the equation for $\wt{\gamma}^{(k)}_{N,t}$ depends on $\wt{\gamma}^{(k+1)}_{N,t}$} of $N$ coupled equations, known as the BBGKY hierarchy (BBGKY stands for Bogoliubov-Born-Green-Kirkwood-Yvon):
\begin{equation}\label{eq:BBGKY}
\begin{split} 
i\partial_t \wt\gamma^{(k)}_{N,t} = \; & \sum_{j=1}^k \left[ -\Delta_{x_j} , \wt\gamma^{(k)}_{N,t} \right]  + \frac{1}{N} \sum_{i<j}^k \left[ V(x_i -x_j), \wt\gamma^{(k)}_{N,t} \right] \\ &+ \frac{N-k}{N} \sum_{j=1}^k \tr_{k+1} \left[ V(x_j - x_{k+1}), \wt\gamma^{(k+1)}_{N,t} \right] \end{split} \end{equation}
where we use the convention that $\wt\gamma^{(N+1)}_{N,t} = 0$ and where $\tr_{k+1}$ denotes the partial trace over the degrees of freedom of the $(k+1)$-st particle. Here we also introduced the commutator, defined by $[A,B] = AB-BA$ on a domain suited to the operators $A$ and $B$. Notice that the second term on the r.\,h.\,s.\ of (\ref{eq:BBGKY}) describes the interaction among the first $k$ particles in the system. The last term on the r.\,h.\,s.\ of (\ref{eq:BBGKY}), on the other hand, corresponds to the interaction of these $k$ particles with the other $(N-k)$ particles. At least formally, the BBGKY hierarchy (\ref{eq:BBGKY}) converges, in the limit of large $N$, towards the infinite hierarchy 
\begin{equation}\label{eq:inf-hier} i\partial_t \wt\gamma^{(k)}_{\infty,t} = \; \sum_{j=1}^k \left[ -\Delta_{x_j} , \wt\gamma^{(k)}_{\infty,t} \right] + \sum_{j=1}^k \tr_{k+1} \left[ V(x_j - x_{k+1}), \wt\gamma^{(k+1)}_{\infty,t} \right] \, .
\end{equation}
It is simple to check that this infinite hierarchy has a factorized solutions $\wt\gamma_{\infty,t}^{(k)} = |\ph_t \rangle \langle \ph_t|^{\otimes k}$, given by products of the solution of the Hartree equation (\ref{eq:hartree0}). This observation suggests a general strategy to show the convergence (\ref{eq:gk}) of the reduced densities towards projections onto products of solutions of the Hartree equation. The strategy consists of three steps:
\begin{itemize}
\item \emph{Compactness:} first, one needs to prove the compactness of the sequence (in $N$) of families $\Gamma_{N,t} = \{ \wt\gamma_{N,t}^{(k)} \}_{k=1}^N$ with respect to an appropriate weak topology. Compactness implies in particular the existence of at least one limit point $\Gamma_{\infty,t} = \{ \wt\gamma_{\infty,t}^{(k)} \}_{k \geq 1}$.
\item \emph{Convergence.} Secondly, one needs to characterize limit points of the sequence $\Gamma_{N,t}$ as solutions of the infinite hierarchy (\ref{eq:inf-hier}). In other words, one has to show that any limit point $\Gamma_{\infty,t}$ of the sequence $\Gamma_{N,t}$ satisfies (\ref{eq:inf-hier}).
\item \emph{Uniqueness.} Finally, one has to prove the uniqueness of the solution of the infinite hierarchy. This implies immediately that the sequence $\Gamma_{N,t}$ converges, since every compact sequence with at most one limit point converges. Moreover, since we know that the factorized densities $\wt\gamma_{\infty,t}^{(k)} = |\ph_t \rangle \langle \ph_t|^{\otimes k}$ are a solution of (\ref{eq:inf-hier}), uniqueness also implies that $\wt{\gamma}_{N,t} \to |\ph_t \rangle \langle \ph_t |^{\otimes k}$ for all $k \in \bN$ (the argument proves convergence with respect to the weak topology with respect to which one showed compactness in the first step; however, since the limit is a rank-one projection, weak convergence immediately implies convergence in the trace norm). 
\end{itemize}

The most difficult of the three steps is the proof of the uniqueness of the solution of the infinite hierarchy. Let us illustrate how to prove uniqueness in the case of a bounded interaction potential $V \in L^\infty (\bR^3)$. To this end, we rewrite the infinite hierarchy (\ref{eq:inf-hier}) in integral form as 
\begin{equation}\label{eq:wtg-int} \wt\gamma_{\infty,t}^{(k)} = \cU^{(k)} (t) \wt\gamma_{\infty,0}^{(k)} + \frac{1}{i} \int_0^t ds \, \cU^{(k)} (t-s) B^{(k)} \wt\gamma^{(k+1)}_{\infty,s} .
\end{equation}
Here we defined the action of the free evolution $\cU^{(k)} (t)$ on a $k$-particle density $\gamma^{(k)}$ by 
\[ \cU^{(k)} (t) \gamma^{(k)} = e^{-i \sum_{j=1}^k \Delta_{x_j} t} \gamma^{(k)} e^{i\sum_{j=1}^k \Delta_{x_j} t} \]
and the action of the collision operator $B^{(k)}$ on a $(k+1)$-particle density $\gamma^{(k+1)}$ by  
\begin{equation}\label{eq:coll} B^{(k)} \gamma^{(k+1)} = \sum_{j=1}^k \tr_{k+1} \left[ V(x_j - x_{k+1}) , \gamma^{(k+1)} \right].
\end{equation}
Notice that the collision operator maps $(k+1)$-particle density matrices into $k$-particle density matrices. It is important to observe how the free evolution and the collision operator affect the trace norm. On the one hand, we clearly have 
\begin{equation}\label{eq:Uk} \| \cU^{(k)} \gamma^{(k)} \|_{\text{tr}} = \| \gamma^{(k)} \|_{\text{tr}} \,.
\end{equation}
On the other hand, for bounded interaction potentials, we find
\begin{equation} \label{eq:Bk} \begin{split} \| B^{(k)} \gamma^{(k+1)} \|_{\text{tr}} &\leq \sum_{j=1}^k \left[ \tr \, \left| V(x_j -x_{k+1}) \gamma^{(k+1)} \right| + \tr \left| \gamma^{(k+1)} V(x_j - x_{k+1}) \right| \right]  \\ & \leq 2k \| V \|_\infty \tr | \gamma^{(k+1)} | \\ &\leq 2 k \| V \|_\infty \| \gamma^{(k+1)} \|_{\text{tr}} \, .\end{split} 
\end{equation}
Here we used that $\tr \lvert AB\rvert \leq \norm{A} \tr \lvert B \rvert$ for any bounded operator $A$ and any trace-class operator $B$. (In the same way $\tr \lvert AB\rvert \leq \norm{B} \tr \lvert A \rvert$ if $B$ is bounded and $A$ trace class.)

Iterating (\ref{eq:wtg-int}), we obtain the $n$-th order Dyson series
\begin{equation} \label{eq:wtg-dyson} 
\begin{split} \wt\gamma_{\infty,t}^{(k)} & = \cU^{(k)} (t) \wt\gamma_{\infty,0}^{(k)} \\
&\quad + \sum_{m=1}^{n-1} \frac{1}{i^m} \int_0^t ds_1 \ldots \int_0^{s_{m-1}} ds_m \\ &\hspace{2cm} \times  \cU^{(k)} (t-s_1) B^{(k)} \cdots \cU^{(k+m-1)}(s_{m-1} -s_m) B^{(k+m-1)} \cU^{(k+m)} (s_m) \wt\gamma^{(k+m)}_{\infty,0} \\
&\quad+ \frac{1}{i^n} \int_0^t ds_1 \ldots \int_0^{s_{n-1}} ds_n \\ &\hspace{2cm} \times  \cU^{(k)} (t-s_1) B^{(k)} \cU^{(k+1)} (s_1 - s_2) \cdots \cU^{(k+n-1)} (s_{n-1} -s_n) B^{(k+n-1)} \wt\gamma^{(k+n)}_{\infty,s_n} .
\end{split} \end{equation}
Suppose now that the families $\Gamma_{1,\infty,t} = \{ \wt\gamma_{1,\infty,t}^{(k)} \}_{k \geq 0}$ and $\Gamma_{2, \infty,t} = \{ \wt\gamma_{2, \infty,t}^{(k)} \}_{k\geq 0}$ are two solutions of the infinite hierarchy, written in the integral form (\ref{eq:wtg-int}), with the same initial data. Expanding both solutions as in (\ref{eq:wtg-dyson}), and taking the difference (so that all fully expanded terms cancel), we find
\[ \begin{split}
\tr \, \left| \wt\gamma_{1,\infty,t}^{(k)} - \wt\gamma_{2,\infty,t}^{(k)} \right| \leq \; &\int_0^t ds_1 \dots \int_0^{s_{n-1}} ds_n \, \| \cU^{(k)} (t-s_1) B^{(k)} \cdots B^{(k+n-1)} \wt\gamma_{1,\infty,s_n}^{(k+n)}  \|_{\text{tr}} \\ &+  \int_0^t ds_1 \dots \int_0^{s_{n-1}} ds_n \, \| \cU^{(k)} (t-s_1) B^{(k)} \cdots B^{(k+n-1)} \wt\gamma_{2,\infty,s_n}^{(k+n)}  \|_{\text{tr}}.
\end{split} \]
Applying iteratively the bounds (\ref{eq:Uk}) and (\ref{eq:Bk}), we conclude that
\[ \tr \, \left| \gamma_{1,\infty,t}^{(k)} - \gamma_{2,\infty,t}^{(k)} \right| \leq 2 \frac{|t|^n}{n!} k (k+1) \cdots (k+n-1) (2 \| V \|_\infty)^n \leq 2^k (4 \| V \|_\infty |t| )^n, \]
where we used the normalization $\tr \, \gamma^{(k+n)}_{1,\infty,t} = \tr \, \gamma^{(k+n)}_{2,\infty,t} = 1$ and the bound 
\[ \binom{n+k-1}{k-1} \leq 2^{k+n-1}. \]
For $|t| \leq (8 \| V \|_\infty)^{-1}$, we obtain that for all $n \in \bN$
\[\tr \, \left| \wt\gamma_{1,\infty,t}^{(k)} - \wt\gamma_{2,\infty,t}^{(k)} \right|  \leq 2^{k-n}.\]
Since the l.\,h.\,s.\ is independent of $n$, it must vanish. This proves that $\wt\gamma_{1,\infty,t}^{(k)} = \wt\gamma_{2,\infty,t}^{(k)}$ for all $|t| \leq (8 \| V \|_\infty)^{-1}$. Since this argument only uses the normalization $\tr \, \wt\gamma^{(k+n)}_{1,\infty,t} = \tr \, \wt\gamma^{(k+n)}_{2,\infty,t} = 1$, which hold for all $t \in \bR$, it can be iterated to prove uniqueness of the solution of the infinite hierarchy for all $t \in \bR$. 

For a Coulomb potential $V(x) = \pm 1/ |x|$, the proof we outlined above can be modified by introducing a different norm for density matrices (this approach was first used in \cite{EY}). For a $k$-particle density $\wt\gamma^{(k)}$, acting on $L^2 (\bR^{3k})$, we define the Sobolev-type norm 
\begin{equation}\label{eq:sob-norm} \| \wt\gamma^{(k)} \|_{H^{(k)}_1} = \tr \left| S_1 \dots S_k \wt\gamma^{(k)} S_k \dots S_1 \right| \end{equation}
where $S_j = (1-\Delta_{x_j})^{1/2}$. Since the Coulomb potential is bounded with respect to the kinetic energy in the sense that as operators
\[ \pm \frac{1}{|x|} \leq C (1-\Delta) \]
one can show that the collision operator $B^{(k)}$ defined as in (\ref{eq:coll}), but now with $V(x) = \pm 1/|x|$, satisfies
\[  \| B^{(k)} \wt\gamma^{(k+1)} \|_{H_1^{(k)}} \leq Ck \| \wt\gamma^{(k+1)} \|_{H_1^{(k+1)}} .\]
This bound replaces \eqref{eq:Bk}. Using this new bound one can prove, similarly as explained above for the case of bounded potentials, that any two solutions $\Gamma_{1,\infty,t} = \{ \wt\gamma^{(k)}_{1,\infty,t} \}_{k\geq 1}$ and $\Gamma_{2,\infty,t} = \{ \wt\gamma_{2,\infty,t}^{(k)} \}_{k \geq 1}$ of the infinite hierarchy with the same initial data satisfy 
\[  \left\|  \wt\gamma_{1,\infty,t}^{(k)} - \wt\gamma_{2,\infty,t}^{(k)} \right\|_{H_1^{(k)}}  \leq C^{k-n} \left[ \| \wt\gamma_{1,\infty,t}^{(k+n)} \|_{H_1^{(k+n)}} +\| \wt\gamma_{2,\infty,t}^{(k+n)} \|_{H_1^{(k+n)}} \right] . \]
To conclude uniqueness, here we need to show a-priori estimates of the form
\begin{equation}\label{eq:apri} \left\| \wt\gamma^{(k)}_{\infty,t} \right\|_{H^{(k)}_1} \leq C^k \end{equation}
for all $t \in \bR$ and $k \in \bN$, valid for any limit point $\Gamma_{\infty,t} = \{ \wt\gamma^{(k)}_{\infty,t} \}_{k \geq 1}$ of the sequence of families $\Gamma_{N,t} = \{ \wt\gamma_{N,t}^{(k)} \}_{k=1}^N$ of densities associated with the solution of the Schr\"odinger equation. (This step was not needed in the case of bounded potentials because the trace norm trivially remains uniformly bounded.) To obtain a-priori bounds of the form (\ref{eq:apri}), one can use energy conservation; for details, see \cite{EY}. 

\section{Coherent States Approach}
\setcounter{equation}{0}
\label{sec:cohe}

In this section, we illustrate the method developed in \cite{RS}, based on the original ideas of \cite{Hepp,GV}, to prove the convergence of the many-body evolution towards the Hartree dynamics. This method is based on a representation of the many-boson system on the Fock space and on the study of the time-evolution of a special class of initial data, known as coherent states. With respect to the BBGKY approach that was presented in the last section, the analysis of the evolution of coherent states allows us to obtain precise bounds on the rate of convergence towards the Hartree dynamics. This is an important point, since real systems have a large but, of course, finite number of particles. Bounds on the rate of convergence are therefore crucial to establish whether the Hartree equation is a good approximation to the dynamics of a given boson gas. 

\medskip

\emph{Fock space.} The bosonic Fock space over $L^2 (\bR^3)$ is defined as the direct sum
\[ \cF = \bigoplus_{n \geq 0} L^2_s (\bR^{3n}), \]
where $L^2_s (\bR^{3n})$ denotes the subspace of $L^2 (\bR^{3n})$ consisting of all permutation symmetric functions. On $\cF$, we can define the inner product
\[ \langle \Psi, \Ph \rangle_\cF = \sum_{n\geq 0} \langle \psi^{(n)} , \ph^{(n)} \rangle \]
We denote by $\| . \|_{\cF}$ the corresponding norm. We always use vectors $\Psi \in \cF$ normalized such that \[ \| \Psi \|^2_\cF = \sum_{n \geq 0 } \| \psi^{(n)} \|_2^2 = 1.\]  
The Fock space allows us to describe states of the system where the number of particles is not fixed. The vector $\Psi = \{ \psi^{(n)} \}_{n\geq 0} \in \cF$ describes a state with probability $\| \psi^{(n)} \|_2^2$ has $n$ particles. In particular, states with exactly $N$ particles are embedded in the Fock space; they are described by vectors $\{ 0, 0, \dots,  \psi_N , 0, \dots \} \in \cF$ having only one non-zero component. These vectors are eigenvectors of the number of particles operator $\cN$ defined by
\[ (\cN \Psi)^{(n)} = n \psi^{(n)} \]
for any $\Psi = \{ \psi^{(n)} \}_{n \geq 0}$ such that 
\[ \sum_{n \geq 0} n^2 \| \psi^{(n)} \|_2^2 < \infty \, . \]
The vacuum vector $\Omega = \{ 1, 0 , \dots \} \in \cF$ plays a special role; it is an eigenvector of $\cN$ with eigenvalue zero and describes a state with no particles at all. 

\medskip

\emph{Creation and annihilation operators.} For any one-particle wave-function $f \in L^2 (\bR^3)$ we define the creation operator $a^* (f)$ and the annihilation operator $a(f)$ by setting
\[ \begin{split} 
(a^* (f) \Psi)^{(n)} (x_1, \dots , x_n) &= \frac{1}{\sqrt{n}} \sum_{j=1}^n f(x_j) \psi^{(n-1)} (x_1, \dots , x_{j-1}, x_{j+1}, \dots , x_n) \\
(a (f) \Psi)^{(n)} (x_1, \dots , x_n) &= \sqrt{n+1} \int dx \overline{f (x)} \psi^{(n+1)} (x, x_1, \dots , x_n). \end{split} \]
The interpretation is straightforward: $a^* (f)$ creates a new particle with wave function $f$, while $a(f)$ annihilates such a particle. Creation and annihilation operators are closed densily defined operators on $\cF$; moreover, $a^* (f)$ is the adjoint of $a(f)$ (as the notation suggests). Notice also that $a^* (f)$ is linear in $f$, while $a(f)$ is antilinear. Creation and annihilation operators satisfy canonical commutation relations
\begin{equation}\label{eq:ccr} [a(f) , a^* (g) ] = \langle f,g \rangle , \quad [a(f), a(g)] = [ a^* (f) , a^* (g) ] = 0.
\end{equation}
Despite the bosonic creation and annihilation operators being unbounded, usually domain questions are unproblematic since one can take the domain of a sufficiently large power of the number operator to easily make sense of most expressions and in particular of their commutators --- see \eqref{eq:aa-bd}. (One has to be more careful in this respect when working with the operator-valued distributions below.)
It is also useful to introduce operator-valued distributions $a_x^*$ and $a_x$ which formally create and annihilate a particle at 
position $x \in \bR^3$. It terms of these distributions 
\[ a^* (f) = \int f(x) a^*_x dx , \quad a(f) = \int \overline{f (x)} \, a_x dx \]
and 
\begin{equation}\label{eq:CCR} [ a_x, a_y^* ] = \delta (x- y) , \quad [ a_x , a_y ] = [a_x^* , a_y^*]=0 .
\end{equation}
Notice that, if $\Psi = \{ \psi^{(n)} \}_{n\in \bN}$, 
\begin{equation}\label{eq:axpsi} (a_x \Psi)^{(n)}  (x_1, \dots, x_n) = \sqrt{n+1} \, \psi^{(n+1)} (x, x_1, \dots , x_n). \end{equation}
Hence
\[ \begin{split} \langle \Psi, \cN \Ph \rangle &= \sum_{n\geq 0} n \int dx_1 \dots dx_n \overline{\phi}^{(n)} (x_1, \dots , x_n)  \ph^{(n)} (x_1, \dots, x_n) \\ &= \sum_{n\geq 1} \int dx dx_1 \dots dx_{n-1} \overline{(a_x \Psi)}^{(n-1)} (x_1,\dots , x_{n-1}) (a_x \Ph)^{(n-1)} (x_1, \dots , x_{n-1}) \\ &= \int dx \, \langle a_x \Psi, a_x \Ph \rangle \end{split} \]
and in this sense we can write 
\begin{equation}\label{eq:Naa} \cN = \int dx \, a_x^* a_x \, .
\end{equation}

This expression for $\cN$ suggests that, although creation and annihilation operators are unbounded operators, they can be bounded with respect to the square root of the number of particles operator. We have 
\begin{equation}
\begin{split}
\pm \langle \Psi, (a(f) + a^* (f)) \Psi \rangle  &\leq 2 | \langle \Psi, a(f) \Psi \rangle| \\ &\leq 2 \sum_{n\geq 0} \sqrt{n+1} \int dx dx_1 \dots dx_n \overline{\psi^{(n)} (x_1, \dots , x_n)} \, \overline{f(x)} \, \psi^{(n+1)} (x, x_1, \dots , x_n) \\ &\leq 2 \sum_{n \geq 0} \sqrt{n+1} \| f \| \| \psi^{(n)} \| \| \psi^{(n+1)} \| 
\end{split}
\end{equation}
and therefore in the sense of forms 
\begin{equation}\label{eq:aa-bd-form}
\pm  (a(f) + a^* (f)) \leq 4 \| f \| \cN^{1/2}.
\end{equation}
In norm, we have the bounds
\begin{equation} \label{eq:aa-bd} \begin{split} 
\| a(f) \Psi \|_{\cF} &\leq \| f \| \| \cN^{1/2} \Psi \|_{\cF} \quad \text{and } \\
\| a^* (f) \Psi \|_{\cF} &\leq \|f \| \| (\cN+1)^{1/2} \Psi \|_{\cF}
\end{split} 
\end{equation}
for any $f \in L^2 (\bR^3)$. To prove the first bound in (\ref{eq:aa-bd}), we observe that by the Cauchy-Schwarz inequality
\begin{equation}\label{eq:afpsi} \| a(f) \Psi \|_\cF \leq \int  dx \, |f(x)| \| a_x \Psi \|_{\cF} \leq \| f \| \left( \int dx \, \| a_x \Psi \|_{\cF}^2 \right)^{1/2} = \| f \| \| \cN^{1/2} \Psi \| .
\end{equation}
The second estimate in (\ref{eq:aa-bd}) follows from 
\[ \| a^* (f) \Psi \|_{\cF}^2 = \langle \Psi, a(f) a^* (f) \Psi \rangle = \langle \Psi , a^* (f) a(f) \Psi \rangle + \| f \|^2 \| \Psi \|_{\cF}^2 = \| a(f) \Psi \|^2 + \| f \|^2 \| \Psi \|^2 \] 
and (\ref{eq:afpsi}).  

Similarly to (\ref{eq:Naa}), we can express the second quantization of any one-particle operator in terms of the operator-valued distributions $a_x^*$ and $a_x$: Let $J^{(1)}$ be an operator on the one-particle space $L^2 (\bR^3)$. The second quantization of $J^{(1)}$ is the operator $d\Gamma (J^{(1)})$ on $\cF$ defined by the requirement that 
\[ (d\Gamma (J^{(1)}) \Psi)^{(n)} = \sum_{i=1}^n J^{(1)}_i \psi^{(n)} \] 
where $J^{(1)}_i$ denotes the operator acting on $L^2 (\bR^{3n})$ as $J^{(1)}$ on the $i$-th particle and as the identity on the other $(n-1)$ particles. If the one-particle operator $J^{(1)}$ has the integral kernel $J^{(1)} (x;y)$, we can write 
\[ \begin{split} \langle \Ph, d\Gamma (J^{(1)}) \Psi \rangle  &= \sum_{n \geq 1} \sum_{j=1}^n \langle \ph^{(n)}, J^{(1)}_j \psi^{(n)} \rangle \\ &= \sum_{n \geq 1} n \int dx dy dx_2 \dots dx_n \overline{\ph^{(n)} (x,x_2,\dots , x_n)} J^{(1)} (x;y) \psi^{(n)} (y, x_2, \dots, x_n) \\ &= \sum_{n \geq 1} \int dx dy dx_2 \dots dx_n\, J^{(1)} (x;y) \overline{(a_x \Ph)^{(n-1)} (x_2, \dots , x_n)} \,  (a_y \Psi)^{(n-1)} (x_2, \dots , x_n) \\
&= \int dx dy \, J^{(1)} (x;y) \, \langle a_x \Ph , a_y \Psi \rangle .
\end{split} \]
Thus
\begin{equation}
\label{eq:DGJ} 
d\Gamma (J^{(1)}) = \int dx dy \, J^{(1)} (x;y) a_x^* a_y. \end{equation}
Since the number of particles operator is the second quantization of the identity, $\cN = d\Gamma (1)$, the last expression is consistent with (\ref{eq:Naa}). Notice that if $J^{(1)}$ is bounded, then its second quantization $d\Gamma (J^{(1)})$ is (although generally unbounded) bounded with respect to the number of particles operator, i.e.
\begin{equation}\label{eq:dGJN} \lvert \langle \psi, d\Gamma (J^{(1)}) \psi\rangle \rvert \leq \| J^{(1)} \| \langle \psi, \cN\psi\rangle \quad \text{for all }\psi \in \cF. 
\end{equation}
and also 
\[ \| d\Gamma (J^{(1)}) \Psi \| \leq \| J^{(1)} \| \| \cN \Psi \| \, .  \]

We can extend (\ref{eq:DGJ}) to operators involving more particles. Let $J^{(k)}$ be an operator on the $k$-particle Hilbert space $L^2_{s} (\bR^{3k})$. We define the second quantization $d\Gamma (J^{(k)})$ of $J^{(k)}$ by
\begin{equation}\label{eq:dGJk} (d\Gamma (J^{(k)}) \Psi)^{(n)} = \sum_{ \{i_1, \dots , i_k\}} J^{(k)}_{i_1, \dots , i_k} \psi^{(n)} 
\end{equation}
where the sum runs over all sets $\{ i_1, \dots , i_k \}$ 
of $k$ different indices in $\{1,\dots , n\}$, and where $J^{(k)}_{i_1, \dots , i_k}$ denotes the operator on $L^2 (\bR^{3n})$ acting as $J^{(k)}$ on the particles $i_1, \dots, i_k$ and as the identity on the other $(n-k)$ particles. If $J^{(k)}$ has the integral kernel  $J^{(k)}(x_1, \dots , x_k ; y_1, \dots , y_k)$, we find, similarly to (\ref{eq:DGJ}), 
\begin{equation}\label{eq:DGJke} d\Gamma (J^{(k)}) = \int dx_1 \dots dx_k dy_1\dots dy_k \, J^{(k)}(x_1, \dots, x_k ; y_1, \dots, y_k) \, a_{x_1}^* \dots a_{x_k}^* a_{y_k} \dots a_{y_1} .
\end{equation}
Similarly to (\ref{eq:dGJN}), we find
\[ \lvert \langle \psi, d\Gamma (J^{(k)}) \psi\rangle\rvert \leq \| J^{(k)} \| \, \cN (\cN-1) \dots (\cN-k+1) .\]

\medskip

\emph{Reduced densities.} We define the one-particle reduced density associated with a normalized Fock space vector $\Psi \in \cF$ as the operator $\gamma^{(1)}\!:L^2(\bR^3) \to L^2(\bR^3)$ such that
\[ \tr \, J^{(1)} \gamma^{(1)} = \langle \Psi, d\Gamma (J^{(1)}) \Psi \rangle\]
for all one-particle observables $J^{(1)}$.
{F}rom (\ref{eq:DGJ}), we find that the integral kernel of the one-particle reduced density $\gamma^{(1)}$ is given by
\begin{equation}
\label{eq:g1xy} \gamma^{(1)} (x;y) = \langle \Psi , a_y^* a_x \Psi \rangle .
\end{equation}
We define the $k$-particle reduced density associated with the Fock space vector $\Psi\in \cF$ as the operator $\gamma^{(k)}$ on $L^2(\bR^{3k})$ such that 
\[ \tr\, J^{(k)} \gamma^{(k)} = \langle \Psi, d\Gamma (J^{(k)}) \Psi \rangle \]
for every $k$-particle observable $J^{(k)}$. Using (\ref{eq:DGJke}) we find that the integral kernel of the 
$k$-particle reduced density $\gamma^{(k)}$ is
\[ \gamma^{(k)} (x_1, \dots , x_k ; y_1, \dots , y_k) = \langle \Psi, a_{y_1}^* \dots a_{y_k}^* a_{x_k} \dots a_{x_1} \Psi \rangle.\] 
Notice that $\gamma^{(k)}$ is normalized such that $\tr \, \gamma^{(k)} = \langle \Psi, \cN (\cN-1) \dots (\cN-k+1) \Psi \rangle$. 

\medskip

\emph{Representation of the Hamiltonian in second quantization.} We define the Hamilton operator $\cH_N$ on the Fock space $\cF$ by $(\cH_N \Psi)^{(n)} = \cH_N^{(n)} \psi^{(n)}$ with
\begin{equation}\label{eq:HN1} \cH_N^{(n)} = \sum_{j=1}^n -\Delta_{x_j} + \frac{1}{N} \sum_{i<j}^n V(x_i -x_j). \end{equation}
Since it does not play an important role in our analysis, we neglect here the external potential (but we could also add it, requiring only straightforward changes). The operator $\cH_N$ leaves the number of particles invariant, i.\,e.\ it commutes with the number of particles operator $\cN$. When restricted to the $N$-particle subspace (consisting of the vectors of the form $\{ 0, 0, \dots, 0, \psi_N , 0, \dots \}$), $\cH_N$ coincides precisely with the Hamilton operator (\ref{eq:HNmf}) defined in the previous section (of course up to the external potential that we neglect here). Notice that the parameter $N$ appearing in the Hamiltonian (\ref{eq:HN1}) is not yet related with the number of particles in the system which in Fock space is arbitrary. Of course, later on we will restrict our attention to initial data in $\cF$ that has approximately $N$ particles; this is important to make sure that we are in the same mean-field regime discussed in Section \ref{sec:mf}.

We can write the Hamiltonian $\cH_N$ in terms of the operator-valued distributions $a_x$ and $a_x^*$. {F}rom (\ref{eq:axpsi}) we find that 
\begin{equation}\label{eq:HNaa} \cH_N = \int dx \, \nabla_x a_x^* \nabla_x a_x + \frac{1}{2N} \int dx dy V(x-y) a_x^* a_y^* a_y a_x \, . \end{equation}
The r.\,h.\,s.\ should be understood in the sense of forms, similarly to (\ref{eq:Naa}). The fact that $\cH_N$ commutes with the number of particles operator $\cN$ is evident from (\ref{eq:HNaa}), because, in every summand, the number of creation operators matches exactly the number of annihilation operators. In particular, it follows that 
\[ e^{-i \cH_N t} \{ 0, \dots , 0, \psi_N, 0, \dots \} = \{ 0, \dots, 0, e^{-i \cH^{(N)}_N t} \psi_N , 0, \dots \}.\]
On the r.\,h.\,s.\ we recover exactly the mean-field evolution discussed in the previous section. So what have we gained by switching to the Fock space representation of the bosonic system? The answer is that now we have more freedom in the choice of the initial data; in particular, we can consider initial data having a number of particles which is not fixed, i.\,e.\ is a superposition of different particle numbers. We will make use of this freedom by considering a special class of initial data known as coherent states.  

\medskip

\emph{Coherent states.} For $f \in L^2 (\bR^3)$, we introduce the Weyl operator 
\[ W(f) = e^{a(f) - a^* (f)}. \]
The coherent state with wave function $f \in \bR^3$ is defined as\footnote{To evaluate $W(f)\Omega$ we made use of the Baker-Campbell-Hausdorff formula for operators $A,B$ with the property that $[[A,B],A] = [[A,B],B]=0$:
\[e^{A+B} = e^{-\frac{1}{2}[A,B]} e^A e^B.\] Since $a(f)\Omega=0$ one then simply has to expand the exponential of $a^*(f)$.}
\begin{equation}\label{eq:coh-def} 
W(f) \Omega = e^{-\| f \|^2/2} e^{a^*(f)} \Omega = e^{-\| f \|^2/2} \left\{ 1, f , \frac{f^{\otimes 2}}{\sqrt{2!}}, \dots, \frac{f^{\otimes n}}{\sqrt{n!}}, \dots , \right\}. 
\end{equation}
Since $W(f)$ is unitary,
\[ W(f)^* = W^{-1} (f) = W(-f), \]
coherent states are always normalized, $\| W(f) \Omega \| = 1$ for all $f \in L^2 (\bR^3)$. Weyl operators generate shifts of creation and annihilation operators, i.\,e.
\begin{equation} \label{eq:Wshift} \begin{split} W(f)^* a(g) W(f) &= a(g) + \langle g, f \rangle, \\
W(f)^* a^* (g) W(f) &= a^* (g) + \langle f, g \rangle. \end{split} 
\end{equation} 
This implies that coherent states are eigenvectors of all annihilation operators, i.e.
\[ a(g) W(f) \Omega = W(f) W^* (f) a(g) W(f) \Omega = W(f) \left(a(g) + \langle g,f \rangle \right) \Omega = \langle f,g \rangle W(f) \Omega.\]
{F}rom (\ref{eq:coh-def}), we see that coherent states do not have a fixed number of particles. Instead they are a linear combination of states with all possible numbers of particles. The expectation of the number of particles operator in the state $W(f) \Omega$ is 
\[ \begin{split} \langle W(f) \Omega, \cN W(f) \Omega \rangle &= \int dx \langle W(f) \Omega, a_x^* a_x W(f) \Omega \rangle \\ &= \int dx \langle W(f) \Omega, (a_x^* + \overline{f}(x)) ( a_x + f(x)) \Omega \rangle = \| f \|^2. \end{split}\]  
More precisely the number of particles in the coherent state $W(f) \Omega$ is a Poisson random variable with mean and variance $\| f \|^2$. 

\medskip

\emph{Dynamics of coherent states.} We are interested in the evolution of coherent initial states. For $\ph \in L^2 (\bR^3)$ with $\| \ph \| =1$, we consider the coherent state $W(\sqrt{N} \ph) \Omega$. The expectation of the number of particles operator is $N$; for this reason we expect to be in the mean-field regime, in which many-body interactions can be effectively approximated by an average potential. This is made rigorous by the following theorem.
\begin{theorem}\label{thm:1}
Let $V$ be a measurable function satisfying the operator inequality 
\begin{equation}\label{eq:ass0} V^2 (x) \leq C (1-\Delta) \end{equation}
for some $C > 0$, and let $\ph \in H^1 (\bR^3)$. Let $\{\xi_N\}_{N\in\bN}$ be a sequence of vectors in $\cF$ with $\| \xi_N \| = 1$ and such that for some $C > 0$ \begin{equation}\label{eq:xiNass} \langle \xi_N, \cN \xi_N \rangle + \frac{1}{N} \langle \xi_N, \cN^2 \xi_N \rangle  \leq C \quad \forall N \in \bN.
\end{equation}
Let $\gamma_{N,t}^{(1)}$ be the one-particle reduced density associated with $\psi_{N,t} = e^{-i\cH_N t} W(\sqrt{N} \ph) \xi_N$. Then there exist constants $D,K > 0$ such that 
\begin{equation}
\label{eq:thm1} \tr \, \left| \gamma_{N,t}^{(1)} - N |\ph_t \rangle \langle \ph_t | \right| \leq D e^{K |t|} 
\end{equation}
for all $t \in \bR$ and all $N \in \bN$. Here $\ph_t$ denotes the solution of the nonlinear Hartree equation 
\begin{equation}\label{eq:hartree2}
i\partial_t \ph_t = -\Delta \ph_t + (V*|\ph_t|^2) \ph_t
\end{equation} 
with initial data $\ph_0 = \ph$. 
\end{theorem}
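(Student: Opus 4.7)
The plan is to follow the coherent-state approach of \cite{Hepp,GV,RS}: factor out the evolving mean field by a Weyl conjugation and study the resulting \emph{fluctuation dynamics}. Introduce the fluctuation unitary
\[
\cV_N(t) \;:=\; W^*(\sqrt{N}\,\ph_t)\; e^{-i\cH_N t}\; W(\sqrt{N}\,\ph),
\]
so that $\psi_{N,t} = W(\sqrt{N}\,\ph_t)\,\Om_{N,t}$ with $\Om_{N,t} := \cV_N(t)\,\xi_N$. Applying the shift relation \eqref{eq:Wshift} to the product $a_y^* a_x$ and using $\|\ph_t\|=1$, the $O(N)$ pieces cancel exactly and one gets
\[
\gamma^{(1)}_{N,t}(x;y) - N\,\overline{\ph_t(y)}\,\ph_t(x)
= \langle \Om_{N,t}, a_y^* a_x\,\Om_{N,t}\rangle
+ \sqrt{N}\,\ph_t(x)\,\langle \Om_{N,t}, a_y^*\,\Om_{N,t}\rangle
+ \sqrt{N}\,\overline{\ph_t(y)}\,\langle \Om_{N,t}, a_x\,\Om_{N,t}\rangle.
\]
As an operator the error is the one-particle density of $\Om_{N,t}$ (whose trace norm equals $\langle\Om_{N,t},\cN\,\Om_{N,t}\rangle$) plus two rank-one operators carrying a $\sqrt{N}$ prefactor.

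Next I compute the generator $\cL_N(t)$ of $\cV_N(t)$ by substituting $a_x \mapsto a_x + \sqrt{N}\,\ph_t(x)$ in the second-quantized expression \eqref{eq:HNaa} for $\cH_N$ and combining with $i(\partial_t W^*(\sqrt{N}\ph_t))\,W(\sqrt{N}\ph_t)$. Collecting by degree in $a,a^*$, the piece linear in creation/annihilation operators is proportional to $a^*\bigl(-\D\ph_t + (V*|\ph_t|^2)\ph_t - i\partial_t\ph_t\bigr) + \text{h.c.}$, and vanishes \emph{precisely} because $\ph_t$ solves the Hartree equation \eqref{eq:hartree2}. What remains has the structure $\cL_N(t) = \cL_2(t) + N^{-1/2}\cL_3(t) + N^{-1}\cL_4(t)$: a Bogoliubov-type quadratic form plus cubic and quartic corrections whose coefficients, by the assumption $V^2 \le C(1-\D)$, are form-bounded by combinations of $\cN+1$ and the free kinetic energy.

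Using the operator estimates in \eqref{eq:aa-bd}, the trace-norm bound reduces to Gr\"onwall-type estimates for the evolved fluctuation state $\Om_{N,t}$. The quadratic contribution is controlled by $\langle\Om_{N,t},\cN\,\Om_{N,t}\rangle$, whose time derivative is $i\,\langle \Om_{N,t}, [\cL_N(t),\cN]\,\Om_{N,t}\rangle$. Since $\cL_4$ commutes with $\cN$, only $[\cL_2,\cN]$ (form-bounded by $\cN+1$) and $[\cL_3/\sqrt N, \cN]$ (form-bounded by $N^{-1/2}(\cN+1)^{3/2}$) contribute. Closing the Gr\"onwall loop together with a parallel bound on $\langle \Om_{N,t},(\cN+1)^2\,\Om_{N,t}\rangle$ and invoking the hypothesis \eqref{eq:xiNass} produces $\langle\Om_{N,t},(\cN+1)\,\Om_{N,t}\rangle \le C\,e^{K|t|}$ and $\langle\Om_{N,t},(\cN+1)^2\,\Om_{N,t}\rangle \le C\,N\,e^{K|t|}$.

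The hardest step will be controlling the rank-one terms with their $\sqrt{N}$ prefactor, since the crude bound $|\langle\Om_{N,t}, a(f)\Om_{N,t}\rangle| \le \|f\|\,\langle\cN\rangle^{1/2}$ combined with $\sqrt{N}$ diverges in $N$. The resolution is that $\cL_N(t)$ has \emph{no} linear part, so the one-point function $v_t(x) := \langle\Om_{N,t}, a_x\Om_{N,t}\rangle$ evolves only through the small pieces $N^{-1/2}\cL_3$ and $N^{-1}\cL_4$ together with the particle-number-preserving action of $\cL_2$ on $v_t$. Setting $w_t := \sqrt{N}\,v_t$, a dedicated differential inequality of the form
\[
\partial_t \|w_t\|_{L^2}^2 \le C\,\|w_t\|_{L^2}^2 + C\,\langle\Om_{N,t},(\cN+1)\,\Om_{N,t}\rangle + C\,N^{-1}\langle\Om_{N,t},(\cN+1)^2\,\Om_{N,t}\rangle
\]
can be derived from the Heisenberg equation for $a_x$ under $\cL_N$, and combined with the moment bounds above it gives $\sqrt{N}\,\|v_t\|_{L^2} \le C\,e^{K|t|}$. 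Together with the first-moment estimate this closes the trace-norm bound \eqref{eq:thm1}.
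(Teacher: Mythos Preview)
Your overall strategy matches the paper's: factor out the condensate via the Weyl operator, expand $\gamma_{N,t}^{(1)}-N|\ph_t\rangle\langle\ph_t|$ in terms of the fluctuation vector $\Om_{N,t}=\cV_N(t)\xi_N$, observe that the linear part of the generator $\cL_N(t)$ cancels precisely because $\ph_t$ solves the Hartree equation, and control the growth of $\langle\cN\rangle$ along the fluctuation dynamics by Gr\"onwall. This is exactly the content of Proposition~\ref{prop:growth} and its proof sketch. You also correctly single out the rank-one terms carrying the $\sqrt{N}$ prefactor as the crux of the matter.

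The gap is in your treatment of those rank-one terms. Your proposed differential inequality for $\|w_t\|_{L^2}^2$ does not close in the form you write. Computing $[a_x,N^{-1/2}\cL_3]$ and taking expectation produces, among other pieces, the pairing density $\alpha_\Om(x,z)=\langle\Om_{N,t},a_za_x\,\Om_{N,t}\rangle$, and the sharpest general bound is $\|\alpha_\Om\|_{\mathrm{HS}}^2\le\langle\Om_{N,t},\cN(\cN-1)\,\Om_{N,t}\rangle$. Under your moment estimates this is $O(N)$, not $O(1)$. Hence the $\cL_3$-source on the right of your inequality is actually of size $C\langle(\cN+1)^2\rangle\sim CN\,e^{K|t|}$ rather than $C\langle(\cN+1)\rangle+CN^{-1}\langle(\cN+1)^2\rangle$. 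Gr\"onwall then returns only $\|w_t\|=O(\sqrt{N})$, i.e.\ no improvement over the crude bound $\sqrt{N}\,|\langle\Om_{N,t},a(h)\Om_{N,t}\rangle|\le\sqrt{N}\,\|h\|\,\langle\cN\rangle^{1/2}$. (Even the initial value can already be large: for a generic $\xi_N$ satisfying \eqref{eq:xiNass} one only has $\|w_0\|^2=N\|v_0\|^2\le N\langle\xi_N,\cN\xi_N\rangle$.)

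The paper does not spell out this step either (it refers to \cite{RS,CLS}), but the actual argument is different in kind: one does \emph{not} attempt to bound $\sqrt{N}\,\|v_t\|_{L^2}$. Instead, for each fixed compact observable $J$ one estimates the scalar quantity $\sqrt{N}\,\langle\xi_{N,t},[a(J\ph_t)+a^*(J\ph_t)]\,\xi_{N,t}\rangle$ directly, by comparing $\cU_N$ with an auxiliary fluctuation dynamics shifted by a Weyl operator of size $N^{-1/2}$, for which the same uniform $\cN$-moment bounds hold. This yields $|\tr\,J(\gamma_{N,t}^{(1)}-N|\ph_t\rangle\langle\ph_t|)|\le D\|J\|\,e^{K|t|}$, and the trace-norm bound \eqref{eq:thm1} then follows by duality between trace-class and compact operators, exactly as indicated at the end of the paper's sketch.
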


\emph{Remarks:}
\begin{itemize}
\item The trace-norm bound \eqref{eq:thm1} is of order one; this is to be compared to $\tr \gamma_{N,t}^{(1)} = \tr N \lvert \varphi_t\rangle \langle \varphi_t \rvert = N$. In this sense, \eqref{eq:thm1} establishes a relative rate of convergence of order $1/N$.
\item The assumption $V^2 (x) \leq C (1-\Delta)$ holds for potentials with a Coulomb singularity, since Hardy's inequality implies that 
\[ \int dx \, \frac{|\ph (x)|^2}{|x|^2} \leq \| \ph \|_{H^1}^2. \]
In fact, the result (\ref{eq:thm1}) can be easily extended to potentials $V \in L^2 (\bR^3) + L^\infty (\bR^3)$, although in general the time dependence on the r.\,h.\,s.\ of (\ref{eq:thm1}) may be worse.
\item The condition (\ref{eq:xiNass}) guarantees that the initial deviations from the coherent state $W(\sqrt{N} \ph) \Omega$, which are described by the vector $\xi_N$, are small compared to the number of particles in the initial data $W(\sqrt{N} \ph) \xi_N$, which has expectation $N$. 
\end{itemize}

\medskip

\emph{Sketch of the proof of Theorem \ref{thm:1}.} According to (\ref{eq:g1xy}), the one-particle reduced density associated with the Fock space vector $\Psi_{N,t}$ has the integral kernel 
\begin{equation}\label{eq:gNker} \gamma_{N,t}^{(1)} (x;y) = \langle \Psi_{N,t} , a_y^* a_x \Psi_{N,t} \rangle. 
\end{equation}
We define the fluctuation vector $\xi_{N,t}$ at time $t$ by setting
\begin{equation}\label{eq:xit} 
\Psi_{N,t} = e^{-i\cH_N t} W(\sqrt{N} \ph) \xi_N = W(\sqrt{N} \ph_t) \xi_{N,t} 
\end{equation}
where $\ph_t$ is the solution of the Hartree equation (\ref{eq:hartree2}). Equivalently, we can write
\[ \xi_{N,t} = \cU_N (t;0) \xi_N \]
where we introduced the fluctuation dynamics
\begin{equation}\label{eq:fldyn}
\cU_N (t;s) = W(\sqrt{N} \ph_t)^* e^{-i\cH_N t} W(\sqrt{N} \ph_s) .
\end{equation}
Plugging the r.\,h.\,s.\ of (\ref{eq:xit}) in (\ref{eq:gNker}), we obtain
\[\begin{split} \gamma_{N,t}^{(1)} (x;y) &= \langle W(\sqrt{N} \ph_t) \xi_{N,t}, a_y^* a_x W(\sqrt{N} \ph_t) \xi_{N,t} \rangle \\ &= \langle \xi_{N,t}, (a_y^* + \sqrt{N} \overline{\ph}_t (y) ) ( a_x + \sqrt{N} \ph_t (x)) \xi_{N,t} \rangle \\ &= N \overline{\ph}_t (y) \ph_t (x) + \sqrt{N} \overline{\ph}_t (y) \langle \xi_{N,t} , a_x \xi_{N,t} \rangle + \sqrt{N} \ph_t (x) \langle \xi_{N,t} , a_y^* \xi_{N,t} \rangle + \langle \xi_{N,t} , a_y^* a_x \xi_{N,t} \rangle. \end{split} \]
Integrating against a one-particle observable $J^{(1)}$ with kernel $J^{(1)} (x;y)$, we find 
\begin{equation}
\label{eq:trJg} 
\begin{split} \tr \, J^{(1)} \left(\gamma^{(1)}_{N,t} - N |\ph_t \rangle \langle \ph_t| \right) = \; &\sqrt{N} \left\langle \xi_{N,t}, \left[a (J^{(1)} \ph_t) + a^* (J^{(1)} \ph_t) \right] \xi_{N,t} \right\rangle + \langle \xi_{N,t}, d\Gamma (J^{(1)}) \xi_{N,t} \rangle .
\end{split} 
\end{equation}
According to (\ref{eq:dGJN}) and (\ref{eq:aa-bd-form}), the expectations of the operators $a(J^{(1)} \ph_t) + a^* (J^{(1)} \ph_t)$ and $d\Gamma (J^{(1)})$ can both be bounded by the expectation of the number of particles operator, in the state $\xi_{N,t} = \cU_N (t;0) \xi_N$. The next proposition is taken from \cite[Prop.\ 3.1]{BSS} but similar estimates have already been proven in \cite{RS,CLS}. The proposition extends also to higher moments of the number of particles operator.
\begin{proposition}\label{prop:growth}
Let $V$ be a measurable function satisfying the operator inequality (\ref{eq:ass0}) and let $\ph \in H^1 (\bR^3)$. Then there exist constants $D,K > 0$ such that 
\[ \langle \cU_N (t;0) \psi, \cN \cU_N (t;0) \psi \rangle \leq D e^{K |t|} \, \langle \psi, \left[ \cN + \frac{1}{N} \cN^{2} \right] \psi \rangle  \]
for any $\psi \in \cF$. Here $\cU_N$ is defined as in (\ref{eq:fldyn}). 
\end{proposition}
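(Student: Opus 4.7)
\emph{Plan of proof.} The strategy is to compute the time derivative of $\langle \xi_t, \cN \xi_t\rangle$ where $\xi_t = \cU_N(t;0)\psi$ and close a Gronwall inequality. Differentiating the definition $\cU_N(t;0) = W(\sqrt{N}\ph_t)^* e^{-i\cH_N t} W(\sqrt{N}\ph)$, the fluctuation dynamics satisfies $i\partial_t \xi_t = \cL_N(t) \xi_t$, where the generator is
\[
\cL_N(t) \;=\; \big[i(\partial_t W^*(\sqrt{N}\ph_t))\big] W(\sqrt{N}\ph_t) \;+\; W(\sqrt{N}\ph_t)^* \cH_N W(\sqrt{N}\ph_t).
\]
The first step is to expand the conjugated Hamiltonian using the Weyl shift rules \eqref{eq:Wshift}; this produces a polynomial in creation and annihilation operators of degrees $0,1,2,3,4$. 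The Hartree equation \eqref{eq:hartree2} for $\ph_t$ is precisely what makes the degree--$1$ (linear) contributions cancel, so $\cL_N(t) = \cL_2(t) + \cL_3(t) + \cL_4(t)$ (up to an irrelevant scalar) where $\cL_4 = (2N)^{-1}\int V(x-y) a_x^* a_y^* a_y a_x$ is the original number-preserving interaction, $\cL_2$ is quadratic with coefficients quadratic in $\ph_t$, and
\[
\cL_3(t) \;=\; \frac{1}{\sqrt{N}} \int dxdy\, V(x-y)\, \big[\ph_t(x)\, a_y^* a_y a_x \;+\; \overline{\ph_t(x)}\, a_x^* a_y^* a_y\big].
\]

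The next step is to compute $\tfrac{d}{dt}\langle \xi_t, \cN \xi_t\rangle = i\langle \xi_t, [\cL_N(t),\cN]\xi_t\rangle$. Since $\cL_4$ and the number-preserving part of $\cL_2$ commute with $\cN$, only the off-diagonal pieces of $\cL_2$ (the $a^*a^*$ and $aa$ terms) and $\cL_3$ contribute. The off-diagonal part of $\cL_2$ involves $\int V(x-y)\ph_t(x)\ph_t(y)\, a_x^* a_y^*$ and its adjoint; using the form bound $V^2 \le C(1-\Delta)$ together with $\ph_t \in H^1$, Cauchy--Schwarz and \eqref{eq:aa-bd} yield a contribution bounded by $C\|\ph_t\|_{H^1}^2 \,\langle \xi_t,\cN\xi_t\rangle$.

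The main obstacle is the cubic term $\cL_3$, because it carries three $a$'s with only a $1/\sqrt{N}$ prefactor. The idea is to pair two of the annihilation operators and use Cauchy--Schwarz as
\[
\Big|\tfrac{1}{\sqrt{N}}\!\!\int\! V(x\!-\!y)\ph_t(x)\langle a_y \xi_t, a_y a_x \xi_t\rangle\Big| \;\le\; \tfrac{1}{\sqrt{N}}\Big(\!\int\! dxdy\, V^2(x\!-\!y)|\ph_t(x)|^2\|a_y\xi_t\|^2\Big)^{\!1/2}\!\Big(\!\int\! dxdy\,\|a_y a_x\xi_t\|^2\Big)^{\!1/2}\!.
\]
The second factor equals $\langle\xi_t,\cN(\cN-1)\xi_t\rangle^{1/2}$. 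In the first factor we apply the hypothesis $V^2(x-y) \le C(1-\Delta_x)$ as an operator in $x$ to the one-particle function $\ph_t$, getting $\int V^2(x-y)|\ph_t(x)|^2 dx \le C\|\ph_t\|_{H^1}^2$ uniformly in $y$. Combining and using $\tfrac{1}{\sqrt{N}}\langle\xi_t,\cN^2\xi_t\rangle^{1/2}\langle\xi_t,\cN\xi_t\rangle^{1/2} \le \tfrac{1}{2}(\langle\xi_t,\cN\xi_t\rangle + \tfrac{1}{N}\langle\xi_t,\cN^2\xi_t\rangle)$ gives
\[
\big|\langle\xi_t, [\cL_3,\cN]\xi_t\rangle\big| \;\le\; C\|\ph_t\|_{H^1}\Big(\langle \xi_t,\cN\xi_t\rangle + \tfrac{1}{N}\langle\xi_t,\cN^2\xi_t\rangle\Big).
\]
The commutator $[\cL_3,\cN]$ is of the same algebraic form as $\cL_3$ (up to an overall factor), so the same bound applies.

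Since the derivative of $\langle\xi_t,\cN\xi_t\rangle$ cannot be controlled by itself but only by $\cN + \tfrac{1}{N}\cN^2$, the final step is to also estimate $\tfrac{d}{dt}\tfrac{1}{N}\langle\xi_t,\cN^2\xi_t\rangle$. By the same computation ($[\cN^2,\cL_N] = \cN[\cN,\cL_N] + [\cN,\cL_N]\cN$) and the Cauchy--Schwarz trick applied with one extra factor of $\cN$, one obtains an estimate of the same shape, where the extra factor of $\cN$ compensates the $1/N$. Setting $f(t) := \langle\xi_t,(\cN + \tfrac{1}{N}\cN^2)\xi_t\rangle$ one then has $f'(t) \le K\|\ph_t\|_{H^1} f(t)$; since $\|\ph_t\|_{H^1}$ grows at most exponentially in $t$ along the Hartree flow under the hypothesis $V^2 \le C(1-\Delta)$, Gronwall's inequality gives $f(t) \le f(0) D e^{K|t|}$, and since $\langle\xi_t,\cN\xi_t\rangle \le f(t)$ the proposition follows. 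The main technical difficulty is the cubic term above; all other contributions (the $\cL_2$ off-diagonal part and the kinetic-type terms in $\cL_3$ we have suppressed) are handled by the same Cauchy--Schwarz-plus-$V^2\!\le\!C(1-\Delta)$ scheme.
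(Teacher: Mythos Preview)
Your overall strategy matches the paper's: compute the generator $\cL_N(t)$, use the Hartree equation to kill the linear part, commute with $\cN$, and bound the quadratic and cubic contributions via Cauchy--Schwarz together with $\sup_x\|V(x-\cdot)\ph_t\|_2\le C\|\ph_t\|_{H^1}$. Your bound on $\partial_t\langle\xi_t,\cN\xi_t\rangle$ by $C(\langle\cN\rangle+\tfrac1N\langle\cN^2\rangle)$ is correct and is exactly what the paper obtains.

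The gap is in the last step. You propose to close a Gronwall for $f(t)=\langle\xi_t,(\cN+\tfrac1N\cN^2)\xi_t\rangle$, claiming that $\tfrac1N\partial_t\langle\cN^2\rangle$ admits ``an estimate of the same shape''. It does not: the cubic contribution to $\tfrac1N[\cL_N,\cN^2]$ produces, after any reasonable Cauchy--Schwarz, terms of size $N^{-3/2}\|\cN^2\xi_t\|\,\|\cN^{1/2}\xi_t\|$ (or $N^{-3/2}\|\cN^{3/2}\xi_t\|\,\|\cN\xi_t\|$), which involve $\langle\cN^3\rangle$ or $\langle\cN^4\rangle$ and cannot be dominated by $f(t)$. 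The hierarchy does not close at second order; one would need all moments $\langle\cN^j/N^{j-1}\rangle$ simultaneously. The paper avoids this by invoking a separate a~priori operator inequality (proved in \cite{BSS})
\[
\frac{1}{N}\,\cU_N^*(t;0)\,\cN^2\,\cU_N(t;0)\ \lesssim\ \cU_N^*(t;0)\,\cN\,\cU_N(t;0)\ +\ \frac{\cN^2}{N},
\]
which expresses the heuristic that the number of fluctuations cannot exceed the total particle number. This lets one bound $\tfrac1N\langle\xi_t,\cN^2\xi_t\rangle$ by $\langle\xi_t,\cN\xi_t\rangle$ plus the \emph{time-zero} quantity $\tfrac1N\langle\psi,\cN^2\psi\rangle$, so that Gronwall can be run on $\langle\xi_t,(\cN+1)\xi_t\rangle$ alone. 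That inequality is the missing ingredient in your argument.

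Two smaller remarks. First, under $V^2\le C(1-\Delta)$ the Hartree energy is conserved and one gets $\|\ph_t\|_{H^1}\le C\|\ph\|_{H^1}$ \emph{uniformly} in $t$ (not merely exponential growth); this is what yields the single exponential $e^{K|t|}$ in the statement. Second, there are no ``kinetic-type terms'' in $\cL_3$; the cubic generator comes entirely from the interaction.
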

While Proposition \ref{prop:growth} immediately implies the desired bound for the second term on the r.\,h.\,s.\ of (\ref{eq:trJg}), some more work is required to get rid of the additional $\sqrt{N}$ factor in the first term on the r.\,h.\,s.\ of (\ref{eq:trJg}). We will omit the details which can be found in \cite{RS,CLS}, and just write the final estimate
\[ \left| \tr \, J^{(1)} \left(\gamma^{(1)}_{N,t} - N |\ph_t \rangle \langle \ph_t| \right) \right| \leq D \| J^{(1)} \| e^{K|t|}. \]
Since the space of trace class operators $\cL^1 (L^2 (\bR^3))$, equipped with the trace norm, is the dual of the space of compact operators, equipped with the operator norm, the last bound implies (\ref{eq:thm1}).

\medskip

\emph{Sketch of the proof of Proposition~\ref{prop:growth}.} Let us briefly discuss the main ideas needed to bound the growth of the number of fluctuations. We notice, first of all, that the fluctuation dynamics $\cU_N (t;s)$ satisfies the Schr\"odinger type equation 
\[ i\partial_t \cU_N (t;s) = \cL_N (t) \cU_N (t;s) \]
with the time-dependent generator
\[ \cL_N (t) = \left[ i\partial_t W^* (\sqrt{N} \ph_t) \right] W (\sqrt{N} \ph_t) + W^* (\sqrt{N} \ph_t) \cH_N W (\sqrt{N} \ph_t) .\]
The time derivative is
\footnote{A systematic way to calculate to calculate time derivatives of the form $(\partial_t e^{-A(t)})e^{A(t)}$ (with $A(t)$ a sufficiently regular family of operators) is as follows. 
Start by writing \[\begin{split}(\partial_t e^{-A(t)})e^{A(t)} & = \lim_{h\to 0}\frac{1}{h} \left(e^{-A(t+h)}-e^{-A(t)}\right) e^{A(t)} = \lim_{h \to 0} \frac{1}{h} \int_0^1d\lambda \frac{d}{d \lambda}\left(e^{-A(t+h)\lambda}e^{A(t)\lambda}\right)\\
&= - \lim_{h \to 0} \int_0^1 d \lambda\, e^{-A(t+h)\lambda} \frac{A(t+h)-A(t)}{h} e^{A(t)\lambda} = - \int_0^1d\lambda\,e^{-A(t)\lambda}\dot A(t) e^{A(t)\lambda}. \end{split}
\] In the integral then use the Baker-Campbell-Hausdorff formula $e^{-A}Be^A = B - \int_0^1 d \rho\, e^{-\rho A}[A,B]e^{\rho A}$ for operators $A,B$. The commutator $[A,B]$ in our case is a complex number, so the expression is easy to evaluate. (Otherwise it could be helpful to iterate the last integral formula to obtain a series.)}
\[ \left[ i\partial_t W^* (\sqrt{N} \ph_t) \right] W(\sqrt{N} \ph_t) = C_{1,N} (t) - \sqrt{N} \left( a^* (i\partial_t \ph_t) + a(i\partial_t \ph_t) \right) \]
and, using (\ref{eq:Wshift}), that 
\begin{equation*} \begin{split} 
W^* (\sqrt{N} \ph_t) \cH_N W (\sqrt{N} \ph_t) = & \; C_{2,N} (t) + \sqrt{N} \left( a^* (-\Delta \ph_t + (V*|\ph_t|^2) \ph_t) + a(-\Delta \ph_t + (V*|\ph_t|^2) \ph_t) \right) \\
&\; +\int dx \nabla_x a_x^* \nabla_x a_x + \int dx (V*|\ph_t|^2) (x) a_x^* a_x \\ &+ \int dx dy V(x-y) \ph_t (x) \overline{\ph}_t (y) a_x^* a_y \\ &+ \int dx dy V(x-y) \left(\ph_t (x) \ph_t (y) a_x^* a_y^* + \overline{\ph}_t (x) \overline{\ph}_t (y) a_x a_y \right) \\
&+\frac{1}{\sqrt{N}} \int dx dy V(x-y) a_x^* \left( \ph_t (y) a_y^* + \overline{\ph}_t (y) a_y \right) a_x \\
&+\frac{1}{2N} \int dx dy V(x-y) a_x^* a_y^* a_y a_x \end{split} 
\end{equation*}
for appropriate constants\footnote{We speak of constants since these are not operators, but rather just complex numbers. Of course they depend on time and on $N$.} $C_{1,N} (t)$ and $C_{2,N} (t)$. Since $\ph_t$ satisfies the Hartree equation, \emph{the contributions of order $\sqrt{N}$ (which are linear in $a^*_x$ and $a_x$) cancel exactly}, and therefore (up to a constant contribution, which can be absorbed in the phase) we obtain
\begin{equation}\label{eq:cLN}
\begin{split}
\cL_N (t) = &\; \int dx \nabla_x a_x^* \nabla_x a_x + \int dx (V*|\ph_t|^2) (x) a_x^* a_x + \int dx dy V(x-y) \ph_t (x) \overline{\ph}_t (y) a_x^* a_y \\ &+ \int dx dy V(x-y) \left(\ph_t (x) \ph_t (y) a_x^* a_y^* + \overline{\ph}_t (x) \overline{\ph}_t (y) a_x a_y \right) \\
&+\frac{1}{\sqrt{N}} \int dx dy V(x-y) a_x^* \left( \ph_t (y) a_y^* + \overline{\ph}_t (y) a_y \right) a_x \\
&+\frac{1}{2N} \int dx dy V(x-y) a_x^* a_y^* a_y a_x .
\end{split} 
\end{equation} 
We notice that $\cL_N (t)$ contains terms (the second and third line) which do not commute with the number of particles operator. This means that, as expected, the fluctuation dynamics $\cU_N (t)$, in contrast with the original dynamics $e^{-i\cH_N t}$, does not preserve the number of particles. Nevertheless we can control the growth of the expectation of $\cN$. We compute
\[ \begin{split} 
i \partial_t \langle \cU_N (t;0) \xi_N, &\cN \cU_N (t;0) \xi_N \rangle \\ = \; &\langle \cU_N (t;0) \xi_N, [ \cN , \cL_N (t) ] \cU_N (t;0) \xi_N \rangle \\ = \; &2i \text{Im }  \int dx dy V(x-y) \ph_t (x) \ph_t (y) \langle \cU_N (t;0) \xi_N , [ \cN, a_x^* a_y^* ]  \cU_N (t;0) \xi_N \rangle \\ &+ \frac{2i}{\sqrt{N}} \text{Im } \int dx dy V(x-y) \ph_t (x) \langle \cU_N (t;0) \xi_N , [\cN , a_x^* a_y^* a_x]  \cU_N (t;0) \xi_N \rangle. \end{split} \]
Using the canonical commutation relations (\ref{eq:CCR}) we find
\begin{equation}\label{eq:ddtN}\begin{split} 
\partial_t \langle \cU_N (t;0) \xi_N, \cN \cU_N (t;0) \xi_N \rangle = \; &4\text{Im} \int dx dy V(x-y) \ph_t (x) \ph_t (y) \langle \cU_N (t;0) \xi_N, a_x^* a_y^* \cU_N (t;0) \xi_N \rangle \\ &+ \frac{2}{\sqrt{N}} \text{Im } \int dx dy V(x-y) \ph_t (y) \langle \cU_N (t;0) \xi_N, a_x^* a_y^* a_x \cU_N (t;0) \xi_N \rangle \\ = :\; & \text{I} + \text{II}.\end{split} 
\end{equation}
The first term can be bounded using (\ref{eq:aa-bd}):
\[ \begin{split} |\text{I}| &= 4 \left| \int dx \ph_t (x) \langle a_x \cU_N (t;0) \xi_N, a^* (V(x-\cdot ) \ph_t) \cU_N (t;0) \xi_N \rangle \right| \\ &\leq 4 \int dx |\ph_t (x)| \| a_x \cU_N (t;0) \xi_N \| \| a^* (V(x-\cdot )\ph_t) \cU_N (t;0) \xi_N \| \\ &\leq 4\int dx |\ph_t (x)| \| a_x \cU_N (t;0) \xi_N \| \| V(x-\cdot ) \ph_t \|_2 \| (\cN+1)^{1/2} \cU_N (t;0) \xi_N \| \\ &\leq 4 \sup_x \| V(x-\cdot ) \ph_t \|_2 \| (\cN+1)^{1/2} \cU_N (t;0) \xi_N \|^2. \end{split} \]
{F}rom the assumption (\ref{eq:ass}), we obtain
\[ \| V(x-\cdot ) \ph_t \|^2 = \int dy V^2 (x-y) |\ph_t (y)|^2 \leq C \| \ph_t \|_{H^1}^2. \]
Furthermore, the Hartree energy
\begin{equation}
\cE_H (\ph) = \int |\nabla \ph (x)|^2 dx + \frac{1}{2} \int dx dy V(x-y) |\ph (x)|^2 |\ph (y)|^2 
\end{equation}
is conserved along the Hartree evolution, i.\,e.\ for a solution of the Hartree equation $\varphi_t$, $\cE_H(\ph_t)$ is independent of $t$. Together with the assumption (\ref{eq:ass}), this implies that there exists a universal constant $C > 0$ such that
\[ \| \ph_t \|_{H^1} \leq C \| \ph_0 \|_{H^1}. \]
Hence
\[ \sup_x \| V(x-\cdot ) \ph_t \| \leq K \]
for a constant $K > 0$ depending only on the constant $C0$ in (\ref{eq:ass}) and on the $H^1$-norm of the initial wave function $\ph_0$. We conclude that 
\begin{equation}\label{eq:Ibd} |\text{I}| \leq K \langle \cU_N (t;0) \xi_N, (\cN+1) \cU_N (t;0) \xi_N \rangle .
\end{equation}

To control the second term on the r.\,h.\,s.\ of (\ref{eq:ddtN}), we proceed as follows:
\[ \begin{split} 
|\text{II}| &\leq \frac{2}{\sqrt{N}} \int dx  \| a (V(x-\cdot ) \ph_t) a_x \cU_N (t;0) \xi_N\| \| a_x \cU_N (t;0) \xi_N \| \\ &\leq \frac{2}{\sqrt{N}} \sup_x \| V(x-\cdot ) \ph_t \|_2  \int dx \| a_x \cN^{1/2} \cU_N (t;0) \xi_N \| \| a_x \cU_N (t;0) \xi_N \| \\ &\leq \frac{K}{N} \langle \cU_N (t;0) \xi_N, \cN^2 \cU_N (t;0) \xi_N \rangle + K \langle \cU_N (t;0) \xi_N, \cN \cU_N (t;0) \xi_N \rangle .
\end{split} \]
In the first term on the r.\,h.\,s.\ of the last equation, we are going to use the $1/N$ factor to reduce the exponent of the number of particles operator. Since $\cU_N^* (t;0) \cN \cU_N (t;0)$ measures the number of fluctuations, it is heuristically clear that it can be bounded by the total number of particles $N$; more precisely, one can prove the operator inequality (see \cite{BSS}, proof of Prop. 3.1)
\[ \frac{1}{N} \cU_N^* (t;0) \cN^2 \cU_N (t;0) \lesssim \cU_N^* (t;0) \cN \cU_N (t;0) + \frac{\cN^2}{N}. \]
Applying this bound for $\xi_N$, we conclude that
\[ \frac{1}{N} \langle \cU_N (t;0) \xi_N, \cN^2 \cU_N (t;0) \xi_N \rangle \lesssim \langle \cU_N (t;0) \xi_N, \cN \cU_N (t;0) \xi_N \rangle + \frac{1}{N}\langle \xi_{N}, \cN^{2} \xi_{N} \rangle \]
and therefore that
\[ |\text{II}| \leq K \langle \cU_N (t;0) \xi_N, (\cN+1) \cU_N (t;0) \xi_N \rangle \, . \]
Inserting the last bound and (\ref{eq:Ibd}) into (\ref{eq:ddtN}), we find
\[ \left| \frac{d}{dt} \langle \cU_N (t;0) \xi_N, \cN \cU_N (t;0) \xi_N \rangle \right| \leq K \langle \cU_N (t;0) \xi_N, (\cN+1) \cU_N (t;0) \xi_N \rangle. \]
Gronwall's Lemma implies therefore
\[ \langle \cU_N (t;0) \xi_N, \cN \cU_N (t;0) \xi_N \rangle \leq C e^{K|t|} .\]
This concludes the sketch of the proof of Proposition \ref{prop:growth}; more details can be found in \cite{RS,CLS}.

\section{Fluctuations around Hartree dynamics}
\label{sec:fluc-bos}
\setcounter{equation}{0}

The coherent state approach presented in the last section also allows us to describe the fluctuations around the Hartree dynamics. 

\medskip

\emph{Fluctuations for coherent initial states.}
For simplicity, let us consider the evolution of a coherent state initial data $\Psi_N = W(\sqrt{N} \ph) \Omega \in \cF$. (As in the last section, we could also consider approximate coherent states $\Psi_N = W(\sqrt{N} \ph) \xi_N$, with $\xi_N$ satisfying $\langle \xi_N , (\cN^{3/2} + N^{-1/2} \cN^{5/2} +N^{-1} \cN^3) \xi_N \rangle \leq C$.) According to the definition of the fluctuation dynamics (\ref{eq:fldyn}), we can write
\[ e^{-i\cH_N t} W(\sqrt{N} \ph) \Omega = W(\sqrt{N} \ph_t) \cU_N (t;0) \Omega.\]
Recall the generator $\cL_N (t)$ of the fluctuation dynamics $\cU_N (t;s)$ as given by (\ref{eq:cLN}). The last two terms on the r.\,h.\,s.\ of (\ref{eq:cLN}), the cubic and the quartic contribution to $\cL_N (t)$, seem to vanish, in the limit of large $N$. Therefore we define a new time-dependent generator
\begin{equation}\label{eq:Linfty}
\begin{split}
\cL_\infty (t) = &\; \int dx \nabla_x a_x^* \nabla_x a_x + \int dx (V*|\ph_t|^2) (x) a_x^* a_x + \int dx dy V(x-y) \ph_t (x) \overline{\ph}_t (y) a_x^* a_y \\ &+ \int dx dy V(x-y) \left(\ph_t (x) \ph_t (y) a_x^* a_y^* + \overline{\ph}_t (x) \overline{\ph}_t (y) a_x a_y \right) 
\end{split} 
\end{equation} 
keeping only the quadratic part of $\cL_N (t)$. We denote by $\cU_\infty (t;s)$ the evolution generated by $\cL_\infty (t)$, i.e.
\begin{equation}\label{eq:Uinfty} i\partial_t \cU_\infty (t;s) = \cL_\infty (t) \cU_\infty (t;s), \quad \cU_\infty(s;s) = 1.
\end{equation}
For interaction potentials satisfying the condition (\ref{eq:ass0}) one can prove that
\begin{equation}\label{eq:fluc} \left\| (\cU_N (t;0) - \cU_\infty (t;0)) \psi \right\| \leq \frac{C|t|}{\sqrt{N}} \left[ \| \cN^{3/2} \psi \| + \frac{1}{\sqrt{N}} \| \cN^{2} \psi \| \right] 
\end{equation}
for some constant $C>0$. The proof of (\ref{eq:fluc}) is based on the identity
\[ \cU_N (t;0) - \cU_\infty (t;0) = \int_0^t \cU_N (t;s) \left( \cL_N (s) - \cL_\infty (s) \right) \cU_\infty (s;0)\, ds\]
and on the control of the growth of moments of the number of particles operator with respect to the 
limiting fluctuation dynamics $\cU_\infty (s;0)$. The bound (\ref{eq:fluc}) leads to the following 
theorem. 
\begin{theorem}\label{thm:fluc1}
Let $V$ be a measurable function satisfying the operator inequality 
\begin{equation}\label{eq:ass} V^2 (x) \leq C (1-\Delta) 
\end{equation} 
for some $C > 0$, and let $\ph \in H^1 (\bR^3)$. Then there exists a constant $D >0$ such that 
\[ \left\| e^{-i\cH_N t} W(\sqrt{N} \ph) \Omega - W(\sqrt{N} \ph_t) \cU_\infty (t;0) \Omega \right\| \leq \frac{D|t|}{\sqrt{N}}. \]
\end{theorem}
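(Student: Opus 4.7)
The plan is to use the intertwining identity $e^{-i\cH_N t} W(\sqrt N\ph) = W(\sqrt N\ph_t)\cU_N(t;0)$, which is the defining property (\ref{eq:fldyn}) of the fluctuation dynamics, together with unitarity of the Weyl operator $W(\sqrt N\ph_t)$, to reduce the theorem to the bound
\[ \bigl\| \bigl(\cU_N(t;0) - \cU_\infty(t;0)\bigr)\Omega \bigr\| \leq \frac{D|t|}{\sqrt N}.\]
This is a special case of (\ref{eq:fluc}), but inserting $\psi=\Omega$ directly on the right-hand side of (\ref{eq:fluc}) gives zero, so one has to go through the Duhamel identity already written down in the text,
\[ \cU_N(t;0) - \cU_\infty(t;0) = -i\int_0^t \cU_N(t;s)\bigl(\cL_N(s)-\cL_\infty(s)\bigr)\cU_\infty(s;0)\,ds,\]
which follows from differentiating $\cU_N(t;s)\cU_\infty(s;0)$ in $s$. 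Applying this to $\Omega$ and using unitarity of $\cU_N(t;s)$ reduces matters to controlling the quantity $\int_0^t \|(\cL_N(s)-\cL_\infty(s))\cU_\infty(s;0)\Omega\|\, ds$.

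By construction, $\cL_N(s) - \cL_\infty(s)$ consists exactly of the cubic and quartic contributions in (\ref{eq:cLN}), with explicit prefactors $N^{-1/2}$ and $(2N)^{-1}$. Using the creation/annihilation bounds (\ref{eq:aa-bd}) together with the assumption $V^2 \leq C(1-\Delta)$ as in the sketch of Proposition \ref{prop:growth} --- which yields $\sup_x \|V(x-\cdot)\ph_s\|_2 \leq C \|\ph_s\|_{H^1}$, with $\|\ph_s\|_{H^1}$ uniformly bounded in $s$ via Hartree energy conservation --- one obtains
\[ \bigl\|\bigl(\cL_N(s)-\cL_\infty(s)\bigr)\psi\bigr\| \leq \frac{C}{\sqrt N}\,\|(\cN+1)^{3/2}\psi\| + \frac{C}{N}\,\|(\cN+1)^{2}\psi\|\]
for every $\psi \in \cF$. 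Inserting this with $\psi = \cU_\infty(s;0)\Omega$ reduces the theorem to establishing uniform-in-$N$ bounds
\[ \|(\cN+1)^k \cU_\infty(s;0)\Omega\| \leq C_k\, e^{K_k |s|}\qquad (k=3/2,\,2).\]

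The main obstacle is precisely this control of higher moments of $\cN$ along the limiting fluctuation dynamics: because $\cL_\infty(s)$ contains the pair-creation/annihilation terms $\int V(x-y)\ph_s(x)\ph_s(y)\, a_x^* a_y^*\, dx\,dy$ and its adjoint, the number operator is not preserved by $\cU_\infty$. The strategy mirrors the argument given for Proposition \ref{prop:growth}: differentiate $\langle \cU_\infty(s;0)\Omega, (\cN+1)^{2k}\cU_\infty(s;0)\Omega\rangle$ in $s$, reduce the commutator $[\cL_\infty(s),(\cN+1)^{2k}]$ via the canonical commutation relations (\ref{eq:CCR}) to terms estimated by constants (depending only on $\sup_x\|V(x-\cdot)\ph_s\|_2$) times $(\cN+1)^{2k}$, and close by Gronwall. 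Since $(\cN+1)^k\Omega=\Omega$, the initial value at $s=0$ is $N$-independent, so the resulting bound is uniform in $N$. Plugging back in yields the claim, with $D$ absorbing the integral of the exponential factor on the time interval under consideration.
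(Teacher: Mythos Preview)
Your approach is exactly the one the paper uses: the text immediately preceding the theorem says that (\ref{eq:fluc}) is obtained from the Duhamel identity together with control of moments of $\cN$ along the limiting dynamics $\cU_\infty$, and that (\ref{eq:fluc}) then yields Theorem~\ref{thm:fluc1}. You correctly spot that (\ref{eq:fluc}) as literally written (with $\cN^{3/2}$, $\cN^2$ rather than $(\cN+1)^{3/2}$, $(\cN+1)^2$) vanishes on $\Omega$, and you fix this by running the Duhamel argument directly---which is precisely what the derivation of (\ref{eq:fluc}) amounts to anyway. One small point: your Gronwall bound on $\|(\cN+1)^k\cU_\infty(s;0)\Omega\|$ grows like $e^{K|s|}$, so after integrating in $s$ the right-hand side is really of order $N^{-1/2}e^{K|t|}$ rather than $N^{-1/2}|t|$; the paper's stated bound has the same imprecision, and your closing remark about absorbing the exponential into $D$ only works on bounded time intervals.
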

In other words, Theorem \ref{thm:fluc1} states that, if we approximate the many-body evolution of the initial coherent state taking into account also the limiting fluctuation dynamics $\cU_\infty (t;0)$ rather than only the Hartree dynamics of the coherent state, we get convergence in norm as $N \to \infty$ (with a rate proportional to $N^{-1/2}$), while approximation of the many-body evolution just by evolved coherent states only gives convergence in the weaker sense of reduced densities.   

\medskip

\emph{Bogoliubov transformations.} In contrast with $\cU_N (t;s)$ the limit fluctuation dynamics $\cU_\infty (t;s)$ is a simple evolution, because its generator is quadratic. In fact, it acts as a Bogoliubov transformation, a general concept that we introduce now. For $f,g \in L^2 (\bR^3)$, we define 
\[ A(f,g) = a(f) + a^* (\overline{g}) .\]
Then we have
\begin{equation}\label{eq:AA*} A^* (f,g) = a^* (f) + a(\overline{g}) = A (\overline{g} , \overline{f}) = A (\cJ (f,g)) 
\end{equation}
where $\cJ: L^2 (\bR^3) \oplus L^2 (\bR^3) \to L^2 (\bR^3) \oplus L^2 (\bR^3)$ is the antilinear map defined by
\[ \cJ (f,g) = (\overline{g}, \overline{f}) .\]
The canonical commutation relations take the form
\begin{equation}\label{eq:CCR2} \left[ A(f_1, g_1) , A^* (f_2, g_2) \right] = \langle (f_1, g_1), S (f_2, g_2) \rangle 
\end{equation}
with the linear operator $S: L^2 (\bR^3) \oplus L^2 (\bR^3) \to L^2 (\bR^3) \oplus L^2 (\bR^3)$ defined by
\[ S (f,g) = (f, - g) ,\]
and where $\langle \cdot , \cdot \rangle$ denotes the inner product in $L^2 (\bR^3) \oplus L^2 (\bR^3)$. A Bogoliubov transformation is a linear map $\Theta : L^2 (\bR^3) \oplus L^2 (\bR^3) \to L^2 (\bR^3) \oplus L^2 (\bR^3)$ that preserves (\ref{eq:AA*}) and (\ref{eq:CCR2}). In other words, it is a linear map satisfying the two conditions 
\[ \Theta \cJ = \cJ \Theta \quad \text{and } \quad \Theta^* S \Theta = S .\]
It is easy to check that $\Theta$ is a Bogoliubov transformation if and only if it can be written in the block form
\begin{equation}\label{eq:bogo} \Theta = \left( \begin{array}{ll} U & V \\ V & U \end{array} \right) 
\end{equation}
with two linear operators $U,V : L^2 (\bR^3) \to L^2 (\bR^3)$ satisfying $U^* U -V^* V = 1$ and $U^* V+ V^* U = 0$. 

A Bogoliubov transformation $\Theta : L^2 (\bR^3) \oplus L^2 (\bR^3) \to L^2 (\bR^3) \oplus L^2 (\bR^3)$ is called implementable if there exists a unitary map $\Xi : \cF \to \cF$ defined on the bosonic Fock space $\cF$ over $L^2 (\bR^3)$ such that
\[ \Xi^* A(f,g) \Xi = A(\Theta (f,g)) \]
for every $f,g \in L^2 (\mathbb{R}^3)$. The Shale-Stinespring condition states that a Bogoliubov transformation $\Theta$ is implementable if and only if the off-diagonal operator $V: L^2 (\bR^3) \to L^2 (\bR^3)$ is Hilbert-Schmidt (i.\,e.\ if $\tr V^* V < \infty$).

The limit fluctuation dynamics $\cU_\infty (t;s)$, defined in (\ref{eq:Uinfty}), acts for any $t,s \in \bR$ as a Bogoliubov transformation: It is simple to check that there exists a two-parameter group of Bogoliubov transformations 
$\Theta (t;s) : L^2 (\bR^3) \oplus L^2 (\bR^3) \to L^2 (\bR^3) \oplus L^2 (\bR^3)$ such that
\begin{equation}\label{eq:bogo-theta} \cU_\infty^* (t;s) A(f,g) \cU_\infty (t;s) = A(\Theta (t;s) (f,g)). 
\end{equation}
The time-dependent Bogoliubov transformations $\Theta (t;s)$ are easily seen to satisfy the equation
\[ i\partial_t \Theta (t;s) = D(t) \Theta (t;s) \]
with the generator
\[ D(t) = \left( \begin{array}{cc} -\Delta + (V*|\ph_t|^2) + A_1 & A_2 \\
A_2 & -\Delta + (V*|\ph_t|^2) + A_1 \end{array} \right). \]
Here we denoted by $A_1$ and $A_2$ the operators with integral kernels $A_1(x;y)=V(x-y)\varphi_t(x)\overline{\varphi_t(y)}$ and $A_2(x;y)=V(x-y)\varphi_t(x)\varphi_t(y)$ (these are of the form of the exchange operator, but their physical role is different).
Identifying $\cU_\infty (t;s)$ with a Bogoliubov transformation means that the limit fluctuation dynamics can be determined by  solving a partial differential equation on $L^2 (\bR^3) \oplus L^2 (\bR^3)$; in this sense we can think of $\cU_\infty (t;s)$ as a simple effective dynamics (in contrast with $\cU_N (t;s)$ which a true many-body evolution).   

\medskip

\emph{Probabilistic interpretation.} The convergence of the reduced densities can be interpreted, in the language of probability theory, as a  law of large numbers. Let $J^{(1)}$ be a one-particle observable, i.\,e.\ a self-adjoint operator over $L^2 (\bR^3)$. We denote by $J^{(1)}_i$ the operator on $L^2 (\bR^{3N})$ acting as $J^{(1)}$ on the $i$-th particle and as the identity on the other $(N-1)$ particles. W.\,r.\,t.\ a factorized $N$-particle wave function $\ph^{\otimes N}$, the observables $J^{(1)}_i$ define independent and identically distributed random variables, and therefore, for every $\delta > 0$,  
\[ \PP_{\ph^{\otimes N}} \left( \left| \frac{1}{N} \sum_{i=1}^N J^{(1)}_i - \langle \ph , J^{(1)} \ph \rangle \right| \geq \delta  \right) \to 0 \quad (N \to \infty).\]
W.\,r.\,t.\ the evolved wave function $\psi_{N,t}$, solution of the mean-field Schr\"odinger equation
\[
i\partial_{t}\psi_{N,t} = \Big[ \sum_{j=1}^{N} -\Delta_{j} + \frac{1}{N}\sum_{i<j}^{N}V(x_{i} - x_{j}) \Big]\psi_{N,t}
\]
with initial data $\psi_{N,0} = \varphi^{\otimes N}$, the $N$ observables $J^{(1)}_1, \dots , J^{(1)}_N$ are no longer independent. Nevertheless, it is easy to check that the convergence of the reduced densities implies that 
\[ \PP_{\psi_{N,t}} \left( \left| \frac{1}{N} \sum_{i=1}^N J^{(1)}_i - \langle \ph_t , J^{(1)} \ph_t \rangle \right| \geq \delta  \right) \to 0 \quad (N \to \infty).\]

At time $t=0$ we also have a central limit theorem stating that the fluctuations of $\sum_{i=1}^N J^{(1)}_i$, appropriately normalized, are Gaussian in the limit. Does the same hold true for $t \not = 0$? The answer, obtained in \cite{BKS,BSS}, is positive; with respect to the measure induced by $\psi_{N,t}$, we have
\begin{equation}\label{eq:cltt} \frac{1}{\sqrt{N}} \sum_{i=1}^N J^{(1)}_i - \langle \ph_t , J^{(1)} \ph_t \rangle \to \text{Gauss} (0, \sigma_t^2) 
\end{equation}
as $N \to \infty$, in distribution. The variance of the limiting Gaussian variable is given by
\[ \sigma_t^2 =  \frac{1}{2} \left[ \left\langle \Theta (t;0) \left(O\ph_t , J O \ph_t \right) , \Theta (t;0) \left(O \ph_t , J O\ph_t \right) \right\rangle - \left| \left\langle \Theta (t;0) \left(O\ph_t , J O \ph_t \right) , \frac{1}{\sqrt{2}} \, \left(\ph, \overline{\ph} \right) \right\rangle \right|^2 \right]  \]
where $\Theta (t;s)$ denotes the Bogoliubov transformation defined in (\ref{eq:bogo-theta}) to describe the limit fluctuation dynamics. Eq.\ (\ref{eq:cltt}) shows that in the mean-field limit the correlations among the particles are weak enough for the central limit theorem to hold true, but they are strong enough to change the variance of the limiting Gaussian variable. (For a completely factorized wave function $\ph_t^{\otimes N}$, the variance would clearly be $\langle \ph_t, (J^{(1)})^2 \ph_t \rangle - \langle \ph_t, J^{(1)} \ph_t \rangle^2$.)

\medskip

\emph{Fluctuations for data with fixed number of particles.} For $N$-particle initial data, a better approach to study fluctuations around the Hartree dynamics has been proposed in \cite{LNS}, based on ideas developed in \cite{LNSS} to analyze the excitation spectrum of mean-field Hamiltonians. Fix $\ph \in L^2 (\bR^3)$. Then we can write any $N$-particle wave function $\psi_N \in L^2_s (\bR^{3N})$ as\footnote{The symbol $\otimes_s$ denotes the symmetrized tensor product, i.\,e.\ $\varphi_1 \otimes_s \cdots \otimes_s \varphi_N = (N!)^{-1}\sum_{\sigma \in \mathcal{S}_N} \varphi_{\sigma(1)}\otimes \cdots \otimes \varphi_{\sigma(N)}$.} 
\[ \psi_N = \psi^{(0)} \ph^{\otimes N} + \psi^{(1)} \otimes_s \ph^{\otimes (N-1)} + \dots + \psi^{(N-1)} \otimes_s \ph + \psi^{(N)} \]
where the $n$-particle bosonic wave function $\psi^{(n)}$ is assumed to be orthogonal to $\ph$ in all its $n$ entries. This defines a map \[ \begin{split} U_\ph : &\; L^2_s (\bR^{3N}) \to \cF_+ \\  &\; \psi_N \mapsto \{ \psi^{(0)}, \psi^{(1)}, \dots , \psi^{(N)} , 0 , 0 \dots \} \end{split} \] which is linear and isometric. Here $\cF_+$ denotes the bosonic Fock space over the orthogonal complement of $\text{span}\{\ph\}$. We can think of $U_\ph \psi_N \in \cF_+$ as describing the fluctuations around the condensate. We assume that the initial data $\psi_N \in L^2_s (\bR^{3N})$ exhibits complete condensation in $\ph \in L^2 (\bR^3)$ in the sense that the fluctuations $\phi (0) := U_\ph \psi_N \in \cF_+$ can be bounded uniformly in $N$ in the sense that for some $C>0$ we have
\[ \left\langle \phi (0), d\Gamma (1-\Delta) \phi (0) \right\rangle \leq C \] independent of $N$. 
We let $\psi_N$ evolve with the Hamiltonian \[ H_N = \sum_{j=1}^N -\Delta_{x_J} + \frac{1}{N} \sum_{i<j}^N V (x_i -x_j) \, . \] We know that $\psi_{N,t} = e^{-iH_N t} \psi_N$ exhibits condensation in the one-particle state $\ph_t$ evolved according to the Hartree dynamics. To study the evolution of the fluctuations around the Hartree evolution we apply the map $U_{\ph_t}: L^2_s (\bR^{3N}) \to \cF_{+,t}$, defined analogously to $U_\ph$, on $\psi_{N,t}$. (Notice that the image space of $U_{\ph_t}$ depends on time because of the requirement of orthogonality to $\ph_t$.) We find that 
\begin{equation}\label{eq:fluc-LNS} \| U_{\ph_t} e^{-iH_N t} \psi_N - \phi (t)  \| \to 0 \quad (N \to \infty).
\end{equation}
Here $\phi (t) \in \cF_+$ is the solution of the evolution equation 
\begin{equation}\label{eq:FS-ev} i \partial_t \phi (t) = \cL_\infty (t) \phi (t) 
\end{equation}
with initial data $\phi (0)$ and a quadratic time-dependent generator $\cL_\infty (t)$ very similar to (\ref{eq:Linfty}) (although not exactly the same because of the requirement of orthogonality). Eq.\ (\ref{eq:fluc-LNS}) shows that, by taking into account the quadratic dynamics of the fluctuations, we obtain a norm approximation for the full evolution $\psi_{N,t}$. In other words, writing $\phi (t) = \{ \phi^{(1)} (t), \dots , \phi^{(N)} (t) , 0, 0, \dots \}$, we find 
\[ \psi_{N,t} \simeq \phi^{(0)} (t) \ph^{\otimes N}_t + \phi^{(1)} (t) \otimes_s \ph_t^{\otimes (N-1)} + \dots + \phi^{(N)} (t) \]
with an error whose norm tends to zero as $N \to \infty$. (It is also possible to check that the error is actually of order $N^{-1/2}$.)

\section{The Gross-Pitaevskii regime}
\setcounter{equation}{0}
\label{sec:GP}

Another limit in which it is possible to approximate the many-body dynamics of bosonic systems by an effective one-particle equation is the Gross-Pitaevskii regime, which is relevant for the description of trapped Bose-Einstein condensates. At the microscopic level, a trapped Bose-Einstein condensate can be described as a gas of $N$ bosons with Hamilton operator of the form
\begin{equation}\label{eq:Htrap} H_N^{\text{trap}} = \sum_{j=1}^N \left( -\Delta_{x_j} + V_{\text{ext}} (x_j) \right) + \sum_{i<j} N^2 V(N (x_i -x_j)) .
\end{equation}
Here $V_{\text{ext}}$ is an external potential modelling the trap and the interaction is described by a smooth repulsive potential $V$ (we have to assume that $V \geq 0$ pointwise) with short range (for convenience we will assume that $V$ has compact support, although this is not really necessary). 

\medskip

\emph{Scattering length.} In (\ref{eq:Htrap}) the interaction potential scales with the number of particles $N$ so that its scattering length is of order $N^{-1}$. Let us recall that the scattering length of a potential $V$ is defined through the solution of the zero-energy scattering equation 
\begin{equation}\label{eq:scat} \left( -\Delta + \frac{1}{2} V \right) f = 0 
\end{equation}
with the boundary condition $f (x) \to 1$ for $|x| \to \infty$. Under the assumption of compact support for $V$ one can show that, for $|x|$ sufficiently large, 
\[ f(x) = 1- \frac{a_0}{|x|} \]
for an appropriate constant $a_0 > 0$, which is called the scattering length of $V$. It is a simple exercise to show that equivalently the scattering length $a_0$ can also be defined through the integral
\begin{equation}\label{eq:sc-len} 8\pi a_0 = \int V(x) f(x) dx
\end{equation}
where again $f$ denotes the solution of (\ref{eq:scat}). {F}rom the point of view of physics, the scattering length $a_0$ measures the effective range of the interaction potential; two quantum mechanical particles interacting through the potential $V$, when they are far apart, feel the other particle as a hard sphere with radius $a_0$ (in particular, the scattering length of a hard sphere potential coincides with the radius of the sphere).

Notice that if $a_0$ denotes the scattering length of $V$, the scattering length of the rescaled potential $N^2 V(N\cdot)$ is given by $a=a_0/N$. This follows because by simple scaling, from (\ref{eq:scat}) we find  
\[ \left( -\Delta + \frac{N^2}{2} V(N\cdot) \right) f(N\cdot) = 0 \]
and 
\[ f(Nx) = 1- \frac{a_0}{N|x|} = 1- \frac{(a_0/N)}{|x|}. \]

\medskip

\emph{Ground state properties of trapped condensates.} It was proven in \cite{LSY} that the ground state energy per particle for the Hamiltonian (\ref{eq:Htrap}) converges, as $N \to \infty$, towards the minimum of the Gross-Pitaevskii energy functional 
\begin{equation}\label{eq:GPen} \cE_{\text{GP}} (\ph) = \int \left[ |\nabla \ph|^2 + V_{\text{ext}} |\ph|^2 + 4\pi a_0 |\ph|^4 \right] dx 
\end{equation}
over all one-particle wave functions $\ph \in L^2 (\bR^3)$ with $\| \ph \| = 1$. {F}rom (\ref{eq:GPen}) we conclude that in first approximation the ground state energy of the boson gas depends only on the scattering length of the interaction potential, not on its precise profile.  

It was then shown in \cite{LS} that the ground state of $H_N^{\text{trap}}$ exhibits complete condensation in the minimizer of the Gross-Pitaevskii energy functional (\ref{eq:GPen}). More precisely, the one-particle reduced density $\gamma^{(1)}_N$ associated with the ground state of $H_N^\text{trap}$ was proven to satisfy 
\[ \frac{1}{N} \gamma^{(1)}_N \to |\ph \rangle \langle \ph| \quad (N \to \infty).\] The interpretation of this result is straightforward: in the ground state of $H_N^{\text{trap}}$ all particles, up to a fraction vanishing in the limit $N \to \infty$, are condensated in the one-particle state described by the unique  minimizer of the Gross-Pitaevskii functional.

\medskip

\emph{Dynamics of initially trapped condensates.} Since Gross-Pitaevskii theory has proved so successful in the description of the ground state properties of the Hamiltonian (\ref{eq:Htrap}), can it also be used to predict the time-evolution of initially trapped condensates? As in Section \ref{sec:mf} we want to study the reaction of the system to a change of the external fields. At time $t=0$, we assume the boson gas to be prepared in the ground state $\psi_N$ of the trapping Hamiltonian (\ref{eq:Htrap}). Then, the traps are switched off and the condensate starts to evolve by the translation invariant Hamiltonian 
\begin{equation}\label{eq:HN-GP} H_N = \sum_{j=1}^N -\Delta_{x_j} + \sum_{i<j}^N N^2 V(N (x_i -x_j)) .
\end{equation}
The next theorem, taken from \cite{ESY1,ESY2,ESY3}, describes the resulting dynamics. (More recently, a similar statement has been shown in \cite{P}.)
\begin{theorem}\label{thm:GP}
Let $V \geq 0$ be a spherically symmetric, short range, bounded potential with scattering length $a_0$. Consider a sequence $\psi_N \in L^2_s (\bR^{3N})$ satisfying 
\begin{itemize}
\item \emph{Finite energy per particle:} there exists $C > 0$ such that $\langle \psi_N, H_N \psi_N \rangle \leq C N$;
\item \emph{Condensation:} the one-particle reduced density $\gamma^{(1)}_N$ associated with $\psi_N$ is such that 
\[ \frac{1}{N} \gamma_N^{(1)} \to |\ph \rangle \langle \ph| \quad (N \to \infty)\]
for a $\ph \in L^2 (\bR^3)$. 
\end{itemize}
Let $\psi_{N,t} = e^{-iH_N t} \psi_N$ and let $\gamma_{N,t}^{(1)}$ be the one-particle reduced density associated with $\psi_{N,t}$. Then, for every fixed $t \in \bR$, 
\begin{equation}\label{eq:conv-GP}
\frac{1}{N} \gamma_{N,t}^{(1)} \to |\ph_t \rangle \langle \ph_t |
\end{equation} 
as $N \to \infty$, where $\ph_t$ is the solution of the time-dependent Gross-Pitaevskii equation
\begin{equation}\label{eq:GP1} i\partial_t \ph_t = -\Delta \ph_t + 8\pi a_0 |\ph_t|^2 \ph_t 
\end{equation}
with the initial condition $\ph_0 = \ph$. 
\end{theorem}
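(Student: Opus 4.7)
The plan is to follow the BBGKY hierarchy strategy outlined in Section~\ref{sec:mf}, with modifications that account for the much more singular scaling of the interaction in \eqref{eq:HN-GP}. Write $\tilde\gamma^{(k)}_{N,t} = \binom{N}{k}^{-1} \gamma^{(k)}_{N,t}$; these densities satisfy a BBGKY hierarchy analogous to \eqref{eq:BBGKY} but with $V(x_i-x_j)$ replaced by $N^2 V(N(x_i-x_j))$. Formally this interaction converges to $(\int V)\,\delta(x_i-x_j)$, but the correct limit must produce the coefficient $8\pi a_0$ rather than $\int V$: the mismatch is precisely the content of \eqref{eq:sc-len}, and reflects the short-scale correlation structure developed by $\psi_{N,t}$ on the scale $1/N$.

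The first ingredient is a set of a~priori bounds strong enough to control trace-class and Sobolev-type norms of $\tilde\gamma^{(k)}_{N,t}$ uniformly in $N$. Since the naive energy bound $\langle\psi_N, H_N\psi_N\rangle \leq CN$ is insufficient (the scattering correlations themselves carry kinetic energy of order $N$ that is not available to smooth the reduced densities), I would propagate bounds on higher powers, $\langle\psi_N, H_N^k \psi_N\rangle \leq C^k N^k$, which are preserved under the Schr\"odinger evolution. Combined with operator inequalities of the form
\[
H_N^k \geq C^k (1-\Delta_{x_1})\cdots(1-\Delta_{x_k}) \quad \text{on } L^2_s(\bR^{3N}),
\]
these yield, in the notation \eqref{eq:sob-norm}, uniform Sobolev bounds $\|\tilde\gamma^{(k)}_{N,t}\|_{H^{(k)}_1} \leq C^k$. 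Proving such operator inequalities in the GP scaling, where the interaction fails to be relatively form-bounded by $1-\Delta$ uniformly in $N$, is delicate and is one of the main technical obstacles. These estimates give compactness of $\{\tilde\gamma^{(k)}_{N,t}\}_N$ in an appropriate weak Sobolev topology.

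The second step is to identify the limit hierarchy. For any weak limit point $\{\tilde\gamma^{(k)}_{\infty,t}\}$ one passes to the limit in the BBGKY equations; the crucial quantity is
\[
\lim_{N\to\infty}\,\tr_{k+1}\!\Bigl[\,N^2 V(N(x_j-x_{k+1}))\,\tilde\gamma^{(k+1)}_{N,t}\Bigr].
\]
A naive replacement of $N^2 V(N\cdot)$ by $(\int V)\,\delta(x_j-x_{k+1})$ is incorrect: the true density $\tilde\gamma^{(k+1)}_{N,t}$ carries, near the diagonal $x_j = x_{k+1}$, an approximate factor $f(N(x_j-x_{k+1}))$ given by the zero-energy scattering solution of \eqref{eq:scat}. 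Inserting and removing this factor and using $\int Vf = 8\pi a_0$ yields the desired coefficient. Making this precise, and controlling the remainder, is the second main technical obstacle; the outcome is that any limit point solves the infinite Gross-Pitaevskii hierarchy
\[
i\partial_t \tilde\gamma^{(k)}_{\infty,t} = \sum_{j=1}^k \bigl[-\Delta_{x_j},\tilde\gamma^{(k)}_{\infty,t}\bigr] + 8\pi a_0 \sum_{j=1}^k \tr_{k+1}\bigl[\delta(x_j-x_{k+1}),\tilde\gamma^{(k+1)}_{\infty,t}\bigr].
\]

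The third step, which I expect to be the hardest, is uniqueness of solutions of this GP hierarchy among families satisfying the a~priori bounds $\|\tilde\gamma^{(k)}_{\infty,t}\|_{H^{(k)}_1} \leq C^k$. The naive Dyson iteration used for bounded $V$ in Section~\ref{sec:mf} breaks down because the collision kernel is a $\delta$-function, which is not bounded relatively to $1-\Delta$. The standard resolution is to reorganise the $n$-th order Dyson expansion, regrouping the $n!$ nested time-ordered collisions into a much smaller number of classes indexed by binary trees, and to control each class using a Strichartz/Klainerman--Machedon-type spacetime estimate of the schematic form
\[
\bigl\| B^{(k)}\,\cU^{(k+1)}(t)\,\tilde\gamma^{(k+1)} \bigr\|_{L^2_t H^{(k)}_1} \leq C\,\|\tilde\gamma^{(k+1)}\|_{H^{(k+1)}_1}.
\]
Once uniqueness is available, the factorized ansatz $\tilde\gamma^{(k)}_{\infty,t} = |\ph_t\rangle\langle\ph_t|^{\otimes k}$ with $\ph_t$ solving \eqref{eq:GP1} provides the unique solution matching the initial data, and compactness together with uniqueness forces the full sequence to converge. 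Since the limit is a rank-one projection, the weak convergence can be upgraded to trace-norm convergence exactly as in Section~\ref{sec:mf}, yielding \eqref{eq:conv-GP}.
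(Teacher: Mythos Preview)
Your overall architecture (compactness, convergence to the infinite GP hierarchy with the coefficient $8\pi a_0$ arising from the correlation factor $f(N\cdot)$, uniqueness in an $H_1^{(k)}$-type class) matches the paper. However, your first step contains a genuine gap. The operator inequality you propose, in the form needed to conclude $\|\tilde\gamma^{(k)}_{N,t}\|_{H^{(k)}_1}\leq C^k$ uniformly in $N$, is simply \emph{false} in the GP scaling, and for exactly the reason you yourself identify: the short-scale correlations live in the reduced densities as well as in $\psi_{N,t}$. Concretely, the paper points out that $\int|\nabla_{x_1}\nabla_{x_2}\psi_{N,t}|^2\simeq N$, so $\tr\,(1-\Delta_{x_1})(1-\Delta_{x_2})\tilde\gamma^{(2)}_{N,t}$ diverges as $N\to\infty$. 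No amount of ``delicacy'' will produce a uniform bound that is not there.

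What the paper actually does is twofold. First, Proposition~\ref{prop:enest} proves an energy estimate of the form
\[
\langle\psi_N,H_N^2\psi_N\rangle \;\geq\; \frac{N^2}{2}\int d\bx\,\Bigl|\nabla_{x_1}\nabla_{x_2}\frac{\psi_N(\bx)}{f(N(x_1-x_2))}\Bigr|^2,
\]
i.e.\ the correlation structure must be \emph{divided out} before taking derivatives; only then does one get a bound of order one. Second, for higher $k$ the paper proves cut-off estimates (\ref{eq:apri-cut}) with a function $\Theta_\ell$ that vanishes when any pair $|x_i-x_j|\lesssim\ell$, with $N^{-1/2}\ll\ell\ll N^{-1/3}$; the cutoff kills the singular correlation region, and its effect disappears only \emph{after} passing to the limit $N\to\infty$. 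Thus the Sobolev bound $\tr\,(1-\Delta_{x_1})\cdots(1-\Delta_{x_k})\tilde\gamma^{(k)}_{\infty,t}\leq C^k$ holds for the \emph{limit points}, not uniformly along the sequence. For uniqueness, the paper's primary proof is a Feynman-graph expansion in Fourier space; your Klainerman--Machedon/tree-counting route is a legitimate alternative (cited in the paper as \cite{KM,CHPS}), but note that in the class $\cH_1$ it was only completed with the aid of the quantum de~Finetti theorem.
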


\emph{Remarks.} Convergence in (\ref{eq:conv-GP}) holds for example in the trace norm. Moreover (\ref{eq:conv-GP}) also implies convergence of higher order reduced densities. If $\gamma_{N,t}^{(k)}$ denotes the $k$-particle reduced density associated with $\psi_{N,t}$, it is shown in \cite{ESY2,ESY3} that 
\[ \frac{1}{\binom{N}{k}} \gamma_{N,t}^{(k)} \to |\ph_t \rangle \langle \ph_t|^{\otimes k} \]
as $N \to \infty$, for any fixed $k \in \bN$ and $t \in \bR$. 

\medskip

\emph{Comparison with mean-field regime.} Let us discuss the relation between the Gross-Pitaevskii limit, characterized by an interaction potential with scattering length of the order $N^{-1}$, and the mean-field regime discussed in Section \ref{sec:mf}. 
Writing the interaction potential in (\ref{eq:HN-GP}) as \[ N^2 V (Nx) = \frac{1}{N} v_N (x) \] with $v_N (x) = N^3 V (N x)$, one could think of the Gross-Pitaevskii limit as a mean-field limit with a potential $v_N$ that converges towards a delta distribution in the limit of large $N$. However, this interpretation of (\ref{eq:HN-GP}) is quite misleading. Formally we have
\[ v_N (x) = N^3 V(Nx)  \to b_0 \delta (x) \]
with $b_0 = \int V(x) dx$. Hence, based on the mean-field interpretation of (\ref{eq:HN-GP}), we should expect the many-body evolution to be approximated, in the limit of large $N$, by the nonlinear Schr\"odinger equation 
\[ i\partial_t \ph_t = -\Delta \ph_t + b_0 (\delta * |\ph_t|^2) \ph_t  = -\Delta \ph_t + b_0 |\ph_t|^2 \ph_t.\]
This equation has the same form as the Gross-Pitaevskii equation (\ref{eq:GP1}) but a different constant in front of the nonlinearity. The reason why the mean-field interpretation leads to a wrong coupling constant is that physically the two regimes are very different. While the mean-field regime is characterized by a large number of very weak collisions, in the Gross-Pitaevskii limit particles interact very rarely (only when they are at distances of the order $N^{-1}$, which is much smaller than the typical distance $N^{-1/3}$ among the particles) but when they do interact, the collisions are very strong. Because of these rare and strong collisions, the solution of the Schr\"odinger equation $\psi_{N,t}$ generated by the Hamiltonian (\ref{eq:HN-GP}) develops a singular correlation structure, varying on the length scale $N^{-1}$, which is then responsible for the emergence of the scattering length in the Gross-Pitaevskii equation (\ref{eq:GP1}). 
Correlations among the particles therefore play a crucial role in the Gross-Pitaevskii regime, while they are negligible in the mean-field limit. 

\medskip

\emph{Correlation structure.} Let us now try to explain how the correlation structure affects the dynamics of the condensate. (Recently the correlation structure developed by the solution of the Schr\"odinger equation in the Gross-Pitaevskii limit has been studied in \cite{CH2}.) Let us normalize the one- and two-particle reduced densities associated with $\psi_{N,t}$ to $\wt{\gamma}_{N,t}^{(1)} = N^{-1} \gamma_{N,t}^{(1)}$ and $\wt{\gamma}_{N,t}^{(2)} = \binom{N}{2}^{-1} \gamma_{N,t}^{(2)}$. They satisfy the differential equation 
\begin{equation}\label{eq:bbgky1} 
i\partial_t \wt{\gamma}_{N,t}^{(1)} = [-\Delta, \wt{\gamma}^{(1)}_{N,t} ] + (N-1) \tr_2 \left[ N^2 V(N (x_1 - x_2)), \wt{\gamma}^{(2)}_{N,t} \right] 
\end{equation}
which is the first of the $N$ coupled equations forming the BBGKY hierarchy (similar to (\ref{eq:BBGKY}) in the mean-field setting). By assumption, at time $t = 0$, $\psi_{N,0}$ exhibits condensation, meaning that $\wt{\gamma}_{N,t=0}^{(1)} \to |\ph \rangle \langle \ph|$ for a $\ph \in L^2 (\bR^3)$. If condensation is approximately preserved by the time-evolution, we should expect that also for $t \not = 0$  
\begin{equation}\label{eq:ans-wt1} \wt\gamma_{N,t}^{(1)} \simeq |\ph_t \rangle \langle \ph_t| .
\end{equation}
As for the two-particle reduced density, again we expect approximate factorization. Here, however, we should also take into account correlations. Assuming that correlations can be described by means of the solution of the zero-energy scattering equation (\ref{eq:scat}), we expect that the integral kernel of $\wt{\gamma}_{N,t}^{(2)}$ can be approximated by 
\begin{equation}\label{eq:ans-wt2} \wt\gamma_{N,t}^{(2)} (x_1, x_2 ; y_1, y_2) \simeq f (N(x_1 - x_2)) f(N (y_1 - y_2)) \ph_t (x_1) \ph_t (x_2) \overline{\ph}_t (y_1) \overline{\ph}_t (y_2) .
\end{equation}
Using this ansatz for $\wt\gamma^{(2)}_{N,t}$, we find that the second term on the r.\,h.\,s.\ of (\ref{eq:bbgky1}) approximately has the integral kernel
\begin{equation}\label{eq:comp-wt2} 
\begin{split} 
(N-1) \tr_2 &\left[ N^2 V(N (x_1 - x_2)), 
\gamma^{(2)}_{N,t} \right] (x;y) \\ &= (N-1) \int dx_2 \left( N^2 V(N (x - x_2)) - N^2 V (N(y- x_2)) \right) 
\gamma^{(2)}_{N,t} (x,x_2 ; y, x_2) \\ &\simeq \int dx_2 \left( N^3 V(N (x -x_2)) - N^3 V(N (y-x_2)) \right) \ph_t (x) \overline{\ph}_t (y) |\ph_t (x_2)|^2 \\ &\hspace{8cm} \times f (N(x-x_2)) f(N (y-x_2))  \\
&\simeq 8\pi a_0 \left( |\ph_t (x)|^2 - |\ph_t (y)|^2 \right) \ph_t (x) \overline{\ph}_t (y) \end{split} 
\end{equation}
where we used (\ref{eq:sc-len}) and the fact that $f (Nx) \to 1$ weakly as $N \to \infty$. Inserting also (\ref{eq:ans-wt1}) in the term on the l.\,h.\,s.\ as well as in the first term on the r.\,h.\,s.\ of (\ref{eq:bbgky1}), we see that the Gross-Pitaevskii equation (\ref{eq:GP1}) arises exactly as the self-consistent equation for $\ph_t$. Notice that the presence of the solution of the zero-energy scattering equation $f(N\cdot)$ in the ansatz (\ref{eq:ans-wt2}) for the two particle reduced density does not contradict the fact that, as $N \to \infty$, $\wt\gamma_{N,t}^{(2)} \to |\ph_t \rangle \langle \ph_t|^{\otimes 2}$; the correlation structure in (\ref{eq:ans-wt2}) is non-trivial only for $|x_1 -x_2| \lesssim N^{-1}$ or $|y_1 - y_2| \lesssim N^{-1}$, and it disappears in the limit $N \to\infty$. Nevertheless it plays a crucial role in (\ref{eq:comp-wt2}) because it is multiplied with a very singular potential, varying on the same short scale.

\medskip

\emph{Energy estimate.} Since the presence of the solution $f (N\cdot)$ of the zero energy scattering equation in  (\ref{eq:ans-wt2}) plays such an important role, a rigorous derivation of the Gross-Pitaevskii equation (\ref{eq:GP1}) requires a proof that the solution $\psi_{N,t}$ of the many-body Schr\"odinger equation really develops a correlation structure and that, in good approximation, this correlation structure can be described by $f (N\cdot)$. To reach this goal, it is useful to prove certain energy estimates, bounding appropriate Sobolev norms of $\psi_{N,t}$ by moments of the Hamiltonian.

To give an example of the analysis involved in this step, we prove a simple energy estimate for the case of small interaction potentials. More precisely, we define the following dimensionless quantity to measure the strength of the radial interaction $V$: 
\[ \rho = \sup_{r \geq 0} r^2 V(r) + \int_0^\infty dr \, r V(r) \]
where we introduced the short hand notation $r = |x|$ and $V(r) \equiv V(x)$ if $|x| = r$. Smallness of $\rho$ also implies that the solution $f$ of the zero-energy scattering equation (\ref{eq:scat}) remains close to $1$ (recall the boundary condition $f(x) \to 1$ as $|x| \to \infty$). In fact, one can show \cite[Lemma D.1]{ESY1} that there exists a constant $c > 0$ with 
\begin{equation}\label{eq:fprop} 
1-c \rho \leq f(x) \leq 1,\qquad |\nabla f (x)| \leq c \frac{\rho}{|x|},\qquad |\nabla^2 f (x)| \leq c \frac{\rho}{|x|^2} 
\end{equation} 
for all $x \in \bR^3$. Using (\ref{eq:fprop}), we can prove the following proposition, which is taken from \cite{ESY2}.  
\begin{proposition}\label{prop:enest}
Let $V \geq 0$ be a spherically symmetric, short range, bounded potential with $\rho > 0$ small enough. Let $f$ denote the solution of the zero-energy scattering equation (\ref{eq:scat}). Then 
\begin{equation}\label{eq:enest} \langle \psi_N, H_N^2 \psi_N \rangle \geq \frac{N^2}{2} \int d \bx \left| \nabla_{x_1} \nabla_{x_2} \frac{\psi_N (\bx)}{f (N (x_1 -x_2))} \right|^2 
\end{equation}
for every $\psi_N \in L^2_s (\bR^{3N})$. Here we used the convention $\bx = (x_1, \dots, x_N) \in \bR^{3N}$. 
\end{proposition}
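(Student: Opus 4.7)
The plan is to factor the short-scale correlation $f_{12}(x_1,x_2) := f(N(x_1-x_2))$ out of $\psi$, exploit the zero-energy scattering equation, and then recover the mixed derivative norm via integration by parts. Writing $\psi = f_{12}\phi$ with $\phi := \psi/f_{12}$, and using that the scaled scattering equation reads
\[
\bigl[-\Delta_{x_1} - \Delta_{x_2} + N^2 V(N(x_1-x_2))\bigr] f_{12} = 0
\]
(since $\Delta_{x_1}+\Delta_{x_2} = 2\Delta_r$ in the relative coordinate $r = x_1-x_2$, so the scattering equation $(-\Delta + V/2)f = 0$ rescales correctly), the Leibniz rule yields the identity
\[
H_N \psi \;=\; f_{12}\,\tilde H_N\,\phi \;-\; 2\,(\nabla_{x_1} f_{12})\cdot(\nabla_{x_1}-\nabla_{x_2})\phi,
\]
where $\tilde H_N := H_N - N^2 V(N(x_1-x_2)) = \sum_{j=1}^N(-\Delta_{x_j}) + \sum_{(i,k)\neq(1,2)} N^2 V(N(x_i-x_k)) \geq 0$.

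Next I would lower-bound $\|H_N\psi\|^2$. The plan is to first replace $\tilde H_N$ by its kinetic part $T := \sum_{j=1}^N(-\Delta_{x_j})$ --- a step that should cost only a small multiplicative error, using that $\tilde H_N - T = V' \geq 0$ with $V'$ commuting with most of $T$. Then, the cross-derivative identity
\[
\Bigl\|\sum_j(-\Delta_{x_j})\phi\Bigr\|^2 \;=\; \sum_j\int|\Delta_{x_j}\phi|^2 \;+\; 2\sum_{j<k}\int|\nabla_{x_j}\nabla_{x_k}\phi|^2 \;\geq\; 2\sum_{j<k}\int|\nabla_{x_j}\nabla_{x_k}\phi|^2
\]
follows from $\langle \Delta_{x_j}\phi, \Delta_{x_k}\phi\rangle = \int|\nabla_{x_j}\nabla_{x_k}\phi|^2$ for $j\neq k$ (via two integrations by parts, using that $\nabla_{x_j}$ and $\Delta_{x_k}$ commute). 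Since $\psi$ is permutation-symmetric and $f_{12}\geq 1-c\rho$ is close to one by \eqref{eq:fprop}, each term $\int|\nabla_{x_j}\nabla_{x_k}\phi|^2$ is within $O(\rho)$ of $\int|\nabla_{x_1}\nabla_{x_2}\phi|^2$. Summing over the $\binom{N}{2}=N(N-1)/2$ pairs and absorbing the small $\rho$-correction yields the claimed factor $N^2/2$ (for $N \geq 2$ and $\rho$ small).

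The remaining correction term $-2(\nabla_{x_1} f_{12})\cdot(\nabla_{x_1}-\nabla_{x_2})\phi$ is handled using the pointwise estimate $|\nabla f(y)|\leq c\rho/|y|$ from \eqref{eq:fprop} combined with Hardy's inequality in the relative variable $x_1-x_2$, so that its contribution is dominated by a small multiple of $\int|\nabla_{x_1}\nabla_{x_2}\phi|^2$; Young's inequality then absorbs it into the main term when $\rho$ is small enough.

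The hardest step will be comparing $\|\tilde H_N\phi\|^2$ with $\|T\phi\|^2$: although $\tilde H_N = T + V'$ with both pieces non-negative, operator monotonicity is lost under squaring, and the anti-commutator $\{T,V'\}$ need not be non-negative. My plan is to decompose $\{T,V'\} = \sum_j\sum_{(i,k)\neq(1,2)} \{-\Delta_{x_j}, V_{ik}\}$, note that for $j\notin\{i,k\}$ the two factors commute (so the contribution $2V_{ik}(-\Delta_{x_j})$ is manifestly non-negative), and handle the remaining $j\in\{i,k\}$ terms by one integration by parts that reduces them to gradient quadratic forms controlled by $\rho$ (via \eqref{eq:fprop} and Hardy) and by the pair-gradient $\int|\nabla_{x_i}\nabla_{x_k}\phi|^2$ itself, thereby closing the estimate.
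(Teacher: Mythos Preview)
Your approach has a genuine gap in the symmetry step. You want the $N^2/2$ factor to come from summing $\binom{N}{2}$ copies of $\int|\nabla_{x_1}\nabla_{x_2}\phi|^2$, arguing that ``since $\psi$ is permutation-symmetric and $f_{12}\geq 1-c\rho$ is close to one, each term $\int|\nabla_{x_j}\nabla_{x_k}\phi|^2$ is within $O(\rho)$ of $\int|\nabla_{x_1}\nabla_{x_2}\phi|^2$.'' But $\phi=\psi/f_{12}$ is \emph{not} permutation symmetric, and the closeness of $f_{12}$ to $1$ in $L^\infty$ is irrelevant once derivatives in $x_1,x_2$ hit it: $\nabla^2 f_{12}$ scales like $N^2$ on a region of volume $N^{-3}$. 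Concretely, for $j,k\geq 3$ one has $\int|\nabla_{x_j}\nabla_{x_k}\phi|^2=\int f_{12}^{-2}|\nabla_{x_j}\nabla_{x_k}\psi|^2\simeq \int|\nabla_{x_1}\nabla_{x_2}\psi|^2$, and the whole content of the proposition (see the remark following it) is precisely that $\int|\nabla_{x_1}\nabla_{x_2}\psi|^2$ is typically of order $N$ while $\int|\nabla_{x_1}\nabla_{x_2}(\psi/f_{12})|^2$ is of order $1$. So the $\binom{N}{2}$ terms you sum are \emph{not} approximately equal; most of them are larger by a factor $N$ than the one you want, and your argument cannot recover the stated inequality.

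The paper avoids this entirely by exploiting the symmetry of $\psi$ (not of $\phi$) \emph{before} factoring out $f_{12}$. Writing $H_N=\sum_j h_j$ with $h_j=-\Delta_{x_j}+\tfrac12\sum_{i\neq j}N^2V(N(x_i-x_j))$, symmetry gives $\langle\psi,H_N^2\psi\rangle\geq N(N-1)\langle\psi,h_1h_2\psi\rangle$, producing the $N^2$ factor immediately. Only after isolating the $(1,2)$ pair does one drop the extraneous potentials (using that $N^2V(N(x_1-x_j))$ for $j\geq 3$ commutes with $-\Delta_{x_2}$, so $\nabla_{x_2}^*V\nabla_{x_2}\geq 0$) and introduce $\phi=\psi/f_{12}$. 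The scattering equation then converts $(-\Delta_{x_i}+\tfrac{N^2}{2}V)\psi$ into $f_{12}L_i\phi$ with $L_i$ the weighted Laplacian for $f_{12}^2\,d\mathbf{x}$; the weighted integration by parts delivers $\int f_{12}^2|\nabla_{x_1}\nabla_{x_2}\phi|^2$ plus a commutator error controlled by $|\nabla^2 f|\leq c\rho|x|^{-2}$ and Hardy. Your acknowledged ``hardest step'' (the anticommutator $\{T,V'\}$) simply does not arise in this organization.
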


\emph{Remark.} Let $\psi_{N,t} = e^{-iH_N t} \psi_N$ be the solution of the Schr\"odinger equation with initial data $\psi_N$. Assume additionally (with respect to the assumptions of Theorem \ref{thm:GP}) that $\langle \psi_N, H_N^2 \psi_N \rangle \leq C N^2$; this can be achieved with an approximation argument that we skip. Then (\ref{eq:enest}) implies that
\[ N^2 \int d\bx \left| \nabla_{x_1} \nabla_{x_2} \frac{\psi_{N,t} (\bx)}{f(N(x_1 -x_2))} \right|^2 \leq 2 \langle \psi_{N,t} , H_N^2 \psi_{N,t} \rangle = 2 \langle \psi_N , H_N^2 \psi_N \rangle \leq 2C N^2 \]
and therefore that 
\begin{equation}\label{eq:apries}
\int d\bx \left|\nabla_{x_1} \nabla_{x_2} \frac{\psi_{N,t} (\bx)}{f(N(x_1 -x_2))} \right|^2 \leq C 
\end{equation}
uniformly in $N$ and in $t$. Note that in (\ref{eq:apries}) it is very important that we divide $\psi_{N,t}$ by $f(N(x_1- x_2))$ before we take the two derivatives. Keeping in mind that $f(x) = 1 - a_0/|x|$ for $|x|$ large enough, it is easy to check that
\[ \int dx \, |\nabla^2  (f (Nx))|^2 \simeq N \]
and therefore we can expect that
\[ \int d\bx \left|\nabla_{x_1} \nabla_{x_2} \psi_{N,t} (\bx) \right|^2 \simeq N. \]
Only if we first divide $\psi_{N,t}$ by $f(N (x_1 -x_2))$, removing in this way the singular correlation structure in the $x_1 -x_2$ variable, we obtain in (\ref{eq:apries}) a bound of order one. In other words, (\ref{eq:apries}) implies that, when $|x_1 -x_2| \lesssim N^{-1}$, the evolved wave function $\psi_{N,t}$ can be approximated by $f (N(x_1 -x_2))$ times a function $\psi_{N,t} (\bx)/ f(N(x_1 -x_2))$ which varies on a larger length-scale. In this sense, we can say that the estimate (\ref{eq:apries}) proves that $\psi_{N,t}$ has a short scale correlation structure, which can in good approximation be described by the function $f(N (x_1 - x_2))$. For this reason the bound (\ref{eq:apries}) plays a crucial role in the proof of Theorem \ref{thm:GP} (for the case of weak potential treated in \cite{ESY2}; for large potentials, a different energy estimate has been proved and applied in \cite{ESY3}).

\begin{proof}[Proof of Proposition \ref{prop:enest}]
We write
\[ H_N = \sum_{j=1}^N h_j \qquad \text{with } \quad h_j = -\Delta_{x_j} + \frac{1}{2} \sum_{i \not = j} N^2 V(N (x_i - x_j)). \]
Then, using the permutation symmetry of $\psi_N$, we find 
\[ \langle \psi_N, H_N^2 \psi_N \rangle = N(N-1) \langle \psi_N, h_1 h_2 \psi_N \rangle + N \langle \psi_N , h_1^2 \psi_N \rangle \geq N(N-1) \langle \psi_N, h_1 h_2 \psi_N \rangle. \]
The positivity of the potential and the fact that for all $j \geq 3$
\[ N^2 V(N (x_1 - x_j)) (-\Delta_{x_2}) = \nabla_{x_2}^* N^2 V (N (x_1 - x_j)) \nabla_{x_2} \]
imply that 
\begin{equation}\label{eq:H2-est} \langle \psi_N , H_N^2 \psi_N \rangle \geq N (N-1) \langle \psi_N, \left( -\Delta_{x_1} + \frac{N^2}{2} V(N (x_1- x_2)) \right) \left( -\Delta_{x_2} + \frac{N^2}{2} V(N (x_1 -x_2)) \right) \psi_N \rangle .
\end{equation}
Next we define $\phi_N (\bx) = \psi_N (\bx) / f (N (x_1 -x_2))$ and write
\begin{equation}\label{eq:Dlf} \frac{-\Delta_{x_1} \psi_N}{f (N (x_1 - x_2))} = -\Delta_{x_1} \phi_N - \frac{2N (\nabla f) (N(x_1 - x_2))}{f (N(x_1 -x_2))} \cdot \nabla_{x_1} \phi_N + \frac{N^2 (-\Delta f) (N(x_1 - x_2))}{f (N (x_1 -x_2))}. 
\end{equation}
Hence
\begin{equation}\label{eq:L1} \left(-\Delta_{x_1} + \frac{N^2}{2} V(N (x_1 -x_2)) \right) \psi_N =  f (N(x_1 -x_2)) L_1 \phi_N 
\end{equation}
where we defined the operator
\[ L_1 = -\Delta_{x_1} \phi_N - \frac{2N (\nabla f) (N(x_1 -x_2))}{f (N (x_1 -x_2))} \cdot \nabla_{x_1} \]
The important observation is that the last term on the r.\,h.\,s.\ of (\ref{eq:Dlf}) cancels exactly with the potential, because $f$ satisfies the zero-energy scattering equation. Analogously, we find that 
\[ \left( -\Delta_{x_2} + \frac{N^2}{2} V(N (x_1 -x_2)) \right) \psi_N =  f (N(x_1 -x_2)) L_2 \phi_N 
\]
with 
\[ L_2 = -\Delta_{x_2} \phi_N + \frac{2N (\nabla f) (N(x_1 -x_2))}{f (N (x_1 -x_2))} \cdot \nabla_{x_2}. \]
It is important to observe that $L_1, L_2$ are the Laplace operator w.\,r.\,t.\ the weight $f^2 (N\cdot)$. In other words
\[ \begin{split} \int d\bx \, f^2_N (x_1 -x_2) \overline{(L_1 \chi_N) (\bx)} \xi_N (\bx) &= \int d\bx \, f^2_N (x_1 -x_2) \overline{\chi_N} (\bx) (L_1 \xi_N)(\bx) \\ &= \int d\bx \, f^2 (N(x_1 - x_2)) \overline{\nabla_{x_1}\chi_N} (\bx) \nabla_{x_1} \xi_N (\bx) \end{split} \]
and analogously for $L_2$. 

{F}rom (\ref{eq:H2-est}) we conclude that 
\begin{equation}\label{eq:prop-fin} \begin{split} \langle \psi_N , H_N^2 \psi_N \rangle \geq \; &N(N-1) \int d\bx \, f^2(N (x_1 - x_2)) \overline{L_1 \phi_N} (\bx) L_2 \phi_N (\bx) \\ &=  N (N-1) \int d\bx \, f^2 (N(x_1 -x_2)) \overline{\nabla_{x_1} \phi_N} (\bx) \nabla_{x_1} L_2 \phi_N (\bx) 
\\ =\; &N (N-1) \int d\bx \, f^2 (N(x_1 -x_2)) \, \left| \nabla_{x_2} \nabla_{x_1} \phi_N (\bx) \right|^2 \\ &+ N(N-1) \int d\bx f^2 (N (x_1 -x_2)) \, \overline{\nabla_{x_1} \phi_N} (\bx) [\nabla_{x_1}, L_2] \phi_N (\bx) \\= \; &N (N-1) \int d\bx \, f^2 (N(x_1 -x_2)) \, \left| \nabla_{x_2} \nabla_{x_1} \phi_N (\bx) \right|^2 \\ &+ N(N-1) \int d\bx f^2 (N (x_1 -x_2)) \, \overline{\nabla_{x_1} \phi_N} (\bx) \nabla \frac{N (\nabla f) (N(x_1 -x_2))}{f (N (x_1 -x_2))}  \nabla_{x_2} \phi_N (\bx). \end{split} 
\end{equation}
In the first term on the r.\,h.\,s.\ we bound $f^2 (N (x_1 -x_2)) \geq (1- C \rho)^2$ from below (see (\ref{eq:fprop})). As for the second term on the r.\,h.\,s.\ of (\ref{eq:prop-fin}), we notice that
\[ \nabla \frac{N (\nabla f) (N(x_1 -x_2))}{f (N (x_1 -x_2))} = \frac{N^2 (\nabla^2 f) (N (x_1 -x_2))}{f (N (x_1 -x_2))} + \frac{N^2 (\nabla f)^2 (N (x_1 -x_2))}{f^2 (N (x_1 -x_2))} \, . \]
The bounds (\ref{eq:fprop}) therefore imply that 
\[ \left| \nabla \frac{N (\nabla f) (N(x_1 -x_2))}{f (N (x_1 -x_2))} \right| \leq C \rho \frac{1}{|x_1 - x_2|^2} \]
for small $\rho > 0$. Hence 
\[ \begin{split} &\left|  N(N-1) \int d\bx f^2 (N (x_1 -x_2)) \, \overline{\nabla_{x_1} \phi_N} (\bx) \nabla \frac{N (\nabla f) (N(x_1 -x_2))}{f (N (x_1 -x_2))}  \nabla_{x_2} \phi_N (\bx) \right| \\ &\hspace{6cm} \leq C \rho N(N-1) \int d\bx \frac{1}{|x_1 -x_2|^2} |\nabla_{x_1} \phi_N (\bx)| |\nabla_{x_2} \phi_N (\bx)| \\ &\hspace{6cm} \leq C \rho N(N-1) \int d\bx \, |\nabla_{x_2} \nabla_{x_1} \phi_N (\bx)|^2 \end{split} \]
and thus
\[ \langle \psi_N , H_N^2 \psi_N \rangle \geq  (1-C\rho) N(N-1) \int d\bx \left| \nabla_{x_1} \nabla_{x_2} \phi_N (\bx) \right|^2. \]
For $\rho > 0$ sufficiently small, we obtain the desired bound.
\end{proof}

\medskip

\emph{Strategy of the proof of Theorem \ref{thm:GP}.} As mentioned above the energy estimate (\ref{eq:enest}) and its corollary (\ref{eq:apries}) play a crucial role in the proof of Theorem \ref{thm:GP} because they can be used to identify the short-scale correlation structure characterizing the solution $\psi_{N,t}$ of the many-body Schr\"odinger equation. Let $\{ \wt\gamma_{N,t}^{(k)} \}_{k=1}^N$ be the family of normalized reduced densities associated with $\psi_{N,t}$, satisfying $\tr\, \wt\gamma_{N,t}^{(k)} = 1$ for all $N,k \in \bN$ and $t \in \bR$. Denote by $\{ \wt\gamma_{\infty,t}^{(k)} \}_{k \geq 1}$ a limit point of the sequence $\{ \wt\gamma_{N,t}^{(k)} \}_{k=1}^N$ for $N \to \infty$. The estimate \eqref{eq:apries} implies that, for large but fixed $N \in \bN$ (along the appropriate subsequence), we can approximate 
\begin{equation}\label{eq:appro-gk} \wt\gamma_{N,t}^{(k)} (x_1, \dots , x_k ; y_1, \dots , y_k) \simeq \prod_{i<j}^k f(N(x_i -x_j)) \wt\gamma_{\infty,t}^{(k)} (x_1, \dots , x_k; y_1, \dots , y_k) 
\end{equation}
This allows us to derive, starting from the BBGKY hierarchy for the reduced densities $\{ \wt\gamma^{(k)}_{N,t} \}_{k=1}^N$, the infinite hierarchy 
\begin{equation}
\label{eq:inf-GP}
i\partial_t \wt\gamma_{\infty,t}^{(k)} = \sum_{j=1}^k \left[ -\Delta_{x_j} , \wt\gamma_{\infty,t}^{(k)} \right] + 8 \pi a_0 \sum_{j=1}^k \tr_{k+1} \left[ \delta (x_j - x_{k+1}) , \wt\gamma_{\infty,t}^{(k)}\right].
\end{equation} 
for the limit point $\{ \wt\gamma_{\infty,t}^{(k)} \}_{k \geq 1}$. The presence of the correlation structure in (\ref{eq:appro-gk})   is the reason why the scattering length $a_0$ arises in the infinite hierarchy (\ref{eq:inf-GP}). What is still missing to show Theorem \ref{thm:GP} is a proof of the uniqueness of the solution of the infinite hierarchy (\ref{eq:inf-GP}). Before proving uniqueness, it is important to understand in which space of families of densities $\{ \wt\gamma_{\infty,t}^{(k)} \}_{k \geq 1}$ uniqueness should be shown; proving uniqueness in a smaller space is of course easier, but it also requires to show that every limit point of the sequence $\{ \wt\gamma_{N,t}^{(k)} \}_{k =1}^N$ is contained in that small space. It turns out that a possible choice of the space of densities  is the Sobolev-type space
\[ \cH_1 = \{ \{\gamma^{(k)} \}_{k\geq 1}\! :\ \exists C>0\ \forall k \in \bN\ \tr\, (1-\Delta_{x_1}) \dots (1-\Delta_{x_k}) \gamma^{(k)} \leq C^k\}. \]
Hence, to conclude the proof of Theorem \ref{thm:GP}, one has to prove that every limit point of $\{ \gamma_{N,t}^{(k)} \}_{k\geq 1}$ is in the space $\cH_1$, and further one has to show that, for any initial condition $\wt\gamma_{\infty,0}^{(k)} = |\ph \rangle \langle \ph|^{\otimes k} \in \cH_1$, there is at most one solution $\{ \wt\gamma_{\infty,t}^{(k)} \}_{k \geq 1}$ of the infinite hierarchy (\ref{eq:inf-GP}) which is in $\cH_1$ for all $t \in \bR$. 

The proof that every limit point $\{ \wt\gamma^{(k)}_{\infty,t} \}_{k \geq 1}$ of the sequence $\{ \wt\gamma_{N,t}^{(k)} \}_{k=1}^N$ satisfies the a-priori bounds 
\begin{equation}\label{eq:apri-k} \tr \, (1-\Delta_{x_1}) \dots (1-\Delta_{x_k}) \wt\gamma_{\infty,t}^{(k)} \leq C^k 
\end{equation}
for all $k \geq 1$ is based on energy estimates similar to Proposition \ref{prop:enest}, but involving also higher moments of $H_N$. 
A challenge is the fact that for finite $N$, the reduced densities $\wt\gamma_{N,t}^{(k)}$ cannot satisfy (\ref{eq:apri-k}), at least not uniformly in $N$, because of the presence of the correlation structure (taking derivatives of the correlation functions $f(N (x_i - x_j))$ produces factors of $N$). Only after taking the limit $N \to \infty$, the correlation structure disappears and one can expect (\ref{eq:apri-k}) to hold true. To obtain (\ref{eq:apri-k}), one needs therefore to show estimates of the form
\begin{equation}\label{eq:apri-cut} \int d\bx \, \Theta_{\ell} (\bx) \left| \nabla_{x_1} \dots \nabla_{x_k} \psi_{N,t} (\bx) \right|^2 \leq C^k 
\end{equation}
where the cutoff $\Theta_\ell$ satisfies $\Theta_\ell (\bx) \simeq 0$ if there exists $i \leq k$ and $j \leq N$ with $|x_i -x_j| \leq \ell$, while $\Theta_\ell (\bx) \simeq 1$ if $|x_i -x_j| \gg \ell$ for all $i \leq k$ and $j \in \{1 , \dots , N\}$ with $j\not = i$. If $\ell$ is sufficiently large (it turns out that one needs $N \ell^2 \gg 1$), the cutoff removes all singularities of $\nabla_{x_1} \dots \nabla_{x_k} \psi_{N,t}$ due to the correlation structure (because correlations are only important when particles are close to each other). At the same time, if $N \ell^3 \ll 1$, the effect of the cutoff turns out to be negligible in the limit of large $N$; hence (\ref{eq:apri-k}) follows from (\ref{eq:apri-cut}), choosing $N^{-1/2} \ll \ell \ll N^{-1/3}$.

The proof of the uniqueness of the infinite hierarchy (\ref{eq:inf-GP}) in the class $\cH_1$, given in \cite{ESY1}, is based on an diagrammatic expansion in Feynman graphs. Since the singularity of the interaction (the $\delta$-function (\ref{eq:inf-GP})) cannot be controlled by the kinetic energy (the $\cH_1$-norm), one needs to make use of the dispersive properties of the free evolution generated by the first term on the r.\,h.\,s.\ of (\ref{eq:inf-GP}). To perform the analysis it is convenient to switch to Fourier space. Thanks to the decay in momentum characterizing every family of densities in $\cH_1$ and to the decay of the propagators of the free evolution, the contribution associated to every Feynman graph is convergent in the ultraviolet regime. Further details can be found in \cite{ESY1}. 

A different and shorter approach to prove the uniqueness of the infinite hierarchy (\ref{eq:inf-hier}) was later proposed in \cite{KM} based on certain space-time estimates for the densities $\wt\gamma_{\infty,t}^{(k)}$. Recently \cite{CHPS}, this approach was applied to deduce uniqueness in the space $\cH_1$; an important ingredient was the quantum de Finetti theorem. 

\medskip

\emph{Coherent states approach in the Gross-Pitaevskii regime.} Theorem \ref{thm:GP} proves the convergence towards the Gross-Pitaevskii dynamics without control on the rate of the convergence. Since in real systems the number of particles $N$ is large 
but finite (e.\,g.\ $N \simeq 1000$ in very dilute samples of Bose-Einstein condensates), it is important to know how large $N$ must be in order for the Gross-Pitaevskii equation to become a good approximation of the many-body quantum dynamics. Can the coherent states approach presented in Section \ref{sec:cohe} for the mean-field regime also be applied in the Gross-Pitaevskii limit to obtain an explicit bound on the error?   

Let us first try to proceed naively, following exactly the same strategy as in the mean-field case. We switch to the bosonic Fock space $\cF = \bigoplus_{n \geq 0} L^2_s (\bR^{3n})$ on which we define the Hamilton operator 
\begin{equation}\label{eq:HNF} \cH_N = \int dx \nabla_x a_x^* \nabla_x a_x + \frac{1}{2} \int dx dy N^2 V (N (x-y)) a_x^* a_y^* a_y a_x. 
\end{equation}
We consider a coherent state initial data $\Psi_N = W(\sqrt{N} \ph) \Omega$ with a $\ph \in L^2 (\bR^3)$ with $\| \ph \| = 1$. The expected number of particles in the state $\Psi_N$ is $N$. We let $\Psi_N$ evolve and try to approximate it with a new coherent state. To this end we define the fluctuation vector $\xi_{N,t} \in \cF$ by
\[ e^{-i\cH_N t} W(\sqrt{N} \ph) \Omega = W(\sqrt{N} \ph_t) \xi_{N,t}, \]
which can be rewritten as 
\[ \xi_{N,t} = \cU_N (t;0) \Omega \]
with the fluctuation dynamics
\begin{equation}\label{eq:fldyn-GP} \cU_N (t;s) = W(\sqrt{N} \ph_t)^* e^{-i\cH_N t} W(\sqrt{N} \ph_s) .
\end{equation}
Instead of choosing $\ph_t$ to solve the Gross-Pitaevskii equation (\ref{eq:GP1}), it is convenient to consider the solution of the modified Gross-Pitaevskii equation\footnote{Of course, this equation depends on $N$, and so do its solutions. So when we talk about convergence of this equation to the Gross-Pitaevskii equation we mean convergence of its $N$-dependent solutions to solutions of the Gross-Pitaevskii equation. Despite solutions being $N$-dependent we don't put an extra $N$-index to keep the notation light.}
\begin{equation}\label{eq:mod-GP} i\partial_t \ph_t = -\Delta \ph_t + (N^3 V(N\cdot) f(N\cdot) * |\ph_t|^2) \ph_t 
\end{equation}
with the same initial data $\ph_0=\ph$. As $N \to \infty$, 
\[ N^3 V(Nx) f(Nx) \to 8\pi a_0 \delta (x) \]
and the solution of the modified equation (\ref{eq:mod-GP}) can be easily shown to converge towards the solution of the Gross-Pitaevskii equation (\ref{eq:GP1}), with the rate $N^{-1}$ (at least for sufficiently regular initial data). 

Similarly as in the mean-field case, to show that $e^{-i\cH_N t} W(\sqrt{N} \ph) \Omega$ can be approximated by 
the evolved coherent state it is enough to prove that the fluctuation vector $\xi_{N,t}$ stays close to the vacuum. More precisely, it is enough to control the growth of the expectation of the number of particles operator 
\begin{equation}
\label{eq:grow-GP}
\langle \cU_N (t;0) \Omega, \cN \cU_N (t;0) \Omega \rangle 
\end{equation}
with respect to the fluctuation dynamics $\cU_N (t;s)$. To this end, we compute the (time-dependent) generator 
$\cL_N (t)$ of $\cU_N (t;s)$, defined by the equation
\[ i\partial_t \cU_N (t;s) = \cL_N (t) \cU_N (t;s), \quad \cU_N (s;s) = 1 \]
and thus given by 
\[ \cL_N (t) = \left[ i\partial_t W^* (\sqrt{N} \ph_t) \right] W(\sqrt{N} \ph_t) + W^* (\sqrt{N} \ph_t) \cH_N W(\sqrt{N} \ph_t) .\]
On the one hand we have
\begin{equation}\label{eq:GPL1} \left[ i\partial_t W^* (\sqrt{N} \ph_t) \right] W(\sqrt{N} \ph_t) = C_1 (N,t) - \sqrt{N} \left[ a^* (i\partial_t \ph_t) + a (i\partial_t \ph_t) \right] 
\end{equation}
for an unimportant constant $C_1 (N,t)$. On the other hand, using (\ref{eq:Wshift}), we find
\begin{equation}\label{eq:GPL2} \begin{split} 
W^* (\sqrt{N} \ph_t) &\cH_N W(\sqrt{N} \ph_t) \\ = &\; C_2 (N,t) + \sqrt{N} \left[ a^* (-\Delta_x \ph_t + (N^3 V(N\cdot) * |\ph_t|^2) \ph_t) + a ( -\Delta_x \ph_t + (N^3 V(N\cdot) * |\ph_t|^2) \ph_t) \right] \\ &+ \int dx \nabla_x a_x^* \nabla_x a_x + \int dx (N^3 V(N\cdot)*|\ph_t|^2) (x) a_x^* a_x \\ &+ \int dx dy N^3 V(N(x-y)) \ph_t (x) \overline{\ph}_t (y) a_x^* a_y \\ &+ \frac{1}{2} \int dx dy \, N^3 V(N (x-y)) \left[ \ph_t (x) \ph_t (y) a_x^* a_y^* + \overline{\ph}_t (x) \overline{\ph}_t (y) a_x a_y \right] \\ &+ \frac{1}{\sqrt{N}} \int dx dy N^3 V(N(x-y)) a_x^* (\ph_t (y) a_y^* + \overline{\ph}_t (y) a_y) a_x \\ &+ \frac{1}{2N} \int dx dy N^3 V(N(x-y)) a_x^* a_y^* a_y a_x \end{split} 
\end{equation}
where $C_2 (N,t)$ denotes another unimportant constant. Recall that in the mean-field regime the linear terms (linear in creation and annihilation operators) arising from the two contributions to the generator (\ref{eq:cLN}) canceled exactly due to the choice of $\ph_t$ as a solution of the Hartree equation. \emph{Here in the Gross-Pitaevskii regime, the cancellation is not complete}; in fact the creation and annihilation operators on the r.\,h.\,s.\ of (\ref{eq:GPL1}) have the argument
\[ i\partial_t \ph_t = -\Delta \ph_t + (N^3 V(N\cdot) f(N\cdot) *|\ph_t|^2) \ph_t \]
while the argument of the creation and annihilation operators appearing in the linear terms on the r.\,h.\,s.\ of (\ref{eq:GPL2}) does not contain the solution $f(N\cdot)$ of the zero-energy scattering equation (\ref{eq:scat}). It follows that 
\begin{equation}\label{eq:cLN-wrong} \begin{split} \cL_N (t) = \;& C(N,t) + \sqrt{N} \left[ a^* ((N^3 V(N\cdot) \omega(N\cdot) * |\ph_t|^2)\ph_t) + a ((N^3 V(N\cdot) \omega(N\cdot) * |\ph_t|^2)\ph_t) \right] \\ &+ \int dx \nabla_x a_x^* \nabla_x a_x + \int dx (N^3 V(N\cdot)*|\ph_t|^2) (x) a_x^* a_x \\ &+ \int dx dy N^3 V(N(x-y)) \ph_t (x) \overline{\ph}_t (y) a_x^* a_y \\ &+ \frac{1}{2} \int dx dy \, N^3 V(N (x-y)) \left[ \ph_t (x) \ph_t (y) a_x^* a_y^* + \overline{\ph}_t (x) \overline{\ph}_t (y) a_x a_y \right] \\ &+ \frac{1}{\sqrt{N}} \int dx dy N^3 V(N(x-y)) a_x^* (\overline{\ph}_t (y) a_y + \ph_t (y) a^*_y) a_x \\ &+ \frac{1}{2N} \int dx dy N^3 V(N(x-y)) a_x^* a_y^* a_y a_x \end{split} 
\end{equation}
where we set $\omega = 1 - f$ (and where $C(N,t) = C_1 (N,t) + C_2 (N,t)$ is another unimportant constant). Hence, in the Gross-Pitaevskii regime, the generator of the fluctuation dynamics (\ref{eq:fldyn-GP}) contains a contribution which is linear in creation and annihilation operators and therefore does not commute with the number of particles operator. This contribution is of order $\sqrt{N}$ (at least formally), and for this reason, it is impossible to bound the growth of the expectation of the number of particles operator (\ref{eq:grow-GP}) uniformly in $N$.

\medskip

\emph{Modified fluctuation dynamics.} The reason why this naive approach does not work is the fact that we are trying to approximate the many-body evolution of a  coherent state initial data by an evolved coherent state, completely neglecting the correlation among the particles. (Recall that coherent states in each $n$-particle component are completely factorized and thus do not have any correlations.) Since we know correlations to be very important in the Gross-Pitaevskii regime, it should not surprise us that fluctuations with respect to the evolved coherent state are too large.  

To take into account the \emph{correlation structure} developed by the time-evolution we define the kernel
\begin{equation}\label{eq:kt} k_t (x;y) = -N \omega (N (x-y)) \ph_t (x) \ph_t (y) 
\end{equation}
where, as before, $\ph_t$ denotes the solution of the modified Gross-Pitaevskii equation (\ref{eq:mod-GP}) and $\omega = 1- f$ (where $f$ is the solution of the zero-energy scattering equation (\ref{eq:scat})). Recall that $\omega (x) \simeq a_0/|x|$ for $|x| \gg 1$, while  in accordance with (\ref{eq:fprop}) it is regular for $|x| \ll 1$; it is useful to think of the function $N\omega (N(x-y))$ as $a_0 /|x-y|$ but regularized for $|x-y| \lesssim 1/N$. Using the kernel $k_t$ we define the unitary operator $T_t$, acting on the Fock space $\cF$, by
\begin{equation}\label{eq:Tt} T_t = \exp \left( \frac{1}{2} \int dx dy \left(k_t (x;y) a_x^* a_y^* - \overline{k}_t (x;y) a_x a_y \right) \right). 
\end{equation}
Since the exponent is quadratic in creation and annihilation operators, $T_t$ implements a Bogoliubov transformation. It is even possible to explicitly compute its action on the creation and annihilation operators, yielding 
\begin{equation}\label{eq:bogo-aa} \begin{split} T_t^* a (f) T_t &=  a (\text{cosh}_{k_t} f) + a^* (\text{sinh}_{k_t} \overline{f}) \\ T_t^* a^* (f) T_t &= a^* (\text{cosh}_{k_t} f) + a (\text{sinh}_{k_t} \overline{f}) \end{split} 
\end{equation}
for any $f \in L^2 (\bR^3)$. Here we use the notation $\cosh_{k_t}$ and $\sinh_{k_t}$ for the linear operators on $L^2 (\bR^3)$ given by
\[ \cosh_{k_t} = \sum_{n \geq 0} \frac{1}{(2n)!} (k_t \overline{k}_t)^n, \qquad \sinh_{k_t} = \sum_{n \geq 0} \frac{1}{(2n+1)!} (k_t \overline{k}_t)^n k_t, \]
where products of $k_t$ and $\overline{k}_t$ have to be understood as products of operators (we identify the function $k_t \in L^2 (\bR^3 \times \bR^3)$ with the operator having $k_t$ as its integral kernel). We use the unitary operator $T_t$ to approximate the correlation structure developed by the many-body evolution. We consider the evolution of initial data having the form
\begin{equation}\label{eq:PsiNW} \Psi_N = W(\sqrt{N} \ph) T_0 \xi_N 
\end{equation}
for a $\xi_N$ with only few particles (think of $\xi_N = \Omega$ for simplicity), and we try to approximate its evolution with an evolved state of the same form. As discussed in Appendix \ref{sec:gpenergy}, (\ref{eq:PsiNW}) is a natural class of initial data, approximating the ground state of the Bose-Einstein condensate trapped in a volume of order one (the point here is that the Bogoliubov transformation $T_0$ generates the correct correlation structure, which is crucial on the one hand to reach the ground state energy and, on the other hand, to 
follow the many-body dynamics). 

To approximate the evolution of (\ref{eq:PsiNW}), we define the new fluctuation vector $\xi_{N,t}$ requiring that
\[ e^{-i\cH_N t} W(\sqrt{N} \ph) T_0 \xi_N = W(\sqrt{N} \ph_t) T_t \xi_{N,t}. \]
Equivalently $\xi_{N,t} = \wt{\cU}_{N} (t;0) \xi$ with the modified fluctuation dynamics
\begin{equation}\label{eq:modi-flu} \wt{\cU}_N (t;s) = T_t^* W^* (\sqrt{N} \ph_t) e^{-i\cH_N (t-s)} W(\sqrt{N} \ph_s) T_s .
\end{equation}

Let us compute the kernel of the one-particle reduced density $\gamma_{N,t}^{(1)}$ associated with $\Psi_{N,t} = e^{-i\cH_N t} \Psi_N$:
\[\begin{split} 
\gamma_{N,t}^{(1)} (x;y) = \; & \langle \Psi_{N,t} , a_y^* a_x \Psi_{N,t} \rangle \\  = \; &\langle W(\sqrt{N} \ph_t) T_t \xi_{N,t} , a_y^* a_x W(\sqrt{N} \ph_t) T_t \xi_{N,t} \rangle  \\ = \; &\langle \xi_{N,t}, T_t^* \left( a_y^* + \sqrt{N} \overline{\ph}_t (y) \right) \left( a_x + \sqrt{N} \ph_t (x) \right) T_t \xi_{N,t} \rangle \end{split}. \]
Hence, for any one-particle observable $J$ on $L^2 (\bR^3)$, we find
\[ \tr\, J \left(\gamma_{N,t}^{(1)} - N |\ph_t \rangle \langle \ph_t| \right) = \sqrt{N} \langle \xi_{N,t} , T_t^* \left[ a^* (J\ph_t) + a (J \ph_t) \right] T_t \xi_{N,t} \rangle + \langle \xi_{N,t}, T_t^* d\Gamma (J) T_t \xi_{N,t} \rangle \]
and therefore
\[ \tr \left| \gamma_{N,t}^{(1)} - N |\ph_t \rangle \langle \ph_t| \right| \leq \sqrt{N} \langle \xi_{N,t}, T_t^* \cN^{1/2} T_t \xi_{N,t} \rangle  + \langle \xi_{N,t}, T^*_t \cN T_t \xi_{N,t} \rangle. \]
Using (\ref{eq:bogo-aa}), we conclude that \footnote{It is interesting to note that, while the introduction of $T_t$ changes the energy by a contribution of order $N$ (see Appendix \ref{sec:gpenergy}), Eq. (\ref{eq:trans}) shows that the change in the number of particles is only of order one.}
\begin{equation}\label{eq:transformN}
 T_t^* \cN T_t \leq C (\cN+1)
\end{equation}
and therefore that
\begin{equation}\label{eq:tr-bd-GP} \tr\, \left| \gamma_{N,t}^{(1)} - N |\ph_t \rangle \langle \ph_t| \right| \leq C \sqrt{N} \langle \xi_{N,t}, (\cN+1) \xi_{N,t} \rangle .
\end{equation}
This means that to get a bound on the rate of the convergence of the many-body evolution towards the Gross-Pitaevskii dynamics (proportional to $\sqrt{N}$), it is enough to control the growth of the number of particles with respect to the modified fluctuation dynamics, i.\,e.\ to control the expectation value
\begin{equation}\label{eq:wtUN} \langle \wt\cU_N (t;0) \xi , \cN \wt{\cU}_N (t;0) \xi \rangle 
\end{equation}
uniformly in $N$. 

\medskip

\emph{Generator of modified fluctuation dynamics.} To estimate the expectation (\ref{eq:wtUN}) we compute the generator of $\wt{\cU}_N (t;s)$, defined by the equation
\[ i\partial_t \wt{\cU}_N (t;s) = \wt{\cL}_N (t) \wt{\cU}_N (t;s) \qquad \text{with } \wt{\cU}_N (s;s) = 1 \]
and therefore given by 
\begin{equation}\label{eq:wtcL} \wt{\cL}_N (t) = \left( i\partial_t T_t^* \right) T_t + T_t^* \cL_N (t) T_t,  
\end{equation}
where $\cL_N (t)$ is the generator computed in (\ref{eq:cLN-wrong}). The first term on the r.\,h.\,s.\ of (\ref{eq:wtcL}) is an expression quadratic in creation and annihilation operators and can be bounded as
\[ \pm \left[ \left(i\partial_t T_t^* \right) T_t , \cN \right] \leq C(\cN+1).\]
Hence, its contribution to the growth of the number of particles operator can be controlled similarly as we bounded (\ref{eq:Ibd}) in the mean-field case. Let us now focus on the second term on the r.\,h.\,s.\ of (\ref{eq:wtcL}). As discussed between Eq.s \eqref{eq:GPL2} and \eqref{eq:cLN-wrong}, the generator $\cL_N (t)$ contains a large term (proportional to $\sqrt{N}$) linear in creation and annihilation operators. After conjugation with $T_t$, this term is given by
\begin{equation}\label{eq:lin-large} \sqrt{N} \int dx dy N^3 V(N (x-y)) \omega (N (x-y)) |\ph_t (y)|^2 \ph_t (x) T_t^* a_x^* T_t + \text{h.c.} 
\end{equation}
At the same time, the generator $\cL_N (t)$ \eqref{eq:cLN-wrong} contains a term which is cubic in creation and annihilation operators. After conjugating with $T_t$ we find this contribution to be, introducing the notation $\text{ch}_x (z) = \cosh_{k_t} (z;x)$ and $\text{sh}_x (z) = \text{sinh} (z;x)$, 
\begin{equation}\label{eq:cub1} \begin{split} 
\frac{1}{\sqrt{N}} & \int dx dy N^3 V(N(x-y)) \overline{\ph}_t (y) T_t^* a_x^* a_y a_x T_t + \text{h.c.}  \\ = \; &\frac{1}{\sqrt{N}} \int dx dy N^3 V(N (x-y)) \ph_t (y) T_t^* a_x^* T_t  (a (\text{ch}_y) + a^* (\text{sh}_y))(a (\text{ch}_x) + a^* (\text{sh}_x)) + \text{h.c.}  \\ = \; & \frac{1}{\sqrt{N}} \int dx dy N^3 V(N (x-y)) \ph_t (y) T_t^* a_x^* T_t  a^* (\text{sh}_y) (a (\text{ch}_x) + a^* (\text{sh}_x)) \\  &+ \frac{1}{\sqrt{N}} \int dx dy N^3 V(N (x-y)) \ph_t (y) T_t^* a_x^* T_t a (\text{ch}_y) a (\text{ch}_x) \\ &+ \frac{1}{\sqrt{N}} \int dx dy N^3 V(N (x-y)) \ph_t (y) T_t^* a_x^* T_t  a (\text{ch}_y) a^* (\text{sh}_x) \\ &+\text{h.c.}
\end{split} \end{equation}
In the last summand, the operators $a (\text{ch}_y)$ and $a^* (\text{sh}_x)$ are not in normal order (a product of creation and annihilation operators is said to be normal ordered if all creation operators are to the left of all annihilation operators). Putting them in normal order produces a commutator term which is linear in creation and annihilation operators. Since $\cosh_{k_t} = 1 + O(k_t)^2$ and $\sinh_{k_t} = k_t + O(k_t^3)$, we find 
\begin{equation}\label{eq:expandsinh} \langle \text{ch}_x, \text{sh}_y \rangle = \int dz \, \overline{\cosh_{k_t} (z;x)} \sinh_{k_t} (z;y) \simeq \int dz \delta (x-z) k_t (z;y) = k_t (x;y) \end{equation}
up to terms which are regular in the variable $x-y$ (more precisely, higher powers of $k_0$ have kernels that are regular on the diagonal and therefore their contribution to (\ref{eq:cub1}) can be shown to be negligible; in contrast, $k_t(x;y)$ behaves like $a_0 |x-y|^{-1}$ for $|x-y| \gg 1/N$). With this approximation, we find from (\ref{eq:cub1})
\begin{equation}\label{eq:cub2} \begin{split} \frac{1}{\sqrt{N}} \int dx dy &N^3 V(N(x-y)) \overline{\ph}_t (y) T_t^* a_x^* a_y a_x T_t + \text{h.c.}  \\ 
\simeq \; &  \frac{1}{\sqrt{N}} \int dx dy N^3 V(N (x-y)) k_t (x;y) \overline{\ph}_t (y) T_t^* a_x^* T_t + \cE + \text{h.c.}\\ 
\simeq \; & - \sqrt{N} \int dx dy N^3 V(N (x-y)) \omega (N(x-y)) |\ph_t (y)|^2 \ph_t (x) T_t^* a_x^* T_t+ \cE + \text{h.c.} 
\end{split}
\end{equation}
where $\cE$ includes all terms proportional to $T_t^* a_x^* T_t$ multiplied with a normal ordered quadratic expression in $a$ and $a^*$. It is easy to check that all terms in $\cE$ after expanding $T^*_t a^*_x T_t$ can be written in normal order up to negligible errors, and the normal ordered expression can be bounded using the number operator $\cN$. In contrast, the main term on the r.\,h.\,s.\ of (\ref{eq:cub2}) can not be bounded; however \emph{it cancels exactly the large linear contribution in (\ref{eq:lin-large})}. 

There is another important cancellation produced by the conjugation with the Bogolubov transformation $T_t$. On the one hand, we have the two quadratic contributions
\begin{equation}\label{eq:qu1} \begin{split} \int dx \nabla_x ( a^* (\text{ch}_x) &+ a (\text{sh}_x) \nabla_x (a (\text{ch}_x) + a^* (\text{sh}_x)) \\ &= \int dx dy \nabla_x a^*_x (\nabla_x k_t) (x,y) a^*_y + \cE_1
\\ &= - N^3 \int dx dy (\Delta \omega) (N(x-y)) \ph_t (x) \ph_t (y) a_x^* + \cE_2 \end{split} 
\end{equation}
and 
\begin{equation}\label{eq:qu2} \begin{split} 
\int dx dy N^3 V(N (x-y)) \ph_t (x) \ph_t (y) &(a^* (\text{ch}_x) + a (\text{sh}_x))(a^* (\text{ch}_y) + a (\text{sh}_y))  \\ &= \int dx dy N^3 V(N (x-y)) \ph_t (x) \ph_t (y) a_x^* a_y^* + \cE_3 \end{split} 
\end{equation}
where $\cE_1, \cE_2$ and $\cE_3$ denote error terms, which can be controlled by the expectation of the number of particles operator $\cN$ and of the kinetic energy operator 
\[ \cK = \int dx \nabla_x a_x^* \nabla_x a_x. \]
On the other hand, from the quartic term in (\ref{eq:cLN-wrong}) we find after conjugation with $T_t$
\begin{equation}\label{eq:qu3} \begin{split} \frac{1}{2N} \int dx dy \, N^3 V(N(x-y)) &(a^* (\text{ch}_x) + a (\text{sh}_x) (a^* (\text{ch}_y) + a (\text{sh}_y)) (a (\text{ch}_y) + a^* (\text{sh}_y)) (a (\text{ch}_x) + a^* (\text{sh}_x)) 
\\ &= \frac{1}{2N} \int dx dy N^3 V(N (x-y))  a^* (\text{ch}_x) a^* (\text{ch}_y) a (\text{ch}_y) a^* (\text{sh}_x) + \cE_4 \\ &= \int dx dy N^3 V(N (x-y)) k_t (x;y) a^*_x a^*_y  + \cE_5 \\
&= - \int dx dy N^3 V(N (x-y)) \omega (N(x-y)) \ph_t (x) \ph_t (y) a_x ^* a_y^* + \cE_5 \end{split} 
\end{equation}
where the error term $\cE_5$ can be bounded by the expectation of $\cN$, of $\cN^2/N$ and of the quartic potential term
\[ \int dx dy N^2 V(N (x-y)) a_x^* a_y^* a_y a_x \, . \]
Combining the first terms of the r.\,h.\,s.\ of (\ref{eq:qu1}), (\ref{eq:qu2}) and (\ref{eq:qu3}), we find
\begin{equation}\label{eq:canc2} N^3 \int dx dy \left[ \left(-\Delta + \frac{1}{2} V\right) (1-\omega) \right] (N(x-y)) \ph_t (x) \ph_t (y) a_x^* a_y^* = 0 
\end{equation}
because we chose $f=1-\omega$ as the solution of the zero-energy scattering equation (\ref{eq:scat}). 

Taking into account these two important cancellations, the generator of the modified fluctuation dynamics (\ref{eq:modi-flu}) can be controlled by the number of particles operator $\cN$, by $\cN^2/N$ and by the Hamiltonian (\ref{eq:HNF}), i.\,e.\ 
\begin{equation}\label{eq:wtLN-bd} \pm \wt\cL_N (t) \leq C e^{c|t|} \left( \cN + \frac{\cN^2}{N} + \cH_N \right) 
\end{equation}
for some $C, c>0$ (independent of $N$ and $t$).
The time dependence on the r.\,h.\,s.\ of the last equation arises through high Sobolev norms of the solution $\ph_t$ of the modified Gross-Pitaevskii equation (\ref{eq:mod-GP}). 

\medskip

\emph{Growth of fluctuations in the Gross-Pitaevskii regime.} The final step is to control the growth of the number of particles operator. In a very similar way as used to prove (\ref{eq:wtLN-bd}), we also obtain the two bounds   
\begin{equation}\label{eq:commGP1} \begin{split} \pm \left[ \wt\cL_N (t) , \cN \right] &\leq C e^{c|t|} \left( \cN + \frac{\cN^2}{N} + \cL_N (t) \right), \\
\pm \dot{\wt\cL}_N (t) &\leq C e^{c|t|} \left( \cN + \frac{\cN^2}{N} + \cL_N (t) \right). \end{split} 
\end{equation} 
Furthermore we have a simple bound where the number of fluctuations is just bounded by the total number of particles, the `worst case':
\[ \langle \wt\cU_N (t;0) \xi_N , \frac{\cN^2}{N} \wt\cU_N (t;0) \xi_N \rangle \leq \langle \wt\cU_N (t;0) \xi_N , \cN \wt\cU_N (t;0) \xi_N \rangle + \left\langle \xi_N , \frac{\cN^2}{N} \xi_N \right\rangle \]
With these ingredients the next proposition can be proven by Gronwall's Lemma.
\begin{proposition}\label{prop:growth-GP}
Let $V \geq 0$, spherically symmetric and short range. 
Let $\xi_N \in \cF$, with 
\[ \langle \xi_N, \cN \xi_N \rangle + \frac{1}{N} \langle \xi_N , \cN^2 \xi_N \rangle + \langle \xi_N , \cH_N \xi_N \rangle \leq C \]
for some $C>0$ (independent of $N$).
Then there exists constants $K,c > 0$ (independent of $N$ and $t$) such that 
\begin{equation}\label{eq:Nbd} \langle \wt\cU (t;0) \xi_N , \cN \wt\cU (t;0) \xi_N \rangle \leq K \exp (\exp (c |t|))
\end{equation}
for all $t \in \bR$.
\end{proposition}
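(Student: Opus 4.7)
The plan is to apply Gronwall's lemma to an auxiliary positive functional combining the expectation of $\cN$ with that of the generator $\wt\cL_N(t)$, using the three operator inequalities displayed just above the proposition together with the worst-case estimate for $\cN^2/N$.

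First I would differentiate $\Phi(t) := \langle \wt\cU_N(t;0)\xi_N, (\cN+1)\wt\cU_N(t;0)\xi_N\rangle$. Using $i\partial_t \wt\cU_N(t;s) = \wt\cL_N(t)\wt\cU_N(t;s)$ and the commutator bound in \eqref{eq:commGP1},
\begin{equation*}
|\dot\Phi(t)| \leq C e^{c|t|}\Bigl[\Phi(t) + \bigl\langle \wt\cU_N(t;0)\xi_N, \tfrac{\cN^2}{N}\wt\cU_N(t;0)\xi_N\bigr\rangle + |\Psi(t)|\Bigr],
\end{equation*}
where $\Psi(t) := \langle \wt\cU_N(t;0)\xi_N, \wt\cL_N(t)\wt\cU_N(t;0)\xi_N\rangle$. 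The $\cN^2/N$ contribution is absorbed using the worst-case estimate stated immediately before the proposition, which yields $\langle \wt\cU_N(t;0)\xi_N, \cN^2/N\, \wt\cU_N(t;0)\xi_N\rangle \leq \Phi(t) + C$ thanks to the hypothesis $N^{-1}\langle \xi_N, \cN^2 \xi_N\rangle \leq C$.

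To close the loop for $\Psi(t)$, I would exploit that $\wt\cL_N(t)$ is itself the generator, so $\dot\Psi(t) = \langle \wt\cU_N(t;0)\xi_N, \dot{\wt\cL}_N(t)\wt\cU_N(t;0)\xi_N\rangle$; by the time-derivative estimate in \eqref{eq:commGP1} this is controlled by the same expression as $|\dot\Phi(t)|$. Since $\wt\cL_N(t)$ is not positive, I would introduce
\begin{equation*}
G(t) := \Phi(t) + \Psi(t) + \kappa\, e^{c|t|},
\end{equation*}
with $\kappa > 0$ chosen large enough, in view of $\pm\wt\cL_N(t) \leq Ce^{c|t|}(\cN + \cN^2/N + \cH_N)$, to force $G(t) \geq 1$ for all $t$. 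Combining the estimates for $\dot\Phi$ and $\dot\Psi$ one should arrive at $\dot G(t) \leq K e^{c|t|} G(t)$, and Gronwall then delivers $G(t) \leq G(0)\exp\bigl((K/c)(e^{c|t|}-1)\bigr)$, which is the claimed double-exponential bound since $\Phi(t) \leq G(t)$.

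The main obstacle will be ensuring the positivity of $G(t)$, which requires dominating the $\cH_N$ term appearing in the operator lower bound on $\wt\cL_N(t)$. To handle this, I would use that $\cH_N$ is conserved by the full dynamics $e^{-i\cH_N t}$, so its expectation in $\Psi_{N,t} = W(\sqrt{N}\ph_t) T_t \wt\cU_N(t;0)\xi_N$ equals that in the initial state; combined with the fact that the Weyl and Bogoliubov conjugations $W(\sqrt{N}\ph_t)$ and $T_t$ shift $\cH_N$ only by contributions controlled by Sobolev norms of $\ph_t$ and by $\cN + \cN^2/N$, this gives a uniform-in-$N$ bound on $\langle \wt\cU_N(t;0)\xi_N, \cH_N \wt\cU_N(t;0)\xi_N\rangle$ from the hypothesis $\langle \xi_N, \cH_N \xi_N\rangle \leq C$. (Equivalently, one can enlarge $G(t)$ to include $\langle \cH_N\rangle$ and propagate it using analogues of \eqref{eq:commGP1} for $[\wt\cL_N(t), \cH_N]$.) Once the positivity is secured, the Gronwall step is mechanical, and the double exponential arises unavoidably from integrating the $e^{c|t|}$ prefactor.
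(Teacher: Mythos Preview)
Your overall strategy---apply Gronwall to a combination of $\langle\cN\rangle_t$ and $\langle\wt\cL_N(t)\rangle_t$ using the commutator and time-derivative bounds \eqref{eq:commGP1} together with the worst-case estimate for $\cN^2/N$---is exactly the approach the paper sketches, and your identification of the positivity of $G(t)$ as the crux is correct.

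There is, however, a genuine issue with your resolution (a). Conjugating $\cH_N$ by $W(\sqrt{N}\ph_t)T_t$ does \emph{not} shift it by something controlled only by $\cN+\cN^2/N$: up to constants of order $N$ and the small time-derivative terms, $T_t^*W_t^*\cH_N W_tT_t$ \emph{is} $\wt\cL_N(t)$. So relating $\langle\wt\cU_N\xi_N,\cH_N\wt\cU_N\xi_N\rangle$ to $\langle\Psi_{N,t},\cH_N\Psi_{N,t}\rangle$ via the conservation law just brings you back to the quantity $\Psi(t)=\langle\wt\cL_N(t)\rangle_t$ you are trying to control, making the argument circular.

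Your alternative (b) is the right way out, but note that it requires an additional estimate of the type $\pm i[\wt\cL_N(t),\cH_N]\leq Ce^{c|t|}(\cN+\cN^2/N+\cH_N+\wt\cL_N(t))$, which is of the same nature as \eqref{eq:commGP1} but is not among the three ingredients listed. A cleaner route, using only what is displayed: observe that the operator inequalities $\pm i[\wt\cL_N,\cN]\leq Ce^{c|t|}(\cN+\cN^2/N+\wt\cL_N)$ force the right-hand side to be a nonnegative operator (since $B\pm A\geq 0$ implies $2B\geq 0$). Hence $h(t):=\langle\cN+\cN^2/N+\wt\cL_N(t)\rangle_t\geq 0$ automatically, and after the worst-case substitution one gets $h'\leq Ce^{c|t|}h$ directly. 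The hypothesis $\langle\xi_N,\cH_N\xi_N\rangle\leq C$ is then only needed once, at $t=0$, to bound $h(0)$ via \eqref{eq:wtLN-bd}; one then extracts $\langle\cN\rangle_t$ from $h(t)$ by integrating the lower bound $\dot\Psi\geq -Ce^{c|t|}h$ to control $-\Psi(t)$. This avoids propagating $\langle\cH_N\rangle_t$ altogether.
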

The fast deterioration in time of the bound (\ref{eq:Nbd}) is a consequence of the time-dependence in (\ref{eq:wtLN-bd}) and (\ref{eq:commGP1}), arising from bounds for high Sobolev norms of $\ph_t$. If one assumes that $\| \ph_t \|_{H^4} \leq C$ uniformly in time, the bound (\ref{eq:Nbd}) becomes a simple exponential. 

\medskip

\emph{Quantitative convergence towards the Gross-Pitaevskii equation.} {F}rom (\ref{eq:tr-bd-GP}) and Proposition \ref{prop:growth-GP} we obtain the following theorem, which has been proved in \cite{BDS}.
\begin{theorem}
Let $V \geq 0$, spherically symmetric and short range. Let $\ph \in H^4 (\bR^3)$ and $\xi_N \in \cF$ such that 
\[ \langle \xi_N, \cN \xi_N \rangle + \frac{1}{N} \langle \xi_N , \cN^2 \xi_N \rangle + \langle \xi_N , \cH_N \xi_N \rangle \leq C \]
for some constant $C>0$ (independent of $N$). Consider the family of initial data
\[ \Psi_N = W(\sqrt{N} \ph) T_0 \xi_N \]
and denote by $\gamma_{N,t}^{(1)}$ the one-particle reduced density associated with the evolution $\Psi_{N,t} = e^{-iH_N t} \Psi_N$ of $\Psi_N$. Then
\[ \tr \, \left| \gamma^{(1)}_{N,t} - N |\ph_t \rangle \langle \ph_t| \right| \leq C N^{1/2} \exp (\exp (c |t|)) \]
for all $t \in \bR$.
\end{theorem}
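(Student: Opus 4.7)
The plan is to combine the two ingredients that the preceding discussion has already prepared: the trace-norm estimate \eqref{eq:tr-bd-GP}, which bounds the deviation of the reduced density from the Hartree projection by the number of fluctuations (with an a priori loss of $\sqrt{N}$), and Proposition \ref{prop:growth-GP}, which controls the growth of $\langle \cN \rangle$ along the modified fluctuation dynamics $\wt{\cU}_N(t;0)$. The whole strategy is to undo the Weyl operator $W(\sqrt{N}\ph_t)$ and the Bogoliubov transformation $T_t$ that implement the condensate and its short-scale correlation structure, and reduce everything to a Fock-space vector $\xi_{N,t} = \wt{\cU}_N(t;0)\xi_N$ whose particle content is expected to be $O(1)$ uniformly in $N$.

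More concretely, I would first write $\Psi_{N,t} = e^{-i\cH_N t}\,W(\sqrt{N}\ph)T_0\xi_N = W(\sqrt{N}\ph_t) T_t \wt{\cU}_N(t;0)\xi_N$, where $\ph_t$ solves the modified Gross-Pitaevskii equation \eqref{eq:mod-GP}. Applying the Weyl shift \eqref{eq:Wshift} to the kernel $\gamma^{(1)}_{N,t}(x;y) = \langle \Psi_{N,t}, a_y^* a_x \Psi_{N,t}\rangle$ produces the decomposition leading to \eqref{eq:tr-bd-GP}, namely
\[ \tr\bigl| \gamma^{(1)}_{N,t} - N|\ph_t\rangle\langle\ph_t|\bigr| \leq C\sqrt{N}\,\langle \xi_{N,t},(\cN+1)\xi_{N,t}\rangle, \]
using the crucial observation \eqref{eq:transformN} that $T_t^*\cN T_t \leq C(\cN+1)$, so the Bogoliubov conjugation only costs a multiplicative constant in the number operator (even though it costs $O(N)$ in the energy).

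Next, I would verify that the hypotheses of Proposition \ref{prop:growth-GP} are met: the assumption on $\xi_N$ is exactly what is required, and the regularity $\ph\in H^4(\bR^3)$ feeds into the a priori bounds on high Sobolev norms of $\ph_t$ that appear in the double-exponential constant. Applying the proposition yields $\langle \xi_{N,t},(\cN+1)\xi_{N,t}\rangle \leq K\exp(\exp(c|t|))$, and combining with the display above gives the claimed bound
\[ \tr\bigl| \gamma^{(1)}_{N,t} - N|\ph_t\rangle\langle\ph_t|\bigr| \leq C N^{1/2} \exp(\exp(c|t|)). \]
Finally, because the statement of the theorem refers to the true Gross-Pitaevskii solution while $\ph_t$ above solves the modified equation \eqref{eq:mod-GP}, I would invoke the stability of the Gross-Pitaevskii flow: the difference between the two $\ph_t$'s in $L^2$ is $O(N^{-1})$, so the corresponding one-dimensional projections differ in trace norm by $O(N^{-1})$, which is absorbed into the $N^{1/2}$ rate.

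The genuinely hard step is already packaged into Proposition \ref{prop:growth-GP}, so for the theorem itself the main subtlety is tracking the two cancellations highlighted in Section \ref{sec:GP}: the cancellation of the dangerous $\sqrt{N}$-linear term in the generator \eqref{eq:cLN-wrong} against the cubic term after conjugation with $T_t$ (via the computation \eqref{eq:cub2}), and the cancellation \eqref{eq:canc2} that uses the zero-energy scattering equation to tame the quadratic off-diagonal pieces. Without these, the number of fluctuations would grow like $N$ and the method would yield no effective bound; with them, $\wt{\cL}_N(t)$ is controlled by $\cN + \cN^2/N + \cH_N$ and a Gronwall argument closes the estimate.
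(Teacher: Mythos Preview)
Your proposal is correct and follows exactly the paper's approach: the theorem is stated immediately after Proposition \ref{prop:growth-GP} with the sentence ``From (\ref{eq:tr-bd-GP}) and Proposition \ref{prop:growth-GP} we obtain the following theorem,'' which is precisely the combination you outline. Your additional remark about passing from the modified Gross-Pitaevskii solution to the true one via the $O(N^{-1})$ stability estimate is also consistent with the paper's earlier observation below (\ref{eq:mod-GP}).
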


\section{Mean-field regime for fermionic systems}
\setcounter{equation}{0}
\label{sec:mf_fermions}

So far we have discussed the time evolution of systems of interacting bosons. In this section, we will focus instead on \emph{fermionic systems}. We will consider systems of $N$ fermions initially confined to a volume of order one by a suitable external potential, e.\,g.\ an electromagnetic trap. We are interested in the evolution of such systems resulting from a change of the external field, e.\,g.\ switching off the trap. 

The Hamilton operator of the trapped $N$ fermion gas has the form 
\be\label{eq:fermi1}
H^{\text{trap}}_{N} = \sum_{j=1}^{N} \big( -\Delta_{x_{j}} + V_{\text{ext}}(x_{j}) \big) + \lambda\sum_{i<j}^{N}V(x_{i} - x_{j})\;.
\ee
It acts on $L^2_a (\bR^{3N})$, the subspace of $L^2 (\bR^{3N})$ consisting of functions which are antisymmetric w.\,r.\,t.\ permutation of the $N$ particles. The interaction potential $V$ varies on the same length scale as the one characterizing the confining potential $V_{\text{ext}}$. For this reason, each particle interacts with all the remaining $(N-1)$ particles, producing a potential energy of the order $\lambda N^2$. The mean-field regime is defined by choosing the coupling constant $\lambda$ so that the potential energy is typically of the same size as the kinetic energy. 
While the potential energy is independent of the statistics of the particles, the antisymmetry of the wave function plays an important role for the kinetic energy. In order to understand this point it is instructive to consider the following simple example.

\medskip

\emph{The free Fermi gas.} Consider a gas of $N$ non-interacting fermions in the three-dimensional torus $\mathbb{T}^{3}$. The Hamiltonian of the system is
\[
H_{N}^{\text{trap}} = \sum_{j=1}^{N} -\Delta_{x_j}\;,
\]
acting on $L^{2}_{a}(\mathbb{T}^{3N})$. The eigenstates of $H_{N}^{\text{trap}}$ can be computed explicitly; they are given by \emph{Slater determinants}:
\[
\psi_{\text{Slater}}(x_{1},\ldots, x_{N}) = \frac{1}{\sqrt{N!}} \det (f_i (x_j))_{1 \leq i,j \leq N} = \frac{1}{\sqrt{N!}}\sum_{\pi\in S_{N}}\s_{\pi} f_{1}(x_{\pi(1)})\cdots f_{N}(x_{\pi(N)})\;,
\]
where 
\[
f_{i}(x) \equiv f_{p_{i}}(x) = e^{ip_{i}\cdot x}\;,\qquad p_{i}\in 2\pi\mathbb{Z}^{3}\;.
\]
For an arbitrary choice of $N$ pairwise distinct momenta $p_{1},\ldots, p_{N} \in 2\pi \bZ^3$, the energy of the corresponding Slater determinant is given by $\sum_{i=1}^{N} p_{i}^{2}$. Thus, the ground state of $H_{N}^{\text{trap}}$ is obtained by choosing $N$ pairwise distinct vectors $\{ p_{i}\}_{i=1}^{N}$ in $2\pi \mathbb{Z}^{3}$ minimizing $\sum_{i=1}^{N}|p_{i}|^{2}$. The condition that all momenta must be distinct follows from the observation that the Slater determinant vanishes if two momenta coincide. This is an expression of the \emph{Pauli exclusion principle}, stating that there cannot be two fermions in the same one-particle state. It follows that the ground state energy of $H^{\text{trap}}_{N}$ is approximated by filling the \emph{Fermi ball} $\{ p\in 2\pi \mathbb{Z}^{3}, |p|\leq cN^{1/3} \}$, where $c = 2\pi/(\frac{4\pi}{3})^{1/3}$. We have (up to errors of lower order) 
\[
\begin{split}
\inf\text{spec}(H^{\text{trap}}_{N}) &\simeq \sum_{|p|\leq cN^{1/3}}|p|^{2} \simeq N^{5/3}\int_{|p|\leq c}dp\, |p|^{2}\qquad \mbox{for $N\gg 1$.}
\end{split}
\]
Hence the ground state energy, which is purely kinetic in this simple example, is of the order $N^{5/3}$ and thus \emph{much larger} than $N$. 

\medskip

\emph{Fermionic mean-field regime.} The fact that the kinetic energy of a Fermi gas scales as $N^{5/3}$ is a general property of $N$ fermion systems trapped in a volume of order one. This is made rigorous by the \emph{Lieb-Thirring kinetic energy inequality}, which states that for any $\psi\in L^{2}_{a}(\mathbb{R}^{3N})$
\begin{equation}\label{eq:LT}
\langle \psi, \sum_{i=1}^{N}-\Delta_{x_{i}} \psi \rangle \geq C\int \rho_{\psi}^{5/3}(x) dx\;,
\end{equation}
for a universal constant $C$. Here we defined the density $\rho_\psi$ associated with $\psi$ through 
\be
\rho_{\psi}(x) = N\int dx_{2}\cdots dx_{N}\, |\psi(x,x_{2},\ldots , x_{N})|^{2}\;.
\ee
Applying (\ref{eq:LT}) to the ground state $\psi_N$ of $H^{\text{trap}}_{N}$, we conclude that, for well-behaved densities $\rho_\psi$, 
\be
\langle \psi_{N}, \sum_{i=1}^{N}-\Delta_{x_{i}} \psi_{N} \rangle \geq C N^{5/3}
\ee
Therefore, in order to have a nontrivial theory in the limit $N\to \infty$, we are led to choose the coupling constant $\lambda$ in Eq.\ (\ref{eq:fermi1}) as $\lambda = N^{-1/3}$. In the same spirit, the external potential should also be scaled with $N$: taking $V_{\text{ext}}(x) = N^{2/3}v_{\text{ext}}(x)$ we make sure that all components of the Hamiltonian scale as $N^{5/3}$. This completes the definition of the fermionic mean-field regime. Notice that if
\[
V_{\text{ext}}(x) = \left\{ \begin{array}{ll} 0 \quad &\text{if } x\in\Lambda \\  \infty \quad &\text{otherwise} \end{array} \right.
\]
for a domain $\Lambda\in\mathbb{R}^{3}$ the rescaling of the external potential is trivial. 

\medskip

\emph{Hartree-Fock theory.} In this regime, the ground state of the system is expected to be well approximated by a Slater determinant $\psi_{\text{Slater}} (x_1, \dots , x_N) = N!^{-1/2} \det (f_i (x_j))$ with a suitable choice of the orthonormal system $\{ f_j \}_{j=1}^N$ in $L^2 (\bR^3)$. The one-particle reduced density associated with $\psi_{\text{Slater}}$ is given by  
\be
\omega_{N} = N\tr_{2,\ldots, N}| \psi_{\text{Slater}} \rangle \langle \psi_{\text{Slater}} | = \sum_{j=1}^{N}| f_{j} \rangle \langle f_{j} |\;.
\ee
Notice that $\omega_N$ is the orthogonal projection onto the subspace of $L^2 (\bR^3)$ spanned by the orbitals $\{ f_j \}_{j=1}^N$. Slater determinants are quasi-free states; that is, they are completely characterized by their one-particle reduced density $\omega_N$ in the sense that all higher order densities can be computed through $\omega_N$ using Wick's rule \eqref{eq:wicksrule}. In particular, the energy of a Slater determinant can be expressed in terms of only $\omega_N$, through the \emph{Hartree-Fock energy functional} 
\be\label{eq:HF}
\begin{split}
\mathcal{E}_{HF}(\omega_N) =& \; \langle \psi_{\text{Slater}}, H^{\text{trap}}_{N} \psi_{\text{Slater}} \rangle \\ =& \; \tr(-\Delta + V_{\text{ext}})\omega + \frac{1}{2N^{1/3}}\int dxdy\, V(x-y) \big[\omega(x;x)\omega(y;y) - |\omega(x;y)|^{2}\big]\;.
\end{split} \ee
Hence, we expect the one-particle reduced density of the ground state of (\ref{eq:fermi1}) to be well approximated, for $\lambda = N^{-1/3}$ and $N$ large, by the minimizer of (\ref{eq:HF}) among all orthogonal projections $\omega_N$ on $L^2 (\bR^3)$ 
with $\tr\, \omega_N = N$. 

\medskip

\emph{Thomas-Fermi theory.} The Hartree-Fock energy functional (\ref{eq:HF}) still depends on $N$. So what happens if we take the limit $N \to \infty$ here? We obtain the next, coarser, degree of approximation which is \emph{Thomas-Fermi theory}. More precisely, after dividing by $N^{5/3}$, the Hartree-Fock ground state energy is expected to be  close to the minimum of the Thomas-Fermi energy functional 
\be\label{eq:TF}
\mathcal{E}_{TF}(\rho) = \frac{3}{5}c_{TF}\int dx\, \rho(x)^{5/3} + \int dx\, v_{\text{ext}}(x)\rho(x) + \frac{1}{2}\int dxdy\, V(x-y) \rho(x)\rho(y)
\ee
over all densities $\rho \in L^{5/3}(\bR^3)$ with $\rho \geq 0$ and $\| \rho \|_{1} = 1$. The minimizer $\rho_{TF}$ satisfies the Thomas-Fermi equation 
\be\label{eq:TFeq}
\begin{split}
c_{TF}\rho_{TF}(x) &= (\mu - \phi_{TF}(x))_{+}^{3/2}
\end{split}
\ee
with the Thomas-Fermi potential $\phi_{TF} = v_{\text{ext}} + V*\rho_{TF}$ generated by the trap and by the self-consistent interaction produced by $\rho_{TF}$. In (\ref{eq:TFeq}) the parameter $\mu$ plays the role of the chemical potential; it has be chosen so that $\| \rho_{TF} \|_{1} = 1$. 

Notice that $\rho_{TF}$ should be interpreted as the (normalized) configuration space density of the fermions in the ground state. With it, we can approximate the minimizer of the Hartree-Fock energy functional (\ref{eq:HF}) by the Weyl quantization 
\begin{equation}\label{eq:weyl} \wt{\omega}_N (x;y) = \frac{1}{(2\pi \eps)^3} \int dv \,  M \left( \frac{x+y}{2}, v \right) e^{iv \cdot \frac{x-y}{\eps}} 
\end{equation}
of the characteristic function $M (x,v) = \chi (|v| \leq c \rho_{TF}^{1/3} (x))$ (the constant $c > 0$ is chosen so that $\tr \, \wt{\omega}_N = N$ and is independent of $N$) \footnote{However $\wt{\omega}_N$ is only approximately an projection.}. Physically (\ref{eq:weyl}) corresponds to the idea that in analogy to the case of free fermions, the ground state of (\ref{eq:HF}) (and of the corresponding many-body Hamiltonian (\ref{eq:fermi1})) can be approximated by filling the Fermi ball; here, however, this procedure is implemented locally, with the number of particles in the local Fermi ball dictated by the Thomas-Fermi density $\rho_{TF}$.

\medskip

\emph{Dynamics.} We consider initial data close to the trapped ground state of (\ref{eq:fermi1}) and its evolution resulting from a change of the external potential. For simplicity we assume that at time $t=0$ the external traps are just switched off. 
The subsequent evolution is governed by the Schr\"odinger equation 
\[
i\partial_{\tau}\psi_{N,\tau} = H_{N}\psi_{N,\tau}\;,\qquad \psi_{N,0} = \psi_{N}
\]
with
\be
H_{N} = \sum_{i=1}^{N} -\Delta_{x_{i}} + \frac{1}{N^{1/3}}\sum_{i<j}V(x_{i} - x_{j})\;.
\ee
We now identify the relevant time scale on which the system undergoes macroscopic changes.
{F}rom the discussion above, we expect the kinetic energy per particle to be of order $N^{2/3}$; hence, the classical velocity of the particles is typically of order $N^{1/3}$. Therefore the natural time scale of the evolution is of order $N^{-1/3}$. After rescaling time by introducing the variable $t = N^{1/3} \tau$, the Schr\"odinger equation takes the form
\be\label{eq:fermi2}
iN^{1/3}\partial_{t}\psi_{N,t} = \Big( \sum_{i=1}^{N} -\Delta_{x_{i}} + \frac{1}{N^{1/3}}\sum_{i<j}V(x_{i} - x_{j}) \Big)\psi_{N,t}\;.
\ee
Setting $\e = N^{-1/3}$ and multiplying the l.\,h.\,s.\ and the r.\,h.\,s.\ of (\ref{eq:fermi2}) by $\e^{2}$, we find 
\be\label{eq:fermi3}
i\e\partial_{t}\psi_{N,t} = \Big( \sum_{i=1}^{N} -\e^{2}\Delta_{x_{i}} + \frac{1}{N}\sum_{i<j}V(x_{i} - x_{j}) \Big)\psi_{N,t}\;.
\ee
In (\ref{eq:fermi3}) we recover the $N^{-1}$ coupling constant characterizing the mean-field limit of bosonic systems. Here, however, the mean-field limit is naturally linked with a \emph{semiclassical limit} where $\eps = N^{-1/3}$ plays the role of Planck's constant. 

We are interested in the evolution \eqref{eq:fermi3} for initial data approximating Slater determinants with reduced density $\omega_N$ minimizing (\ref{eq:HF}). We expect that $\psi_{N,t}$ stays close to a Slater determinant whose one-particle reduced density evolves according to the time-dependent Hartree-Fock equation 
\begin{equation}\label{eq:t-HF}
\begin{split}
i\e\partial_{t}\omega_{N,t} &= [h_{HF}(t), \omega_{N,t}]\;, \quad \text{with } \quad  h_{HF}(t) = -\e^{2}\Delta + V*\rho_{t} - X_{t}\;,
\end{split}
\end{equation}
and with initial data $\omega_{N,0}= \omega_N$. Here, $[A,B] = AB - BA$ denotes the commutator of the operators $A$ and $B$. Moreover, $\rho_t (x) = N^{-1} \omega_{N,t} (x;x)$ is the configuration space density associated with $\omega_{N,t}$, the convolution $V* \rho_t$, called the direct term, acts as a multiplication operator on $L^2 (\bR^3)$ and $X_t$ is the exchange term, an operator with integral kernel given by $X_t (x;y) = N^{-1} V(x-y) \omega_{N,t} (x;y)$. 

Like the Hartree-Fock energy functional, also the time-dependent Hartree-Fock equation (\ref{eq:t-HF}) still depends on $N$. As $N \to \infty$, it is expected to converge towards the classical Vlasov equation. More precisely, let us define the Wigner transform 
\be
W_{N,t}(x,v) = \frac{\e^{3}}{(2\pi)^{3}}\int dy\, \omega_{N,t}\Big( x - \frac{\e y}{2}; x + \frac{\e y}{2} \Big)e^{iv\cdot y}\;
\ee
of the reduced density $\omega_{N,t}$. As $N \to \infty$, $W_{N,t}$ is expected to converge (in an appropriate sense) to a probability density $W_{\infty,t}$ on phase space which satisfies the Vlasov equation 
\[ 
(\partial_{t} + v\cdot \nabla_{x})W_{\infty,t}(x,v) = (\nabla_{x}V* \rho^{\text{vl}}_{t})(x)\cdot \nabla_{v}W_{\infty,t}(x,v)\;,
\]
where $\rho^{\text{vl}}(x) = \int dv\, W_{\infty,t}(x,v)$ is the configuration space density associated to $W_{\infty,t}$. The convergence of the Hartree-Fock (or Hartree\footnote{The exchange term $X_t$ can be neglected with an error smaller than that of the step from many-body quantum mechanics to Hartree-Fock, see \cite{BPS1}.}) dynamics towards the Vlasov evolution has been proved to hold in a weak topology for sufficiently regular initial data $W_{N,0}$ and for a large class of potentials including the Coulomb potential in \cite{LP}. More recently, precise bounds on the rate of the convergence from the Hartree equation to the Vlasov equation for regular initial data and for regular potentials have been established in \cite{APPP,BPSS}. Notice, however, that in all these works the assumptions on the initial data exclude the case that $W_{N,0} (x,v) \simeq \chi (|v| \leq c \rho^{1/3} (x))$ for any reasonable density $\rho$, which would be interesting to consider for equilibrium states at zero temperature. 

\medskip

\emph{Semiclassical structure.} To establish the convergence of the many-body evolution to the time-dependent Hartree-Fock equation (\ref{eq:t-HF}), it is very important to take into account the structure of the initial Slater determinant. We are particularly interested in Slater determinants whose reduced density $\omega_N$ is chosen close to the minimizer of the Hartree-Fock energy (\ref{eq:HF}). In this situation we expect the integral kernel $\omega_N (x;y)$ to decay to zero for $|x - y|\gg \e$. At the same time, we expect it to be regular and to vary on scales of order one in the $x+y$ direction. 

To explain this point better, let us consider again the free Fermi gas on the torus. In this case, the one-particle reduced density of the ground state is given by 
\be
\omega_{N}(x;y) = \sum_{|p|\leq cN^{1/3}}e^{ip\cdot(x - y)}\simeq N\int_{|p|\leq c} dp\, e^{ip\cdot (x - y)/\e} = N\varphi\Big( \frac{x-y}{\e} \Big)\qquad \mbox{as $N\to\infty$,}
\ee
for some function $\varphi$ which can be explicitly calculated. As expected, $\omega_N$ varies on the short scale $\eps$ in the $x-y$ variable. In this elementary case the $(x+y)$-dependence is trivial because $V_{\text{ext}} (x) = 0$ in the box $\Lambda$. 

The formula (\ref{eq:weyl}) suggests that the approximation $\wt{\omega}_N$ of the minimizer of the Hartree-Fock energy functional (\ref{eq:HF}) for an interacting system in the mean-field regime exhibits the same semiclassical structure; it varies on the scale $\eps$ in the $(x-y)$-variable and on a scale of order one in its $(x+y)$-dependence. 

A convenient way to quantify the separation of scales  of $\wt{\omega}_N$ is given by estimates on the commutators
\be\label{eq:commutators}
[\wt\omega_N, x]\quad \text{ and } \quad [\wt\omega_N, \e\nabla]\;.
\ee
The integral kernel of $[\wt{\omega}_N,x]$ is given by 
\[ [\wt{\omega}_N ,x] (x;y) = (x-y) \wt{\omega}_N (x;y) = \frac{i\eps}{(2\pi)^3} \int dv \, (\nabla_v M) \left(
\frac{x+y}{2}, v \right) e^{iv \cdot \frac{x-y}{\eps}} \]
while
\[ [ \wt{\omega}_N, \eps \nabla ] (x;y) = -\eps \left( \nabla_x + \nabla_y \right) \wt{\omega}_N (x;y) = -\frac{\eps}{(2\pi)^3} \int dv  \,(\nabla_x M) \left( \frac{x+y}{2} , v \right) e^{iv \cdot \frac{x-y}{\eps}}. \]
Hence, taking $M(x,v) = \chi (|p| \leq c \rho^{1/3}_{TF} (x))$, semiclassical analysis suggests that 
\begin{equation}\label{eq:comm1} \tr \, \left|[\wt{\omega}_N ,x] \right| \lesssim C N \eps \int dx dv \left| (\nabla_v M) (x,v) \right| \leq C N\eps   
\int dx \rho^{2/3}_{TF} (x) \leq C N\eps 
\end{equation}
and 
\begin{equation}\label{eq:comm2} \tr\, \left| [\wt{\omega}_N,\eps \nabla] \right| \lesssim C N\eps \int dx |\nabla \rho_{TF} (x)| < C N\eps 
\end{equation}
under reasonable assumptions on $V$ and $V_{\text{ext}}$ (to guarantee that $\rho_{TF} \in L^{2/3} (\bR^3)$ and $\nabla \rho_{TF} \in L^1 (\bR^3)$ with $N$-independent norms). 

In the following we will study the time evolution of approximate Slater determinants with one-particle reduced density satisfying the commutator estimates (\ref{eq:comm1}) and (\ref{eq:comm2}). This assumption is motivated by the arguments given above, since we are interested in the evolution of initial data approximating the ground state of (\ref{eq:fermi1}) in the mean-field regime. 

\medskip

\emph{Rigorous results.} The mathematical literature concerning the derivation of effective evolution equations for interacting fermions is much smaller than for the bosonic case. In the mean-field limit we are interested in, the first derivation of the \emph{Vlasov equation} from quantum many-body dynamics has been given in \cite{NS} for analytic interaction potentials. Shortly after, this result has been extended in \cite{Sp} to twice differentiable potentials. The first derivation of the \emph{Hartree-Fock equation} has been given much later in \cite{EESY}, for analytic interaction potentials and short times. 

In the rest of this section we will present the results of \cite{BPS1}, where the convergence towards the Hartree-Fock equation has been established with control of the rate of convergence for a much larger class of potentials and for all times (notice that the results of \cite{BPS1} were also extended to the case of pseudorelativistic particles in \cite{BPS2}). 

Concerning other scaling limits, the Hartree-Fock equation has been derived in \cite{BGGM} and (for a Coulomb potential) \cite{FK} starting from a many-body Hamilton operator of the form (\ref{eq:fermi1}), but with $\lambda= N^{-1}$ (in this case, there is no link to a semiclassical limit).  Recently, convergence in presence of a regularized Coulomb interaction with a $N^{-2/3}$ coupling constant has been given in \cite{PP}.

\medskip

\emph{Fock space.} In order to discuss the results of \cite{BPS1} we need to introduce the fermionic Fock space over $L^2 (\bR^3)$, which is defined as the direct sum \[ \cF = \mathbb{C}\oplus\bigoplus_{n\geq 1}L^{2}_{a}(\mathbb{R}^{3n}) \] where $L^2_a (\bR^{3n})$ denotes the subspace of $L^2 (\bR^{3n})$ consisting of all functions antisymmetric with respect to permutations of the $n$ particles. 

As in the bosonic case, we let $\Omega = \{ 1, 0, 0, \dots \}$ denote the vacuum and define the number of particles operator by $(\cN \Psi)^{(n)} = n \psi^{(n)}$ for all $\Psi = \{ \psi^{(n)} \}_{n \in \bN} \in \cF$. Moreover, for all $f \in L^2 (\bR^3)$, we introduce the creation and annihilation operators $a^* (f)$ and $a(f)$ by setting
\be\label{eq:fermops}
\begin{split}
(a^{*}(f)\Psi)^{(n)}(x_{1},\ldots, x_{n}) &= \frac{1}{\sqrt{n}}\sum_{j=1}^{n}(-1)^{j} f(x_{j})\psi^{(n-1)}(x_{1},\ldots , x_{j-1}, x_{j+1},\ldots, x_{n})\;, \\
(a(f)\Psi)^{(n)}(x_{1},\ldots, x_{n}) &= \sqrt{n+1}\int dx\, \overline{f(x)}\psi^{(n+1)}(x,x_{1},\ldots, x_{n})\; , 
\end{split}
\ee
for $\Psi = \{ \psi^{(n)} \}_{n\in \bN} \in \cF$. It is easy to see that $a^{*}(f)$ is the adjoint of $a(f)$ and that fermionic creation and annihilation operators satisfy the  canonical anticommutation relations
\[
\{a^{*}(f), a^{*}(g)\} = \{a(f), a(g)\} = 0\;,\qquad \{a^{*}(f), a(g)\} = \langle g, f \rangle
\]
for all $f, g\in L^{2}(\mathbb{R}^{3})$. (The anticommutator is defined as $\{A,B\} = AB + BA$ for any two operators $A$ and $B$.) An important consequence of the canonical anticommutation relations is that $a(f)$ and $a^{*}(f)$ are \emph{bounded} operators (unlike in the bosonic case), since 
\be\label{eq:bounded}
\| a(f)\Psi \|^{2} = \langle \Psi, a^{*}(f)a(f) \Psi\rangle = -\langle \Psi, a(f)a^{*}(f) \Psi \rangle + \|f\|^{2}_{2} \leq \|f\|_{2}^{2}\qquad \forall \Psi\in\cF\;.
\ee
It is simple to check that $\| a^* (f) \| = \| a(f) \| = \|f\|_{2}$. 

As in the bosonic case, it is useful to introduce operator-valued distributions $a^{*}_{x}$ and $a_{x}$, which allow us to write  
\be
a(f) = \int dx\, \overline{f(x)} a_{x}\;,\qquad a^{*}(f) = \int dx\, f(x)a^{*}_{x}\;.
\ee
The second quantization of operators on the Fock space is defined exactly as for the bosonic case; see Section \ref{sec:cohe}.

\medskip

\emph{Fermionic Bogoliubov transformations.} It turns out that $N$-particle Slater determinants can be obtained in the Fock space $\cF$ by applying appropriate Bogoliubov transformations on the vacuum vector $\Omega$. To explain this point in more detail, let us define fermionic Bogoliubov transformations, similarly as we did in Section \ref{sec:fluc-bos} for the bosonic case. 

For $f,g\in L^{2}(\mathbb{R}^{3})$, we define 
 \[ A(f,g) = a(f) + a^* (\bar{g}) . \]
Observe that
\begin{equation}\label{eq:AA*-fer} A^* (f,g) = A (\mathcal{J}(f,g))
\end{equation}
where the operator $\mathcal{J}$ is defined by $\mathcal{J} (f,g) = (\bar{g}, \bar{f})$ (as in (\ref{eq:AA*})). In terms of these operators, the canonical anticommutation relations assume the form
\begin{equation}\label{eq:commA} \begin{split} 
\left\{ A (f_1, g_1) , A^* (f_2, g_2) \right\} &= \{ a (f_1) , a^* (f_2) \} + \{ a^* (\bar{g}_1) , a (\bar{g}_2) \} \\ &= \langle f_1, f_2 \rangle + \langle \bar{g}_2 , \bar{g}_1 \rangle = \langle f_1 , f_2 \rangle + \langle g_1 , g_2 \rangle \\ &= \langle (f_1, g_1) , (f_2, g_2) \rangle_{L^2 \oplus L^2}. \end{split} \end{equation}
A fermionic Bogoliubov transformation is a linear map $\nu: L^2 (\bR^3)  \oplus L^2 (\bR^3) \to L^2 (\bR^3) \oplus L^2 (\bR^3)$ preserving (\ref{eq:AA*-fer}) and (\ref{eq:commA}) in the sense that these properties continue to hold for the new field operators $B(f,g) := A (\nu (f,g))$. It is easy to see that $\nu$ is a fermionic Bogoliubov transformation if and only if $\nu$ is unitary and $\mathcal{J}\nu = \nu \mathcal{J}$. Equivalently, a linear map $\nu: L^2 (\bR^3)  \oplus L^2 (\bR^3) \to L^2 (\bR^3) \oplus L^2 (\bR^3)$ is a fermionic Bogoliubov transformation if and only if it has the form
\begin{equation}\label{eq:Nu} \nu =  \left( \begin{array}{ll} u & \overline{v} \\ v & \overline{u} \end{array} \right) \end{equation}
where $u,v: L^2(\mathbb{R}^3) \to L^2(\mathbb{R}^3)$ are linear maps with 
\be\label{eq:bogrelations}
u^* u + v^* v = 1\;, \qquad u^* \overline{v} + v^* \overline{u} = 0. 
\ee
As in the bosonic case, a fermionic Bogoliubov transformation $\nu$ is said to be implementable on $\cF$ if there exists a unitary operator $R_\nu: \cF \to \cF$ such that 
\begin{equation}\label{eq:Bog} R_\nu^* A(f,g) R_\nu = A (\nu (f,g)) \end{equation}
for all $f,g \in L^2 (\bR^3)$. By the Shale-Stinespring condition, the fermionic Bogoliubov transformation $\nu$ is implementable if and only if $v$ is a Hilbert-Schmidt operator (see e.\,g.\ \cite[Theorem 9.5]{Solovej} or \cite{Ru}). 

It is useful to see how the unitary $R_{\nu}$ transforms the fermionic operators and the corresponding operator-valued distributions. We have:
\be\label{eq:Bog2}
R_{\nu}^{*}a(f) R_{\nu} = R_{\nu}^{*}A(f,0)R_{\nu} = a(u f) + a^{*}(\overline{v} \overline{f})
\ee
for all $f\in L^{2}(\mathbb{R}^{3})$. Setting $a(f) = \int dx\, a_{x} \overline{f(x)}$, Eq.\ (\ref{eq:Bog2}) is equivalent to
\be\label{eq:trans}
R^{*}_{\nu}a_{x} R_{\nu} = a(u_{x}) + a^{*}(\overline{v}_{x})\;,
\ee
Here we introduced the notations $u_x (y) := u (y;x)$ and $v_x (y) = v(y;x)$, where $u(x;y)$ and $v(x;y)$ are the integral kernels of the operators $u$ and $v$. 

\medskip

\emph{Quasi-free states.} For any implementable Bogoliubov transformation $\nu : L^2 (\bR^3)  \oplus L^2 (\bR^3) \to L^2 (\bR^3) \oplus L^2 (\bR^3)$, the Fock space vector $R_\nu \Omega$ describes a quasi-free state; all correlation functions of $R_\nu \Omega$ can be computed through one-particle correlations by means of Wick's rule. In this sense a quasi-free state $\Psi \in \cF$ is completely determined by its one-particle reduced density $\gamma^{(1)}_N$, with the kernel 
\[ \gamma^{(1)}_N (x;y) = \langle \Psi, a_y^* a_x \Psi \rangle \]
and by its pairing density $\alpha^{(1)}_N$, defined by 
\[ \alpha^{(1)}_N (x,y) = \langle \Psi, a_y a_x \Psi \rangle \]
 
Let $\omega_N$ denote an orthogonal projection on $L^2 (\bR^3)$ with $\tr \, \omega_N = N$. Then we find an orthonormal system $\{ f_j \}_{j=1}^N$ in $L^2 (\bR^3)$ such that $\omega_N = \sum_{j=1}^N |f_j \rangle \langle f_j|$. We define $u_N = 1- \omega_N$ and $v_N = \sum_{j=1}^N |\bar{f}_j \rangle \langle f_j|$. Observe that $u_N^* u_N + v_N^* v_N = 1$ and $u_N^* \overline{v}_N = v_N^* \overline{u}_N = 0$. Hence (\ref{eq:bogrelations}) is satisfied and
\[ \nu_N =  \left( \begin{array}{ll} u_N & \overline{v}_N \\ v_N & \overline{u}_N \end{array} \right) \]
defines an implementable (since $\tr\, v_N^* v_N = \tr\, \omega_N = N < \infty$) Bogoliubov transformation. This implies that $R_{\nu_N} \Omega$ is a quasi-free state. Its one-particle reduced density has the integral kernel
\[ \begin{split} \gamma_N^{(1)} (x;y) &= \langle R_{\nu_N} \Omega, a_y^* a_x R_{\nu_N} \Omega \rangle \\ &= \langle \Omega, [ a^* (u_{N,y}) + a (\bar{v}_{N,y})] [a (u_{N,x}) + a^* (\bar{v}_{N,x})] \Omega \rangle \\ &= \langle \Omega, \{ a (\bar{v}_{N,y}), a^* (\bar{v}_{N,x}) \} \Omega \rangle = (v_N^* v_N) (x;y) \\ &= \omega_N (x;y) \end{split} \]
The pairing density of $R_{\nu_N} \Omega$ is given by
\[ \begin{split} \alpha^{(1)}_N (x,y) &= \langle R_{\nu_N} \Omega, a_y a_x R_{\nu_N} \Omega \rangle \\ &= \langle \Omega, [a(u_{N,y}) + a^{*}(\overline v_{N,y})][a(u_{N,x}) + a^{*}(\overline v_{N,x})] \Omega \rangle \\ &= \langle \Omega, \{ a(u_{N,y}), a^{*}(\overline v_{N,x}) \} \Omega \rangle \\ &= (\overline{v}_N u_N) (x;y) = 0
\end{split}
\]

Having the same one-particle reduced density and pairing density as the $N$-particle Slater determinant $\psi_{\text{Slater}} (x_1, \dots , x_N) = N!^{-1/2} \det (f_i (x_j))$, we conclude that up to a trivial phase $R_{\nu_N} \Omega = \{ 0, 0, \dots , \psi_{\text{Slater}}, 0, 0 \dots \}$. 

Alternatively the unitary operator $R_{\nu_{N}}$ can be introduced as a \emph{particle-hole transformation} on $\cF$. As above, let $\omega_N = \sum_{j=1}^N |f_j \rangle \langle f_j|$ for an orthonormal system $\{ f_j \}_{j=1}^N$. Moreover, let us complete $\{ f_j \}_{j \in \bN}$ to an orthonormal basis of $L^2 (\bR^3)$. Then, Eq.\ (\ref{eq:Bog2}) implies that
\be\label{eq:particle-hole}
R_{\nu_{N}} a(f_{i}) R_{\nu_{N}}^{*} = R^{*}_{\nu_{N}^{*}}a(f_{i})R_{\nu_{N}^{*}} = R^{*}_{\nu_{N}}a(f_{i})R_{\nu_{N}} = \left\{ \begin{array}{ll} a(f_{i})\quad &\text{if } i > N \\  a^{*}(f_{i}) \quad &\text{if } i \leq N \end{array} \right.. 
\ee
Being unitary, $R_{\nu_{N}}$ defines new fermionic creation and annihilation operators $b^* (f) = R_{\nu_{N}}a^*(f) R^{*}_{\nu_{N}}$ and $b(f) = R_{\nu_{N}}a(f) R^{*}_{\nu_{N}}$, whose vacuum is the Slater determinant $R_{\nu_{N}}\Omega$.

\medskip

\emph{Dynamics of quasi-free states.} As in the bosonic case, we define the Fock space Hamilton operator 
\be\label{eq:fockH}
\mathcal{H}_{N} = \eps^2 \int dx\, \nabla_{x}a^{*}_{x} \nabla_{x}a_{x} + \frac{1}{2N} \int dxdy\, V(x-y) a^{*}_{x}a^{*}_{y}a_{y}a_{x}\;.
\ee
Notice that on the $N$-particle sector of $\cF$, $\cH_N$ coincides with the Hamilton operator appearing on the r.\,h.\,s.\ of the Schr\"odinger equation (\ref{eq:fermi3}). The next theorem describes the time-evolution generated by (\ref{eq:fockH}) on approximate Slater determinants. 
\begin{theorem}\label{thm:HF}
Let $V\in L^{1}(\mathbb{R}^{3})$ such that \begin{equation}\label{eq:Vass0} \int dp\, |\hat{V}(p)|(1 + |p|^{2}) < \infty.
\end{equation}
Let $\omega_{N}$ be a sequence of orthogonal projections on $L^2 (\bR^3)$ with $\tr \, \omega_N = N$ and 
\be\label{eq:semiclassicalmain}
\tr\, |[x, \omega_{N}]| \leq KN\eps, \qquad \tr\, |[\e\nabla, \omega_{N}]| \leq KN\eps\;,
\ee
for some constant $K>0$. Let $\nu_N$ be the fermionic Bogoliubov transformation associated with the projection $\omega_N$ (as constructed in the last paragraph). 

Let $\xi_N$ be a sequence in $\cF$ with $\langle \xi_N , \cN \xi_N \rangle \leq C$ and denote by $\gamma^{(1)}_{N,t}$ the one-particle reduced density associated with the Fock space vector \[ \Psi_{N,t} = e^{-i\cH_{N} t/\eps} R_{\nu_{N}}\xi_N.\] Then there exist constants $C,c>0$ such that
\be\label{eq:mainresult}
\norm{\gamma^{(1)}_{N,t} - \omega_{N,t}}_{\text{HS}}^{2} \leq C\exp(c\exp c|t|) 
\ee
for all $t \in \bR$. Here $\omega_{N,t}$ is the solution of the time-dependent Hartree-Fock equation
\be\label{eq:HFmain}
i\e\partial_{t} \omega_{N,t} = [-\e^{2}\Delta + V*\rho_{t} - X_{t}, \omega_{t}]
\ee
with initial data $\omega_{N,0} = \omega_{N}$ (recall that $\rho_t (x) = N^{-1} \omega_N (x;x)$ and $X_t (x;y) = N^{-1} V(x-y) \omega_N (x;y)$). Under the additional assumptions that $\langle \xi_N, \cN^2 \xi_N \rangle \leq C$ and $d\Gamma (\omega_N) \xi_N = 0$, we also have the trace-norm bound
\begin{equation}\label{eq:main-trace} \tr\, |\gamma^{(1)}_{N,t} - \omega_{N,t}| \leq C N^{1/6}\exp(c\exp c|t|)\; .
\end{equation}
\end{theorem}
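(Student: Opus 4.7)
The overall strategy mirrors the coherent state approach of Section \ref{sec:cohe}, but with the particle-hole Bogoliubov transformation $R_{\nu_N}$ playing the role that the Weyl operator $W(\sqrt{N}\ph)$ played for bosons. Let $\omega_{N,t}$ denote the solution of the Hartree-Fock equation \eqref{eq:HFmain} with initial datum $\omega_N$; it remains an orthogonal projection of rank $N$, so the corresponding Bogoliubov transformation $\nu_{N,t}$ (built as in the last paragraph before Theorem \ref{thm:HF} from $u_{N,t}=1-\omega_{N,t}$ and $v_{N,t}=\sum_j|\bar f_{j,t}\rangle\langle f_{j,t}|$) is implementable on $\cF$. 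I introduce the fluctuation dynamics
\[
\cU_N(t;s) \;=\; R_{\nu_{N,t}}^{\,*}\, e^{-i\cH_N(t-s)/\eps}\, R_{\nu_{N,s}},
\]
and the associated fluctuation vector $\xi_{N,t}=\cU_N(t;0)\xi_N$, so that $\Psi_{N,t}=R_{\nu_{N,t}}\xi_{N,t}$.

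The first step is to reduce the Hilbert-Schmidt bound to the expectation of $\cN$ in $\xi_{N,t}$. Using \eqref{eq:trans}, the anticommutation relations, and the identities $v_{N,t}^{\,*}v_{N,t}=\omega_{N,t}$, $u_{N,t}^{\,*}\overline{v}_{N,t}=0$, a direct expansion of $\langle\Psi_{N,t},a_y^{\,*}a_x\Psi_{N,t}\rangle$ yields the kernel identity
\[
(\gamma^{(1)}_{N,t}-\omega_{N,t})(x;y) \;=\; \langle\xi_{N,t},\bigl[a^{*}(u_{N,t,y})a(u_{N,t,x})-a^{*}(\overline v_{N,t,x})a(\overline v_{N,t,y})+a^{*}(u_{N,t,y})a^{*}(\overline v_{N,t,x})+a(\overline v_{N,t,y})a(u_{N,t,x})\bigr]\xi_{N,t}\rangle.
\]
Taking HS norms, using $\|a(f)\|=\|a^{*}(f)\|=\|f\|_2$ and Cauchy-Schwarz in $x,y$, together with $u_{N,t}^{\,*}u_{N,t}\leq 1$ and $\tr\, v_{N,t}^{\,*}v_{N,t}=N$ (the latter to contract one $v$-factor with its adjoint), I expect a bound of the form
\[
\|\gamma^{(1)}_{N,t}-\omega_{N,t}\|_{\mathrm{HS}}^{\,2}\;\leq\; C\,\langle\xi_{N,t},\cN\,\xi_{N,t}\rangle.
\]

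The remaining, and main, task is to control the growth of $\langle\xi_{N,t},\cN\xi_{N,t}\rangle$. Writing $i\eps\,\partial_t\cU_N(t;s)=\widetilde\cL_N(t)\cU_N(t;s)$ with
\[
\widetilde\cL_N(t)\;=\;\bigl[i\eps\,\partial_t R_{\nu_{N,t}}^{\,*}\bigr]R_{\nu_{N,t}} + R_{\nu_{N,t}}^{\,*}\,\cH_N\,R_{\nu_{N,t}},
\]
one conjugates the kinetic and interaction parts of $\cH_N$ through \eqref{eq:trans} and expands. The terms that are dangerous for the number operator are those not commuting with $\cN$: a linear (in $a,a^{*}$) contribution proportional to $N$ and a quadratic $a^{*}a^{*}+aa$ piece. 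The key cancellation, completely analogous to the Hartree cancellation in Section \ref{sec:cohe}, is that the linear part vanishes because $\omega_{N,t}$ solves \eqref{eq:HFmain}: the direct and exchange contributions produced by conjugation combine exactly into $[h_{HF}(t),\omega_{N,t}]$, which matches $i\eps\,\partial_t\omega_{N,t}$ and cancels the term coming from $\partial_t R_{\nu_{N,t}}^{\,*}$. What remains are a quadratic off-diagonal ``pairing'' operator and a quartic interaction, and the commutator $[\widetilde\cL_N(t),\cN]$ reduces to interaction terms involving $V*\rho_t$, $X_t$ and the off-diagonal $a^{*}a^{*}$-kernel. Using $\|a(f)\|\leq\|f\|_2$, the assumption \eqref{eq:Vass0}, and crucially the semiclassical commutator bounds \eqref{eq:semiclassicalmain} to control terms of the form $\tr\,|V(x-\cdot)[u_{N,t},\cdot]|$ or $\tr\,|[\eps\nabla,\omega_{N,t}]V(x-\cdot)|$ (these are what prevent the pairing kernel from being too singular), I expect to obtain
\[
\bigl|\partial_t\langle\xi_{N,t},\cN\,\xi_{N,t}\rangle\bigr|\;\leq\; C\,e^{c|t|}\bigl(\langle\xi_{N,t},\cN\,\xi_{N,t}\rangle+1\bigr),
\]
where the time dependence arises from propagation of the semiclassical structure under \eqref{eq:HFmain}; the argument that \eqref{eq:semiclassicalmain} is preserved (with time-dependent constants) along the Hartree-Fock flow is a separate ingredient that I would establish by Duhamel and Gronwall on the commutators $[x,\omega_{N,t}]$, $[\eps\nabla,\omega_{N,t}]$. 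Gronwall then gives $\langle\xi_{N,t},\cN\xi_{N,t}\rangle\leq C\exp(c\exp c|t|)$, which combined with the HS reduction yields \eqref{eq:mainresult}.

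The trace-norm bound \eqref{eq:main-trace} is more delicate because $\gamma^{(1)}_{N,t}-\omega_{N,t}$ need not have bounded rank. I would proceed by duality, estimating $|\tr J(\gamma^{(1)}_{N,t}-\omega_{N,t})|$ for $J$ compact with $\|J\|\leq 1$. Expanding as above produces four terms; the diagonal terms $\tr(u_{N,t}Ju_{N,t}^{\,*}d\Gamma(\cdot))$ are controlled by $\|J\|\,\langle\xi_{N,t},\cN\xi_{N,t}\rangle$, while the off-diagonal terms $\tr(u_{N,t}^{\,*}J\,\overline v_{N,t}\,\cdot)$ must be bounded using a Cauchy-Schwarz that trades operator norm of $J$ against an extra factor of $\langle\xi_{N,t},\cN^{2}\xi_{N,t}\rangle^{1/2}$ and a HS-type contribution. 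The condition $d\Gamma(\omega_{N})\xi_N=0$ enforces that the fluctuations are initially supported on ``particle'' excitations orthogonal to the Fermi sea, which when propagated provides the orthogonality needed to trade $\sqrt{N}$ for a smaller power; optimizing over these estimates with $\langle\xi_N,\cN^2\xi_N\rangle\leq C$ and the already-established $\|\gamma^{(1)}_{N,t}-\omega_{N,t}\|_{\mathrm{HS}}=O(1)$ is what I expect to produce the stated $N^{1/6}$. The hardest part of the whole scheme is this last interpolation and the precise propagation of the semiclassical commutator estimates through the Hartree-Fock flow, since all quantitative bounds on the generator ultimately rely on them.
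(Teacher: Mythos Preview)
Your overall architecture is right and matches the paper: define the fluctuation dynamics $\cU_N(t;s)=R_{\nu_{N,t}}^{*}e^{-i\cH_N(t-s)/\eps}R_{\nu_{N,s}}$, reduce the Hilbert--Schmidt estimate to control of $\langle\xi_{N,t},\cN\xi_{N,t}\rangle$, close by Gronwall using the semiclassical commutator bounds, and propagate those bounds along the Hartree--Fock flow. Your HS reduction and your plan for Proposition~\ref{thm:propagsc} are essentially the paper's.

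Where your write--up goes astray is the description of the generator and of \emph{what} cancels. There is no ``linear (in $a,a^{*}$) contribution proportional to $N$'' in the fermionic problem: unlike the bosonic Weyl operator, the Bogoliubov map $R_{\nu_{N,t}}$ sends fields to linear combinations of fields (no $c$--number shift), so conjugating the quadratic kinetic part and the quartic interaction part of $\cH_N$ produces only terms of even degree in $a,a^{*}$. The Hartree--Fock cancellation therefore cannot occur at the linear level; it is a cancellation among \emph{quadratic} contributions. Relatedly, your claim that ``what remains are a quadratic off--diagonal pairing operator and a quartic interaction'' is not what happens: after the cancellation and normal ordering, the terms that drive the growth of $\cN$ are purely quartic in the new fields, carrying the prefactor $1/N$ (see \eqref{eq:HFgenerator}). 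The semiclassical input is used precisely there: one bounds $\|[e^{ip\cdot x},\omega_{N,t}]\|_{\text{HS}}^{2}\le 2\,\tr|[e^{ip\cdot x},\omega_{N,t}]|\le C|p|\,N\eps\,e^{c|t|}$, and the resulting factor $N\eps/N=\eps$ matches the $\eps$ on the left of \eqref{eq:HFgenerator}. So the semiclassical bounds do not tame a quadratic pairing kernel; they compensate the $1/N$ in the surviving quartic terms.

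The paper also takes a shortcut you do not mention, and it is worth knowing: instead of expanding the full generator $\widetilde\cL_N(t)$, it uses the identity $R_{\nu_{N,t}}\cN R_{\nu_{N,t}}^{*}=\cN-2\,d\Gamma(\omega_{N,t})+N$ (eq.~\eqref{eq:rot-N}) to write $\langle\cU_N(t;0)\xi_N,\cN\,\cU_N(t;0)\xi_N\rangle$ as an expectation in the \emph{unconjugated} evolution, and only then differentiates. The kinetic contribution cancels via $[d\Gamma(-\eps^2\Delta),d\Gamma(\omega_{N,t})]=d\Gamma([-\eps^2\Delta,\omega_{N,t}])$, and after one Bogoliubov conjugation and normal ordering one lands directly on \eqref{eq:HFgenerator}. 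This avoids the bookkeeping of the full generator and makes the role of the Hartree--Fock equation transparent. Your generator approach can be made to work, but the structural description must be corrected as above. (The trace--norm bound \eqref{eq:main-trace} is not proved in the paper either; it is deferred to \cite{BPS1}.)
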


\emph{Remarks.} 
\begin{itemize}
\item The r.\,h.\,s.\ of (\ref{eq:mainresult}) and (\ref{eq:main-trace}) should be compared with the size of $\| \gamma^{(1)}_{N,t} \|_{\text{HS}} \simeq N^{1/2}$ and of $\tr \, \gamma_{N,t}^{(1)} = N$. In this sense, (\ref{eq:mainresult}) and (\ref{eq:main-trace}) prove that the Hartree-Fock equation gives a good approximation to the many-body evolution in the fermionic mean-field regime. 
\item The vector $\xi_N$ describes initial deviations from the Slater determinant. Taking $\xi_N = \Omega$, the initial data is exactly the $N$-particle Slater determinant with reduced density $\omega_N$. As long as $\langle \xi_N, \cN \xi_N \rangle \leq C$ uniformly in $N$, the initial data $R_{\nu_N} \xi_N$ is still close to a Slater determinant, at least in the sense of reduced densities (choosing $t=0$ in (\ref{eq:mainresult}) we observe  that $\| \gamma_{N,0}^{(1)} - \omega_N \|_{\text{HS}} \leq C$). 

\item It is simple to extend (\ref{eq:mainresult}) to $\xi_N \in \cF$ with $\langle \xi_N, \cN \xi_N \rangle \leq N^\alpha$ for any $\alpha < 1$ (if $\alpha = 1$, the deviations from the Slater determinant are as big as the Slater determinant). Under this assumption, (\ref{eq:mainresult}) is replaced by the weaker bound
\begin{equation}\label{eq:xi-weak} \| \gamma_{N,t}^{(1)} - \omega_{N,t} \|_{\text{HS}} \leq C N^{\alpha/2} \exp (c \exp (c |t|)) .
\end{equation}

\item It is possible to extend (\ref{eq:xi-weak}) to arbitrary $N$-particle initial data $\psi_N$ with reduced density close to the one of a Slater determinant in the trace norm topology. Let $\omega_N$ be a sequence of orthogonal projections on $L^2 (\bR^3)$ with $\tr\, \omega_N = N$, satisfying (\ref{eq:semiclassicalmain}). Let $\psi_N$ be a sequence of normalized functions in $L^2_a (\bR^{3N})$ with one-particle reduced density $\gamma_{N}^{(1)}$ satisfying 
\be\label{eq:tracenorm} \tr \, | \gamma_{N}^{(1)} - \omega_N | \leq C N^\alpha \ee
for some $\alpha < 1$. 
Set $\xi_N = R_{\nu_N}^* \{ 0, 0, \dots , \psi_N, 0, \dots \}$ and observe that, using (\ref{eq:trans}), 
\begin{equation}\label{eq:xiNxi} \langle \xi_N , \cN \xi_N \rangle =  \langle \psi_N, (\cN -2d\Gamma (\omega_N) +N) \psi_N \rangle\;. 
\end{equation}
Here, we identified $\psi_N$ with the Fock space vector $\{ 0, 0, \dots , \psi_N, 0, \dots \}$ and we used the identity
\begin{equation}\label{eq:rot-N} R_{\nu_N} \cN R_{\nu_N}^* = R_{\nu_N}^* \cN R_{\nu_N} = \cN -2d\Gamma (\omega_N) + N \end{equation}
which follows from (\ref{eq:particle-hole}), applied to $\cN = \sum_{j=1}^\infty a^* (f_j) a(f_j)$. {F}rom (\ref{eq:xiNxi}) we find
\[ 
\begin{split}
\langle \xi_N, \cN \xi_N \rangle &= 2 \langle \psi_N, d\Gamma (1-\omega_N) \psi_N \rangle\\
 &= 2 \tr \, \gamma_N^{(1)} (1-\omega_N) = 2\tr \, (\gamma_N^{(1)} - \omega_N) (1-\omega_N) \leq 2 \tr \, |\gamma_N^{(1)} - \omega_N| 
 \end{split}
 \]
where we used that $\omega_N$ is an orthogonal projection. Hence (\ref{eq:tracenorm}) implies $\langle \xi_N, \cN \xi_N \rangle \leq C N^{\alpha}$ for some $\alpha < 1$, and (\ref{eq:xi-weak}) holds. 
\item Theorem \ref{thm:HF} can be extended to prove the convergence of the higher order correlation functions $\gamma^{(k)}_{N,t}$ to $\omega^{(k)}_{N,t}$, defined by the integral kernel  
\begin{equation}\label{eq:wicksrule}
\omega^{(k)}_{N,t}(x_{1}, \ldots, x_{k}; y_{1},\ldots, y_{k}) = \sum_{\pi \in S_{k}}\s_{\pi} \prod_{j=1}^{k} \omega_{N,t}(x_{j}; y_{\pi(j)})\;. 
\end{equation}
(This is just the $k$-particle reduced density as it looks like for a quasifree state due to the Wick theorem.)
More details about this point can be found in \cite[Theorem 2.2]{BPS1}. 


\item It is well-known that the exchange term is subleading with respect to the direct one. In particular, for the class of potentials we consider in Theorem \ref{thm:HF}, dropping the exchange term does not change the estimates (\ref{eq:mainresult}). In fact, recalling that $X_t$ has the integral kernel $X_t (x;y) = N^{-1} V(x-y) \omega_{N,t} (x;y)$, we write
\begin{equation}\label{eq:deco-ex} 
X_t = \frac{1}{N} \int dp \, \widehat{V} (p) e^{ip \cdot x} \omega_{N,t} e^{-ip \cdot x} 
\end{equation}
and we find
\be\label{eq:Xtmain}
\begin{split}
\tr |[X_{t}, \omega_{N,t}]| &\leq \frac{1}{N}\int dp\, |\hat V(p)|\, \tr\,|[e^{ip\cdot x}\omega_{N,t}e^{-ip\cdot x}, \omega_{N,t}]| \\
&\leq \frac{C}{N} \int dp\,|\hat V(p)|\,\tr\,|[e^{ip\cdot x}, \omega_{N,t}]| \\
&\leq \frac{C}{N} \int dp\, |\hat V(p)||p|\, \tr\,|[x,\omega_{N,t}]|
\end{split}
\ee
where in the last step we used that $\tr\,|[e^{ip\cdot x}, \omega_{N,t}]|\leq |p|\tr\,|[x,\omega_{N,t}]|$. As we shall see later, the bounds (\ref{eq:semiclassicalmain}) can be propagated along the flow of the Hartree-Fock equation; the result is $\tr\,|[x,\omega_{N,t}]| \leq K N\eps\exp(c|t|)$, and hence
\be\label{eq:Xtmain2}
\tr |[X_{t}, \omega_{N,t}]| \leq \tilde C\eps \exp(c|t|) \int dp\,|\hat V(p)||p|\;.
\ee
Roughly speaking, (\ref{eq:Xtmain2}) implies that the contribution of the exchange term to the solution of the Hartree-Fock equation (\ref{eq:HFmain}) is of order $\varepsilon = N^{-1/3}$ (over times of order one), and hence comparable with (or of smaller order than) the r.\,h.\,s.\ of (\ref{eq:mainresult}) and (\ref{eq:main-trace}). The situation is expected to change in the presence of unbounded potentials, e.\,g.\ for a Coulomb potential; in this case, the analogy with the study of the ground state energy of large atoms \cite{B, GS} suggests that neglecting the exchange term will deteriorate the error estimates.
\end{itemize}

\medskip

\emph{Sketch of the proof of Theorem \ref{thm:HF}.} The proof follows a strategy conceptually similar to the coherent state technique discussed in the bosonic case in Section \ref{sec:cohe}. For simplicity, we shall only discuss the proof of convergence in Hilbert-Schmidt norm, that is of the inequality (\ref{eq:mainresult}). Let $\Psi_{N,t} = e^{-i\mathcal{H}_{N}t/\eps}R_{\nu_{N}}\xi_N$ be the evolution in Fock space of the initial state $R_{\nu_{N}}\xi_N$. The kernel of the one-particle reduced density $\gamma^{(1)}_{N,t}$ is
\be
\gamma^{(1)}_{N,t}(x;y) = \langle e^{-i\mathcal{H}_{N}t/\eps}R_{\nu_{N}}\xi_N , a^{*}_{y}a_{x} e^{-i\mathcal{H}_{N}t/\eps}R_{\nu_{N}}\xi_N \rangle\;.
\ee
The idea is to compare the quantum evolution $e^{-i\mathcal{H}_{N}t/\eps}R_{\nu_{N}}\Omega$ with the Hartree-Fock evolution, described in Fock space by $R_{\nu_{N,t}}\xi_N$. This is done by introducing the \emph{fluctuation dynamics} $\cU_{N}(t;s) := R^{*}_{\nu_{N,t}} e^{-i\mathcal{H}_{N}(t-s)/\eps}R_{\nu_{N,s}}$. Using the shorthand notation $u_{t,x} (y) = u_{N,t} (y;x)$ and $v_{t,x} (y) = v_{N,t} (y;x)$, we can rewrite 
the kernel of $\gamma^{(1)}_{N,t}(x;y)$ as
\be\label{eq:UN}
\begin{split}
\gamma^{(1)}_{N,t}(x;y) =\; &\langle \cU_{N}(t;0)\xi_N , R^{*}_{\nu_{N,t}}a^{*}_{y} R_{\nu_{N,t}}R^{*}_{\nu_{N,t}} a_{x} R_{\nu_{N,t}} \cU_{N}(t;0)\xi_N \rangle \\
= \; &\langle \cU_{N}(t;0)\xi_N, [a^{*}(u_{t,y}) + a(\overline{v}_{t,y})][a(u_{t,x}) + a^{*}(\overline{v}_{t,x})] \cU_{N}(t;0)\xi_N \rangle \\
= \; &\omega_{N,t}(x;y) \\ &+ \langle \cU_{N}(t;0)\xi_N , [ a^{*}(u_{t,y})a(u_{t,x}) + a^{*}(u_{t,y})a^{*}(\overline{v}_{t,x}) \\ &\hspace{3cm} + a(\overline{v}_{t,y})a(u_{t,x}) - a^{*}(\overline{v}_{t,x})a(\overline{v}_{t,y}) ] \cU_{N}(t;0)\xi_N \rangle\,.
\end{split} 
\ee
Here we used the transformation property (\ref{eq:trans}) and the fact that $\{ a^{*}(\overline{v}_{t,x}), a(\overline{v}_{t,y}) \} = \omega_{N,t}(x;y)$. For any Hilbert-Schmidt operator $O$ on $L^{2}(\mathbb{R}^{3})$ we find  
\be\label{eq:HFproof1}
\begin{split}
\tr\,O (\gamma^{(1)}_{N,t} - \omega_{N,t}) &= \langle \cU_{N}(t;0)\xi_{N}, \left[ d\Gamma(u_{N,t} O u_{N,t}) - d\Gamma(\overline v_{N,t} \overline{O} \overline{v}_{N,t})\right] \cU_{N}(t;0)\xi_{N}\rangle \\
&\quad + \Big[ \int dxdy\, O(x;y) \langle \cU_{N}(t;0)\xi_{N}, a(\overline{v}_{t,y})a(u_{t,x}) \cU_{N}(t;0)\xi_{N} \rangle + c.c. \Big].
\end{split}
\ee
Consider the first two terms in (\ref{eq:HFproof1}). 
Since for any bounded self-adjoint operator $A$ on $L^2 (\bR^3)$ we have the bound $\pm d\Gamma (A) \leq \| A \| \cN$, where $\| A \|$ is the operator norm of $A$, we find, using that $\| u_{N,t} \|\leq 1$ and $\| v_{N,t} \|\leq 1$,
\be\label{eq:HFerror1}
\Big|\langle \cU_{N}(t;0)\xi_N , \left[ d\Gamma(u_{N,t} O u_{N,t}) - d\Gamma(\overline v_{N,t} \overline{O} \overline{v}_{N,t}) \right] \cU_{N}(t;0)\xi_N \rangle\Big|\leq 2\| O \| \langle \cU_{N}(t;0)\xi_N, \cN\, \cU_{N}(t;0)\xi_N \rangle\;.
\ee
Consider now the second line of (\ref{eq:HFproof1}). In order to estimate it, we shall use the following bound. Let $A$ be a Hilbert-Schmidt operator on $L^{2}(\mathbb{R}^{3})$, with kernel $A(x;y)$. Then
\be\label{eq:HS1}
\begin{split}
\Big\| \int dxdy\, A(x;y) a_{x}a_{y}(\cN + 1)^{-1/2} \psi \Big\| &\leq \int dy \Big\| a(\overline{A}(\cdot; y))a_{y}(\cN + 1)^{-1/2} \psi\Big\| \\
&\leq \int dy\, \| A(\cdot; y) \|_{2}\big\| a_{y}(\cN + 1)^{-1/2} \psi\big\| \\
&\leq \| A \|_{\text{HS}} \int dy\, \big\| a_{y}(\cN + 1)^{-1/2}\psi \big\|\\
&= \|A\|_{\text{HS}}\big\| \cN^{1/2}(\cN + 1)^{-1/2} \psi\big\| \leq \| A \|_{\text{HS}}\|\psi\|\;.
\end{split}
\ee
where in the second line we used the boundedness of the fermionic creation/annihilation operators, Eq.\ (\ref{eq:bounded}). We find
\be\label{eq:HS2}
\Big\| (\cN + 1)^{-1/2} \int dxdy\, \overline{A(x;y)} a^{*}_{y}a^{*}_{x} \psi \Big\| = \Big\| \int dxdy\, \overline{A(x;y)} a^{*}_{y}a^{*}_{x} (\cN + 3)^{-1/2}\psi \Big\| \leq \|A\|_{\text{HS}}\|\psi\|\;.
\ee
The estimates (\ref{eq:HS1}), (\ref{eq:HS2}) can be used to bound the second line of (\ref{eq:HFproof1}). We get
\be\label{eq:HFerror2}
\begin{split} 
\Big| \int dxdy\, O(x;y) \langle \cU_{N}(t;0)\xi_N,& a(\overline{v}_{t,y})a(u_{t,x}) \cU_{N}(t;0)\xi_N \rangle + c.c. \Big|\\ &\leq C\| u_{N,t} O v_{N,t} \|_{\text{HS}} \big\| (\cN + 1)^{1/2} \cU_{N}(t;0)\xi_N \big\|\;.
\end{split} \ee
Since $\| u_{N,t} O v_{N,t} \|_{\text{HS}} \leq \| O \|_{\text{HS}}$ and $\| O \|\leq \| O \|_{\text{HS}}$, (\ref{eq:HFerror1}) and (\ref{eq:HFerror2}) imply
\be
\Big| \tr\, O(\gamma^{(1)}_{N,t} - \omega_{N,t}) \Big| \leq C \| O \|_{\text{HS}} \langle \cU_{N}(t;0)\xi_N, (\cN+1) \cU_{N}(t;0)\xi_N\rangle\;. 
\ee
In particular, choosing $O = \gamma^{(1)}_{N,t} - \omega_{N,t}$, we obtain  
\be\label{eq:HFHSbound}
\| \gamma^{(1)}_{N,t} - \omega_{N,t} \|_{\text{HS}} \leq C\langle \cU_{N}(t;0)\xi_N , (\cN+1) \cU_{N}(t;0)\xi_N \rangle\;.
\ee
Hence, to conclude the proof, we need to control the growth of the expectation of the number of particles operator with respect to the fluctuation dynamics $\cU_N (t;0)$. This is the content of the next proposition. 
\begin{proposition}\label{prop:HFgrowth}
Under the same assumptions as in Theorem \ref{thm:HF}, there exist constants $C, c >0$ such that
\be\label{eq:HFgrowth}
\langle \cU_{N}(t;0)\xi_N , (\cN + 1) \cU_{N}(t;0)\xi_N  \rangle \leq C\exp(c\exp c|t|) \langle \xi_N , (\cN+1) \xi_N \rangle .
\ee
\end{proposition}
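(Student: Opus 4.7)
The plan is to differentiate $\langle \cU_N(t;0)\xi_N, (\cN+1) \cU_N(t;0)\xi_N\rangle$ in $t$, bound the result in terms of the same quantity, and conclude by Gronwall's lemma. The generator of the fluctuation dynamics is
\[
\wt{\cL}_N(t) = \bigl(i\eps\,\partial_t R^*_{\nu_{N,t}}\bigr) R_{\nu_{N,t}} + R^*_{\nu_{N,t}} \cH_N R_{\nu_{N,t}},
\]
so that $\eps \frac{d}{dt}\langle \cU_N(t;0)\xi_N, \cN\, \cU_N(t;0)\xi_N\rangle = -i\langle \cU_N(t;0)\xi_N, [\cN, \wt{\cL}_N(t)]\, \cU_N(t;0)\xi_N\rangle$. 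I would compute $\wt{\cL}_N(t)$ by conjugating each $a_x^\#$ in $\cH_N$ via (\ref{eq:trans}) and grouping the result according to degree in $a^\#$. The quadratic piece produces the Hartree-Fock linearization and a pairing term of the form $a^*a^* + aa$; the cubic and quartic pieces carry explicit factors $N^{-1/2}$ and $N^{-1}$ coming from $\cH_N$. As in the bosonic case, the defining property of $\omega_{N,t}$ as a solution of the Hartree-Fock equation (\ref{eq:HFmain}) is tuned precisely so that the linear-in-$a^\#$ contributions from $(i\eps\partial_t R^*_{\nu_{N,t}})R_{\nu_{N,t}}$ cancel the linear contributions from $R^*_{\nu_{N,t}} \cH_N R_{\nu_{N,t}}$.

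Next I would exploit that $[\cN, \wt{\cL}_N(t)]$ only sees the parts of $\wt{\cL}_N(t)$ that are not particle-number preserving: the pairing term $\int dxdy\, K_t(x;y) a_x^* a_y^* + \mathrm{h.c.}$, where $K_t$ is a Hilbert--Schmidt kernel built from $u_{N,t}$, $v_{N,t}$ and $V$; and the non-normal-ordered pieces coming from the cubic term, which are of lower order in $N$. Each of these can be bounded in expectation by $(\cN+1)$ using the fermionic bound $\|a^\#(f)\| \leq \|f\|_2$ together with the Hilbert--Schmidt type estimates (\ref{eq:HS1})--(\ref{eq:HS2}) and the assumption $\int dp\,|\hat V(p)|(1+|p|^2) < \infty$. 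The net outcome I expect is
\[
\bigl|\partial_t \langle \cU_N(t;0)\xi_N, (\cN+1)\, \cU_N(t;0)\xi_N\rangle\bigr| \leq \frac{C(t)}{\eps}\,\langle \cU_N(t;0)\xi_N, (\cN+1)\, \cU_N(t;0)\xi_N\rangle,
\]
with $C(t)$ controlled by $\|\hat V\|_{L^1(\langle p\rangle^2 dp)}$ and by the semiclassical commutators $\tr|[x,\omega_{N,t}]|$, $\tr|[\eps\nabla,\omega_{N,t}]|$, divided by $N\eps$.

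The main obstacle will be propagating the semiclassical estimates (\ref{eq:semiclassicalmain}) along the Hartree-Fock flow, i.e.\ showing that there exist $K,c>0$ independent of $N$ with
\[
\tr|[x,\omega_{N,t}]| + \tr|[\eps\nabla,\omega_{N,t}]| \leq K N\eps\, e^{c|t|}.
\]
This requires a separate Gronwall argument on the Hartree-Fock dynamics. The idea is to differentiate $[x,\omega_{N,t}]$ in $t$ using (\ref{eq:HFmain}), write the commutators $[x, V*\rho_t - X_t]$ and $[\eps\nabla, V*\rho_t - X_t]$ via the Fourier decomposition $V*\rho_t - X_t = \tfrac{1}{N}\int dp\, \hat V(p) e^{ip\cdot x}(\tr\, e^{-ip\cdot x}\omega_{N,t}\cdot\mathbf{1} - \omega_{N,t})e^{-ip\cdot x}$, and then bound each factor of $e^{ip\cdot x}$ commuted through $\omega_{N,t}$ via $\tr|[e^{ip\cdot x},\omega_{N,t}]|\leq |p|\tr|[x,\omega_{N,t}]|$, as already used in (\ref{eq:Xtmain}). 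This closes a coupled differential inequality for the two trace norms, from which the exponential bound follows by Gronwall.

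Combining the propagated semiclassical bound with the generator estimate and integrating gives $C(t) \leq K e^{c|t|}/\eps$; a final Gronwall step on $\langle \cU_N(t;0)\xi_N,(\cN+1)\cU_N(t;0)\xi_N\rangle$ yields the claimed double-exponential growth (\ref{eq:HFgrowth}). The iterated Gronwall --- one at the level of the classical structure of $\omega_{N,t}$, one at the level of the fluctuation dynamics --- is precisely the source of the $\exp(c\exp c|t|)$.
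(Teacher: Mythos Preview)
Your overall strategy---Gronwall for $\langle\cU_N(t;0)\xi_N,(\cN+1)\cU_N(t;0)\xi_N\rangle$, fed by the propagated semiclassical bounds of Proposition~\ref{thm:propagsc}, with the nested Gronwall producing the double exponential---is exactly the paper's. Your sketch of the propagation argument for $\tr|[x,\omega_{N,t}]|$ and $\tr|[\eps\nabla,\omega_{N,t}]|$ also matches.

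However, your description of the generator $\wt{\cL}_N(t)$ is imported from the bosonic coherent-state setting and does not apply here. There is no Weyl shift; $R_{\nu_{N,t}}$ is a Bogoliubov transformation acting linearly on $a^\#$ via (\ref{eq:trans}), so conjugating $\cH_N$ (quadratic plus quartic) produces only \emph{even}-degree terms. There are no linear-in-$a^\#$ contributions to be cancelled by the Hartree--Fock equation, no cubic terms, and no factor $N^{-1/2}$ anywhere. The role of (\ref{eq:HFmain}) is different: it guarantees that after normal ordering and commuting with $\cN$, the only surviving contributions are the off-diagonal quartic terms in (\ref{eq:HFgenerator}), each carrying a kernel of the form $v_{N,t}[e^{ip\cdot x},u_{N,t}]$ (or analogous) whose Hilbert--Schmidt or trace norm is $O(N\eps)$ by the propagated bounds. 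Combined with the $1/N$ in front of $\cV_N$, this yields the essential factor of $\eps$ on the right-hand side that cancels the $\eps^{-1}$ from $i\eps\partial_t$. As written, your $\eps$-counting does not close: a generic quadratic pairing kernel $K_t$ need not have $\|K_t\|_{\text{HS}}=O(\eps)$, and you end up with $C(t)/\eps\sim e^{c|t|}/\eps$, which is not uniform in $N$.

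The paper bypasses computing $\wt{\cL}_N(t)$ entirely by using the identity $R_{\nu_{N,t}}\cN R^*_{\nu_{N,t}}=\cN+N-2d\Gamma(\omega_{N,t})$ (see (\ref{eq:rot-N})). Since $\cN+N$ is conserved under $e^{-i\cH_N t/\eps}$, the time derivative reduces to $d\Gamma(i\eps\partial_t\omega_{N,t})-[\cH_N,d\Gamma(\omega_{N,t})]$; the kinetic parts cancel via $[d\Gamma(A),d\Gamma(B)]=d\Gamma([A,B])$, and conjugating what remains by $R_{\nu_{N,t}}$ and normal ordering gives (\ref{eq:HFgenerator}) directly. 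This shortcut makes both the cancellation mechanism and the $\eps$-balance transparent.
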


\medskip

\emph{Controlling the growth of the fluctuations.} Here we present the main steps in the proof of the key bound Eq.\ (\ref{eq:HFgrowth}). It is based on a Gronwall-type argument. Using the identity (\ref{eq:rot-N}) we write 
\be\label{eq:HFgrowth2}
\begin{split}
\langle \cU_{N}(t;0)\xi_{N}, \cN \cU_{N}(t;0)\xi_{N}\rangle &= \langle e^{-i\cH_{N}t/\eps}R_{\nu_{N}}\xi_{N}, R_{\nu_{N,t}}\cN R_{\nu_{N,t}}^{*} e^{-i\cH_{N}t/\eps}R_{\nu_{N}}\xi_{N}\rangle \\
&= \langle e^{-i\cH_{N}t/\eps}R_{\nu_{N}}\xi_{N}, [\cN + N - 2d\Gamma(\omega_{N,t})] e^{-i\cH_{N}t/\eps}R_{\nu_{N}}\xi_{N}\rangle.
\end{split}
\ee
The expectation of $(\cN+N)$ is independent of time; these terms disappear when we take the derivative. Hence
\be\label{eq:3terms}
\begin{split}
i\e\frac{d}{dt}\langle \cU_{N} &(t;0)\xi_{N}, \cN \cU_{N}(t;0)\xi_{N}\rangle \\ &= -2\big\langle e^{-i\cH_{N}t/\eps}R_{\nu_{N}}\xi_{N}, \big( d\Gamma(i\e\partial_{t}\omega_{N,t}) - [\cH_{N}, d\Gamma(\omega_{N,t})]\big) e^{-i\cH_{N}t/\eps}R_{\nu_{N}}\xi_{N}\big\rangle \\
& = -2\big\langle \cU_{N}(t;0)\xi_{N}, R_{\nu_{N,t}}\big( d\Gamma(i\e\partial_{t}\omega_{N,t}) - [\cH_{N}, d\Gamma(\omega_{N,t})]\big)R_{\nu_{N,t}} \cU_{N}(t;0)\xi_{N}\big\rangle \\
& = -2\big\langle \cU_{N}(t;0)\xi_{N}, R_{\nu_{N,t}}\big( d\Gamma(\rho_{t}*V - X_{t}) - [\mathcal{V}_{N}, d\Gamma(\omega_{N,t})]\big)R_{\nu_{N,t}} \cU_{N}(t;0)\xi_{N}\big\rangle
\end{split}
\ee
where \[ \cV_{N} = \frac{1}{2N} \int dx dy \, V(x-y) a^*_x a_y^* a_y a_x \] is the many-body interaction. In the last step we used that the contribution of the kinetic energy cancels, as follows from the identity $d\Gamma([A,B]) = [d\Gamma(A), d\Gamma(B)]$. Applying the Bogoliubov transformation $R_{\nu_{N,t}}$ and reorganizing all terms in normal order we end up with 
\be\label{eq:HFgenerator}
\begin{split}
i \e \frac{d}{dt} \, \Big\langle &\cU_N (t;0) \xi_{N} , \cN \cU_N (t;0) \xi_{N}\Big\rangle \\ &= \frac{4 i}{N}\, \text{Im} \int  dx dy\, V(x-y) \\ &\quad\times \Big\{ \Big\langle \cU_N (t;0) \xi_{N}, a^*(u_{t,x}) a(\overline{v}_{t,y}) a(u_{t,y}) a(u_{t,x}) \cU_N(t;0)\xi_{N}\Big\rangle \\ & \quad \quad + \Big\langle \cU_N (t;0) \xi_{N}, a(\overline{v}_{t,x}) a(\overline{v}_{t,y}) a(u_{t,y}) a(u_{t,x}) \cU_N (t;0)\xi_{N}\Big\rangle \\ &\quad \quad + \Big\langle \cU_N (t;0) \xi_{N}, a^*(u_{t,y}) a^*(\overline{v}_{t,y}) a^*(\overline{v}_{t,x}) a(\overline{v}_{t,x}) \cU_N (t;0)\xi_{N}\Big\rangle \Big\}. 
\end{split} 
\ee
Consider for example the second term on the r.\,h.\,s.\ of (\ref{eq:HFgenerator}). Expanding the potential in Fourier space, we write 
\[
\begin{split}
\frac{1}{N} &\int dxdy\, V (x-y) \Big\langle \cU_{N}(t;0)\xi_{N}, a(\overline{v}_{t,x}) a(\overline{v}_{t,y}) a(u_{t,y}) a(u_{t,x}) \cU_{N}(t;0)\xi_{N}\rangle \\ =\; & \frac{1}{N}\int dp\,\hat V(p) \\
&\hspace{.5cm} \times \int dr_{1}dr_{2}dr_{3}dr_{4} (v_{t}e^{-ip\cdot x}u_{t} )(r_{1}; r_{2}) (v_{t}e^{ip\cdot x} u_{t})(r_{3}; r_{4}) \Big\langle \cU_{N}(t;0)\xi_{N}, a_{r_{1}}a_{r_{2}}a_{r_{3}}a_{r_{4}} \cU_{N}(t;0)\xi_{N} \Big\rangle .
\end{split}
\]
Inserting $1 = (\cN + 5)^{1/2}(\cN + 5)^{-1/2}$ and applying the Cauchy-Schwarz inequality, we get 
\[
\begin{split}
&\Big| \frac{1}{N}\int dxdy\, V(x-y) \Big\langle \cU_{N}(t;0)\xi_{N}, a(\overline{v}_{t,x}) a(\overline{v}_{t,y}) a(u_{t,y}) a(u_{t,x}) \cU_{N}(t;0)\xi_{N}\rangle \Big| \\
&\leq \frac{1}{N}\int dp\,|\hat V(p)|\big\| (\cN + 5)^{1/2} \cU_{N}(t;0)\xi_{N} \big\| \\
&\quad \times \Big\| (\cN + 5)^{-1/2}\int dr_{1}dr_{2}(v_{N,t}e^{-ip\cdot x}u_{N,t} )(r_{1}; r_{2}) a_{r_{1}}a_{r_{2}} \\ &\hspace{3cm} \times \int dr_{3}dr_{4}(v_{N,t}e^{ip\cdot x} u_{N,t})(r_{3}; r_{4})a_{r_{3}}a_{r_{4}} \cU_{N}(t;0)\xi_{N}  \Big\|\\
&= \frac{1}{N}\int dp\,|\hat V(p)|\big\| (\cN + 5)^{1/2} \cU_{N}(t;0)\xi_{N} \big\| \\
&\quad \times \Big\| \int dr_{1}dr_{2}(v_{N,t}[e^{-ip\cdot x},u_{N,t}] )(r_{1}; r_{2}) a_{r_{1}}a_{r_{2}} \\ &\hspace{3cm} \times \int dr_{3}dr_{4}(v_{N,t}[e^{ip\cdot x}, u_{N,t}])(r_{3}; r_{4})a_{r_{3}}a_{r_{4}} (\cN + 1)^{-1/2}\cU_{N}(t;0)\xi_{N}  \Big\|\\
\end{split}
\]
where in the last step we used that $\cN a(f) = a(f)(\cN - 1)$ and the orthogonality $v_{t}u_{t} = 0$. Applying the estimate (\ref{eq:HS1}) twice and using that $\| v_{N,t} \|\leq 1$ we get:
\be\label{eq:HFgenerator2}
\begin{split}
\Big| \frac{1}{N}\int dxdy\, &V(x-y) \Big\langle \cU_{N}(t;0)\xi_{N}, a(\overline{v}_{t,x}) a(\overline{v}_{t,y}) a(u_{t,y}) a(u_{t,x}) \cU_{N}(t;0)\xi_{N}\rangle \Big| \\ 
&\leq \frac{1}{N} \int dp\, |\hat V(p)| \| [e^{ip\cdot x}, \omega_{N,t}] \|_{\text{HS}}^{2} \Big\langle \cU_{N}(t;0)\xi_{N}, \cN \cU_{N}(t;0)\xi_{N} \Big\rangle .
\end{split}
\ee
Using the naive bound $\| [e^{ip\cdot x}, \omega_{N,t}] \|_{\text{HS}}^{2}\leq 2\| [e^{ip\cdot x}, \omega_{N,t}] \|_{\text{tr}}\leq 4N$ would not be enough here because of the factor $\eps$ on the l.\,h.\,s.\ of (\ref{eq:HFgenerator}). Instead, we have to propagate the semiclassical estimates (\ref{eq:semiclassicalmain}) along the solution of the Hartree-Fock equation. This is the content of Proposition \ref{thm:propagsc} below, which tells us that $\| [e^{ip \cdot x}, \omega_{N,t} ] \|_{\text{tr}} \leq C N \e (1+|p|) \exp (c|t|)$ for all times $t \in \bR$. Inserting this estimate in (\ref{eq:HFgenerator2}), we conclude that
\[ \begin{split}  
\Big| \frac{1}{N}\int dxdy\, V(x-y) &\Big\langle \cU_{N}(t;0)\xi_{N}, a(\overline{v}_{t,x}) a(\overline{v}_{t,y}) a(u_{t,y}) a(u_{t,x}) \cU_{N}(t;0)\xi_{N}\rangle \Big| \\ &\hspace{4cm} \leq C \eps \exp (c |t|) \langle \cU_N (t;0)\xi_{N}, \cN \cU_{N}(t;0)\xi_{N} \Big\rangle.
\end{split} \]

The other terms in (\ref{eq:HFgenerator}) can be bounded in a similar way. The only difference is that instead of (\ref{eq:HS1}) one has to use the estimate 
\be
\Big\| \int dr_{1}d r_{2}\, A(r_{1};r_{2})a^{\sharp}_{r_{1}}a_{r_{2}} \varphi \Big\| \leq \|A\|_{\text{tr}}\|\varphi\| \quad \forall \varphi \in \cF
\ee
valid for any trace-class operator $A$ on $L^{2}(\mathbb{R}^{3})$ with kernel $A(r_{1};r_{2})$. The details can be found in \cite{BPS1}. We conclude that 
\be
\Big| \frac{d}{dt} \, \Big\langle \cU_N (t;0) \Omega , (\cN + 1)\, \cU_N (t;0) \Omega\Big\rangle \Big| \leq C  \exp(c|t|) \Big\langle \cU_N (t;0) \Omega , (\cN + 1)\, \cU_N (t;0) \Omega\Big\rangle\;.
\ee
Gronwall's lemma implies the desired estimate (\ref{eq:HFgrowth}). 

\medskip

\emph{Propagation of the semiclassical structure.} An important ingredient that we used above to control the growth of the expectation of the number of particles with respect to the fluctuation dynamics is the propagation of the semiclassical commutator bounds along the solution of the Hartree-Fock equation. 
\begin{proposition}\label{thm:propagsc}
Let $V$ satisfy (\ref{eq:Vass0}). Let $\omega_N$ be a sequence of orthogonal projections with $\tr \, \omega_N = N$, such that 
\begin{equation}\label{eq:prop-stat} \tr \, |[x,\omega_N]| \leq KN\eps \quad \text{and } \quad \tr \, |[\eps \nabla,\omega_N]| \leq KN\eps 
\end{equation}
for some constant $K > 0$. Let $\omega_{N,t}$ be the solution of the Hartree-Fock equation (\ref{eq:HFmain}) with initial data $\omega_{N,0} = \omega_N$. Then there exist constants $C,c > 0$ such that 
\[ \tr \, |[x,\omega_{N,t}]| \leq C N\eps \exp (c |t|) \quad \text{and } \quad \tr \, |[\eps \nabla, \omega_{N,t}]| \leq CN\eps \exp (c|t|) \]
for all $t \in \bR$. 
\end{proposition}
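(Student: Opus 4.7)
\emph{Proof proposal.} The plan is to derive a coupled Gronwall system for $a(t) := \tr|[x, \omega_{N,t}]|$ and $b(t) := \tr|[\varepsilon\nabla, \omega_{N,t}]|$ by differentiating in $t$ using the Hartree-Fock equation~\eqref{eq:HFmain}. Before differentiating, I would use a unitary-conjugation trick to avoid the inconvenient contribution $[h_{HF}(t), [A, \omega_{N,t}]]$: letting $U_{N,t}$ denote the two-parameter unitary propagator generated by $h_{HF}(t)$, so that $\omega_{N,t} = U_{N,t}\omega_N U_{N,t}^*$, and setting $B_A(t) := U_{N,t}^*[A, \omega_{N,t}]U_{N,t}$ for any bounded $A$, the Jacobi identity yields $i\varepsilon\,\partial_t B_A(t) = U_{N,t}^*[[A, h_{HF}(t)], \omega_{N,t}]U_{N,t}$. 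Since unitaries preserve the trace norm, integrating in $t$ gives
\[ \tr|[A, \omega_{N,t}]| \leq \tr|[A, \omega_N]| + \frac{1}{\varepsilon}\int_0^t \tr|[[A, h_{HF}(s)], \omega_{N,s}]|\, ds. \]

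Next, I would compute the inner commutators explicitly. Since the direct potential $V*\rho_t$ is a multiplication operator, $[x, V*\rho_t] = 0$ and $[x, -\varepsilon^2\Delta] = 2\varepsilon^2\nabla$. Writing the exchange term in Fourier form as in~\eqref{eq:deco-ex} and using that $x$ commutes with $e^{ipx}$ yields $[x, X_t] = N^{-1}\int dp\,\hat V(p)\, e^{ipx}[x, \omega_{N,t}]e^{-ipx}$. Analogously, $[\varepsilon\nabla, -\varepsilon^2\Delta] = 0$, $[\varepsilon\nabla, V*\rho_t] = \varepsilon(V*\nabla\rho_t)$ as multiplication, and a short kernel computation (the derivatives of $V(x-y)$ cancel since $(\nabla_x+\nabla_y)V(x-y) = 0$) yields $[\varepsilon\nabla, X_t] = N^{-1}\int dp\,\hat V(p)\, e^{ipx}[\varepsilon\nabla, \omega_{N,t}]e^{-ipx}$.

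Using the elementary estimate $\tr|[C,D]| \leq 2\tr|C|\cdot\|D\|_{\text{op}}$ together with $\|\omega_{N,t}\|_{\text{op}}=1$, the bound $\tr|[e^{ipx},\omega]|\leq|p|\,\tr|[x,\omega]|$ already used in~\eqref{eq:Xtmain}, and $|\hat\rho_t(p)|\leq\|\rho_t\|_1=1$, I would arrive at
\[ \tr|[[x, h_{HF}], \omega_{N,t}]| \leq 2\varepsilon\, b(t) + \frac{2\|\hat V\|_{L^1}}{N}\, a(t), \]
\[ \tr|[[\varepsilon\nabla, h_{HF}], \omega_{N,t}]| \leq \varepsilon\!\int\! dp\,|\hat V(p)|\,|p|^2\, a(t) + \frac{2\|\hat V\|_{L^1}}{N}\, b(t). \]
Dividing by $\varepsilon$ and noting that $\|\hat V\|_{L^1}/(N\varepsilon)=\|\hat V\|_{L^1}/N^{2/3}$ is bounded uniformly in $N$, while $\int |\hat V(p)|(1+|p|^2)\,dp<\infty$ by hypothesis~\eqref{eq:Vass0}, we obtain $a'(t)\leq C(a(t)+b(t))$ and $b'(t)\leq C(a(t)+b(t))$ for some $N$-independent constant $C$. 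Gronwall applied to $a+b$, together with the initial estimate $a(0)+b(0)\leq 2KN\varepsilon$ from the hypothesis, concludes the proof.

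The main difficulty is precisely the coupling between the two commutators: $[x, -\varepsilon^2\Delta]=2\varepsilon^2\nabla$ means that differentiating $a(t)$ inevitably produces $b(t)$, so both quantities must be propagated simultaneously. A secondary subtlety is that $X_t$ itself depends on $\omega_{N,t}$, making the exchange contribution a priori nonlinear; the Fourier representation however reduces each double commutator back to $[x, \omega_{N,t}]$ or $[\varepsilon\nabla, \omega_{N,t}]$, closing the system. A minor technical point is the lack of pointwise differentiability of the trace norm, which is bypassed by working with the Bochner integral identity $B_A(t) - B_A(0) = \int_0^t \partial_s B_A(s)\, ds$.
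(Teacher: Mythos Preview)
Your proposal is correct and follows essentially the same route as the paper: conjugate by the Hartree-Fock propagator to kill the $[h_{HF},[A,\omega_{N,t}]]$ term, compute $[A,h_{HF}]$ for $A=x$ and $A=\eps\nabla$, handle the exchange term via the Fourier decomposition (\ref{eq:deco-ex}), and close a coupled Gronwall system for the two trace norms. The only cosmetic difference is that the paper writes everything directly as integral inequalities via the Duhamel formula, whereas you phrase the conclusion as $a'(t)\leq C(a+b)$ before noting the non-differentiability issue; since you already derived the integral inequality at the outset, this is harmless.
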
 

\emph{Sketch of the proof of Proposition \ref{thm:propagsc}.} Let $h_{HF}(t) = -\e^{2}\Delta + V*\rho_{t} - X_{t}$ be the Hartree-Fock Hamiltonian. We compute
\[
i\e\frac{d}{dt}[x, \omega_{N,t}] = [x, [h_{HF}(t), \omega_{N,t}]] = [\omega_{N,t}, [h_{HF}(t), x]] + [h_{HF}(t), [x, \omega_{N,t}]].
\]
The last term can be eliminated by conjugating $[x, \omega_{N,t}]$ with the two-parameter group of unitary transformations $W(t;s)$ defined by 
\[
i\e\frac{d}{dt} W(t;s) = h_{HF}(t) W(t;s)\qquad \mbox{with $W(s;s) = 1$ for all $s\in\mathbb{R}$}.
\]
We have
\[
\begin{split}
i\e\frac{d}{dt} W^{*}(t;0) [x, \omega_{N,t}] W(t;0) & = W^{*}(t;0) [\omega_{N,t}, [h_{HF}(t), x]] W(t;0) \\
& = W^{*}(t;0)\Big( [\omega_{N,t}, -2\e^{2}\nabla] - [\omega_{N,t}, [X_{t},x]] \Big)W(t;0)
\end{split}
\]
where we used the identities $[-\e^{2}\D,x]= - 2\e^{2}\nabla$ and $[\rho_{t}*V, x] = 0$. Therefore we get the Duhamel-type formula
\[
\begin{split}
W^{*}(t;0) [x, \omega_{N,t}] W(t;0) 
& = [x, \omega_{N,0}] - \frac{1}{i\e}\int_{0}^{t}ds\, W^{*}(s;0) \Big( [\omega_{N,s}, 2\e^{2}\nabla] + [\omega_{N,s}, [X_{s},x]] \Big) W(s;0).
\end{split}
\]
This implies that
\be
\| [x, \omega_{N,t}] \|_{\text{tr}} \leq \| [x, \omega_{N,0}] \|_{\text{tr}} + \frac{1}{\e}\int_{0}^{t}ds\,\Big( \| [\omega_{N,s}, 2\e^{2}\nabla] \|_{\text{tr}} + \|[\omega_{N,s}, [X_{s},x]]\|_{\text{tr}} \Big).\label{p1}
\ee
To control the second term we use (\ref{eq:deco-ex}). Since $\| \omega_{N,s} \| \leq 1$ we find
\be\label{p2}
\begin{split}
\|[\omega_{N,s}, [X_{s},x]]\|_{\text{tr}} & \leq \frac{1}{N}\int dq\, |\hat V(q)|\, \|[\omega_{N,s}, [e^{ip\cdot x} \omega_{N,s} e^{-ip\cdot x}, x]]\|_{\text{tr}} \\
& \leq \frac{1}{N}\int dq\, |\hat V(q)|\, \|[\omega_{N,s}, e^{ip\cdot x}[\omega_{N,s}, x]e^{-ip\cdot x}]\|_{\text{tr}} \\
& \leq \frac{2}{N}\int dq\, |\hat V(q)|\, \|[\omega_{N,s}, x]\|_{\text{tr}} \\
& \leq \frac{C}{N}\| [\omega_{N,s}, x] \|_{\text{tr}}.
\end{split}
\ee
Inserting (\ref{p2}) in (\ref{p1}) we get
\be
\| [x, \omega_{N,t}] \|_{\text{tr}} \leq \| [x, \omega_{N,0}] \|_{\text{tr}} + C \int_{0}^{t}ds\,\Big( \| [\omega_{N,s},  \e\nabla] \|_{\text{tr}} +  N^{-2/3}\|[\omega_{N,s}, x]\|_{\text{tr}} \Big). \label{p1ba}
\ee
To control $[\omega_{N,s},\e\nabla]$ we start by writing
\[
\begin{split}
i\e\frac{d}{dt}[\e\nabla, \omega_{N,t}] & = [\e\nabla, [h_{HF}(t), \omega_{N,t}]] \\ & = [\omega_{N,t}, [h_{HF}(t), \e\nabla]] + [h_{HF}(t), [\e\nabla, \omega_{N,t}]] \\ 
& = [h_{HF}(t), [\e\nabla, \omega_{N,t}]] + [\omega_{N,t}, [\rho_{t}*V, \e\nabla]] - [\omega_{N,t}, [X_{t}, \e\nabla]].
\end{split}
\]
As before, the first term can be eliminated by conjugation with the unitary operator $W(t;0)$. We find
\be
\| [\e\nabla, \omega_{N,t}] \|_{\text{tr}} \leq \| [\e\nabla, \omega_{N,0}] \|_{\text{tr}} + \frac{1}{\e}\int_{0}^{t}ds\, \Big( \| [\omega_{N,s}, [\rho_{s}*V, \e\nabla]] \|_{\text{tr}} + \| [\omega_{N,s}, [X_{s}, \e\nabla]]\|_{\text{tr}} \Big).\label{p4}
\ee
The first term in the integral can be controlled by 
\be\label{p1bb}
\begin{split}
\| [\omega_{N,s}, [V*\rho_{s}, \e\nabla]] \|_{\text{tr}} & = \e \| [\omega_{N,s}, \nabla V * \rho_{s}] \|_{\text{tr}}\\
& \leq \e\int dq\, |\hat V(q)|\, |q|\, |\hat \rho_{s}(q)|\, \| [\omega_{N,s}, e^{iq\cdot x}] \|_{\text{tr}}\\
&\leq \e\int dq\, |\hat V(q)|\, |q|^{2}\, |\hat \rho_{s}(q)|\, \| [\omega_{N,s}, x] \|_{\text{tr}} \\
& \leq C\e \| [\omega_{N,s}, x] \|_{\text{tr}}
\end{split}
\ee
where we used the bound $\| \hat \rho_{s} \|_{\infty} \leq \|\rho_{s}\|_{1} = 1$ and the assumption on the interaction potential. Finally, consider the second term in the integral on the r.\,h.\,s.\ of (\ref{p4}). Again using (\ref{eq:deco-ex}), we write 
\be\label{p4b}
\begin{split}
\| [\omega_{N,s}, [X_{s}, \e\nabla]] \|_{\text{tr}} &\leq \frac{1}{N}\int dq\,|\hat V(q)|\, \| [\omega_{N,s}, [e^{iq\cdot x}\omega_{N,s}e^{-iq\cdot x}, \e\nabla]] \|_{\text{tr}} \\
&\leq \frac{2}{N}\int dq\,|\hat V(q)|\,\| [e^{iq\cdot x}\omega_{N,s}e^{-iq\cdot x}, \e\nabla ] \|_{\text{tr}} \\
&\leq \frac{2}{N}\int dq\,|\hat V(q)|\,\| [\omega_{N,s}, \e\nabla ] \|_{\text{tr}}
\end{split}
\ee
where we used the identity 
\[
[e^{iq\cdot x}\omega_{N,s}e^{-iq\cdot x},\e\nabla] = e^{iq\cdot x}[\omega_{N,s}, \e(\nabla + iq)]e^{-iq\cdot x} = e^{iq\cdot x}[\omega_{N,s}, \e\nabla]e^{-iq\cdot x}\;.
\]
Inserting the estimates (\ref{p1bb}) and (\ref{p4b}) into  (\ref{p4}), we get
\[
\| [\e\nabla, \omega_{N,t}] \|_{\text{tr}} \leq \| [\e\nabla, \omega_{N,0}] \|_{\text{tr}} + C\int_{0}^{t}ds\,\Big( \| [\omega_{N,s}, x] \|_{\text{tr}} + N^{-2/3}\| [\omega_{N,s}, \e\nabla] \|_{\text{tr}} \Big).
\]
Combining this inequality with (\ref{p1ba}), using the assumptions on the initial data and applying Gronwall's lemma, we obtain (\ref{eq:prop-stat}).

\section{Dynamics of quasi-free mixed states}
\setcounter{equation}{0}
\label{sec:mixed}

In Section \ref{sec:mf_fermions} we discussed the evolution of initial data approximating Slater determinants. Slater determinants are relevant at zero temperature because they provide (or at least they are expected to provide) a good approximation to the fermionic ground state of Hamilton operators like (\ref{eq:fermi1}) in the mean-field limit. At positive temperature, equilibrium states are mixed; in the mean-field regime, they are expected to be approximately quasi-free mixed states. 

\medskip 

\emph{Mixed states.} We introduce the short-hand notation $\h = L^{2}(\mathbb{R}^{3})$. We shall denote by $\cF(\h)$ the fermionic Fock space built over $\h$, that is $\cF(\h) = \bigoplus_{n\geq 0}\h^{\wedge n}$. A general fermionic state is represented by a density matrix on $\cF (\h)$. A density matrix is a non-negative trace class operator $\rho:\cF (\h) \to \cF(\h)$ with $\tr \, \rho = 1$. Notice that the state described by the density matrix $\rho$ is pure if and only if $\rho$ is a rank-one orthogonal projection onto a $\psi \in \cF (\h)$, i.\,e.\ $\rho = |\psi \rangle \langle \psi|$. Otherwise the state is called a mixed state. In general, 
\be\label{eq:densma}
 \rho = \sum_n \lambda_n |\psi_{n} \rangle \langle \psi_{n}| 
 \ee
where $\lambda_n \geq 0$, $\{\psi_n\}$ is an orthonormal family in ${\cal F}(\h)$, and  $\sum_{n} \lambda_n = 1$. Physically, $\rho$ describes an \emph{incoherent} mixture of pure states and $\lambda_n$ is the probability that the system is in the state $\psi_n$. The expectation of an arbitrary operator $A$ on $\cF (\h)$ in the mixed state with density matrix $\rho$ is given by 
\[ \tr \, A \rho = \sum_n \lambda_n \langle \psi_n, A \psi_n \rangle. \] 

Given the density matrix (\ref{eq:densma})
we define the operator $\wt{\kappa}:\cF(\h) \to \cF(\h)$ by 
\[
\wt{\kappa} = \sum_{n} \e_{n}| \psi_{n} \rangle \langle \phi_{n} |
\]
where  $\e_{n}\in \mathbb{C}$ is a sequence satisfying  $|\e_{n}|^{2} = \lambda_{n}$ and $\{\phi_{n}\}$ is an orthonormal family in ${\cal F}(\h)$. Clearly,
\[
\wt{\kappa} \wt{\kappa}^{*} = \rho.
\]
Of course such a decomposition of $\rho$ is far from being unique and later we shall choose a convenient one. Since $\rho$ is trace class it follows that $\wt{\kappa} \in \cL^2 (\cF (\h))$, the set of Hilbert-Schmidt operators on $\cF (\h)$. 

Next, we observe that $\cL^2 (\cF (\h)) \simeq \cF (\h) \otimes \cF (\h)$. This isomorphism is induced by the map $|\psi \rangle \langle \phi| \to \psi \otimes \overline{\phi}$, extended by linearity to the whole space $\cL^2 (\cF (\h))$. The mixed state with density matrix (\ref{eq:densma}) can be described on $\cF (\h) \otimes \cF (\h)$ by the vector
\[ \kappa = \sum_{n} \e_n \psi_n \otimes \overline{\phi}_n. \]
The expectation of the operator $A$ on $\cF (\h)$ in the state $\kappa \in \cF (\h) \otimes \cF (\h)$ is given by
\begin{equation}\label{eq:trAr} \tr \, A \rho = \tr \, A \wt{\kappa} \wt{\kappa}^* = \langle \kappa, (A \otimes 1) \kappa \rangle_{\cF (\h) \otimes \cF (\h)} \, . 
\end{equation}

The doubled Fock space $\cF(\h)\otimes \cF(\h)$ is isomorphic to $\mathcal{F}(\h \oplus \h)$ (see \cite{DG} or any book on mathematical quantum field theory). The unitary map $U$ that implements this isomorphism is known as the \emph{exponential law} and is defined by the relations 
\[
U (\Omega_{\cF(\h)}\otimes\Omega_{\cF(\h)}) = \Omega_{\mathcal{F}(\h \oplus \h)}
\]
and
\be\label{eq:a00}
\begin{split}
U \left[ a(f)\otimes 1 \right] U^{*} &= a(f\oplus 0) =: a_{l}(f) \\ 
U \left[ (-1)^{\cN} \otimes a(f) \right] U^{*} &= a(0\oplus f) =: a_{r}(f) \end{split} \ee
for all $f\in \h$, where $a_{\s}(f)$, $\s = l,\,r$, are called the \emph{left and right representations} of $a(f)$, respectively. By hermitian  conjugation, we also find 
\begin{equation}\label{eq:a00*} \begin{split}
U \left[ a^{*}(f)\otimes 1 \right] U^{*} &= a^{*}(f\oplus 0) =: a^{*}_{l}(f) \\ U \left[ (-1)^{\cN}\otimes a^{*}(f) \right] U^{*} &= a^{*}(0\oplus f) =: a^{*}_{r}(f)
\end{split}
\end{equation}
where $a^{*}_{\s}(f)$, $\s = l,\,r$, are the left and right representations of $a^{*}(f)$. Notice that the presence of the operator $(-1)^{\cN}$ on the second line of (\ref{eq:a00}) and (\ref{eq:a00*}) guarantees that creation operators on the space $\cF (\h \oplus \h)$ satisfy the canonical anticommutation relations (and in particular that $a^\sharp_l (f)$ anticommutes with $a^\sharp_r (g)$, for all $f,g \in \h$). 

It is convenient to introduce the left and right representations of the operator-valued distributions $a_{x}$, $a^{*}_{x}$ by the relations
\be\label{eq:a01}
\begin{split}
a_{l}(f) &= \int dx\, a_{x,l}\overline{f(x)},\qquad a_{r}(f) = \int dx\, a_{x,r}\overline{f(x)}, \\
a^{*}_{l}(f) &= \int dx\, a^{*}_{x,l} f(x),\qquad a^{*}_{r}(f) = \int dx\, a^{*}_{x,r} f(x),
\end{split}
\ee
for all $f\in \h$. We also define the left and right representations of the second quantization of operators on $\h$ by 
\[ \begin{split} 
U \left[ d\G(O)\otimes 1 \right] U^{*} &= d\G(O\oplus 0) =: d\G_{l}(O), 
\\ U \left[ 1\otimes d\G(O) \right] U^{*} &= d\G(0\oplus O) =: d\G_{r}(O)\;.\end{split} \]
The left and right representations of $d\G(O)$ can be written in terms of the operator-valued distributions as
\be
d\G_{l}(O) = \int dxdy\, O(x;y)a^{*}_{x,l}a_{y,l},\qquad d\G_{r}(O) = \int dxdy\, O(x;y)a^{*}_{x,r}a_{y,r}.
\ee

According to (\ref{eq:trAr}), the expectation of the observable $A$ in the mixed state described by the vector $\psi = U \kappa \in \cF (\h \oplus \h)$ is given by 
\[ \langle \kappa, (A \otimes 1) \kappa \rangle_{\cF (\h) \otimes \cF (\h)} = \langle U\kappa, U(A\otimes 1) U^* U \kappa \rangle_{\cF (\h \oplus \h)} = \langle \psi, U (A \otimes 1) U^* \psi \rangle_{\cF (\h \oplus \h)}\;. \]
In particular, the expectation of the second quantization $d\Gamma (O)$ of a one-particle operator $O$ on $\h$ is given by
\[ \langle \psi, U(d\Gamma (O) \otimes 1) U^* \psi \rangle = \langle \psi, d\Gamma_l (O) \psi \rangle = \int dx dy\, O (x;y) \langle \psi , a_{x,l}^* a_{y,l} \psi \rangle = \tr \, O \gamma_\psi^{(1)}, \]
where we defined the one-particle reduced density $\gamma_\psi^{(1)}$ associated with $\psi \in \cF (\h \oplus \h)$ as the non-negative trace class operator on $\h$ having the integral kernel 
\begin{equation}\label{eq:gam1-FF} \gamma_\psi^{(1)} (x;y) = \langle \psi, a_{y,l}^* a_{x,l} \psi \rangle .
\end{equation} 
We also define the pairing density $\alpha_\psi$ associated with $\psi \in \cF (\h \oplus \h)$ as the Hilbert-Schmidt operator on $\h$ with the kernel 
\begin{equation}\label{eq:alpha-FF} \alpha_\psi (x;y) = \langle \psi, a_{y,l} a_{x,l} \psi \rangle . 
 \end{equation}

The above construction allows us to represent mixed states as vectors in the ``larger'' Fock space $\cF(\h\oplus\h)$. This idea is well-known in quantum statistical mechanics and takes the name of \emph{purification}.

\medskip

\emph{Time evolution of mixed states.} The time evolution of the density matrix $\rho$ is given by 
\[ \rho_{t} = e^{-i\mathcal{H}_{N}t/\e}\rho e^{i\mathcal{H}_{N}t/\e},\]
where $\cH_{N}$ is the second quantized Hamiltonian given by (\ref{eq:fockH}). Accordingly, we define the time evolution of $\kappa \in \cF (\h) \otimes \cF (\h)$ by $\kappa_{t} = \left[ e^{-i\mathcal{H}_{N}t/\e} \otimes e^{i\cH_N t/\eps} \right] \kappa$. Let $\psi_t = U \kappa_t$ denote the vector in $\cF (\h \oplus \h)$ describing the mixed state with density matrix $\rho_t$. Then 
\[
\begin{split}
\psi_{t} &= U \kappa_{t} = Ue^{-i(\cH_{N}\otimes 1 - 1\otimes \cH_{N})t/\e}\kappa = e^{-i \cL_{N}t/\e} \psi
\end{split}
\]
where the \emph{Liouvillian} $\cL_{N}$ is defined by 
\[
\cL_{N} = U\big(\mathcal{H}_{N}\otimes 1 - 1\otimes \mathcal{H}_{N}\big)U^{*}.
\]
A more  explicit expression follows from (\ref{eq:a00}) and  (\ref{eq:a01}):
\be\label{a0c}
\begin{split}
\cL_{N} =\;& \e^{2}\int dx\, \nabla_{x}a^{*}_{x,l}\nabla_{x}a_{x,l} + \frac{1}{2N}\int dxdy\, V(x-y)a^{*}_{x,l}a^{*}_{y,l}a_{y,l}a_{x,l}  \\
& - \e^{2}\int dx\, \nabla_{x}a^{*}_{x,r}\nabla_{x}a_{x,r} - \frac{1}{2N}\int dxdy\, V(x-y)a^{*}_{x,r}a^{*}_{y,r}a_{y,r}a_{x,r}.
\end{split}
\ee
Hence, the expectation of an arbitrary operator $A$ on $\cF (\h)$ in the evolved mixed state is given by
\[ \tr\, A \rho_t = \langle \psi_t , U (A\otimes 1) U^* \psi_t \rangle = \langle \psi, e^{i\cL_N t/\e} U (A \otimes 1) U^* e^{-i\cL_N t/\e} \psi \rangle. \]

\medskip

\emph{Araki-Wyss representation.} In Section \ref{sec:mf_fermions}, where we proved that the time-evolution of Slater determinants can be approximated by the Hartree-Fock equation, a crucial ingredient of our analysis was the observation that Slater determinants can be written in the Fock space $\cF (\h)$ in the form $R_{\nu_N} \Omega$, where $\nu_N$ is an appropriate Bogoliubov transformation. In this section, we are interested in the evolution of quasi-free mixed states, which, after purification, can be described by vectors in the Hilbert space $\cF (\h \oplus \h)$. 

Let $\nu : (\h \oplus \h) \oplus (\h \oplus \h) \to (\h \oplus \h) \oplus (\h \oplus \h)$ be an implementable Bogoliubov transformation on the doubled one-particle space. 
Then there exists a unitary map $R_\nu : \cF (\h \oplus \h) \to \cF (\h \oplus \h)$ implementing $\nu$. It is easy to see that vectors of the form $R_\nu \Omega_{\cF (\h \oplus \h)} \in \cF (\h\oplus \h)$ are quasi free in the sense that higher order correlations can be computed through  Wick's rule. 

In the following we will be interested in quasi-free states in $\cF (\h \oplus \h)$ with average number of particles equal to $N$ and with vanishing pairing density. Let $\omega_N$ be a non-negative trace class operator on $\h$ with $0 \leq \omega_N \leq 1$ and $\tr\, \omega_N = N$. Then we define $\nu_N : (\h \oplus \h)\oplus (\h \oplus \h) \to (\h \oplus \h) \oplus (\h \oplus \h)$ by
\begin{equation}\label{eq:def-nuN} \nu_N = \left( \begin{array}{ll} U_N & \overline{V}_N \\ V_N & \overline{U}_N \end{array} \right), 
\end{equation}
where
\be\label{a8}
U_N = \begin{pmatrix} u_N & 0 \\ 0 & \overline{u}_N \end{pmatrix},\qquad V_N = \begin{pmatrix} 0 & \overline{v}_N \\ -v_N & 0 \end{pmatrix},
\ee
and  $u_N = \sqrt{1-\omega_N}$, $v_N = \sqrt{\omega_N}$. The operators $U_N,V_N: \h \oplus \h \to \h \oplus \h$ satisfy the relations (\ref{eq:bogrelations}); hence $\nu_N$ defines a Bogoliubov transformation. Since $\tr\, V_N^{*}V_N = 2 \tr\, \omega_N = 2N<\infty$, the Bogoliubov transformation $\nu_N$ is implementable. Hence $R_{\nu_N} \Omega \in \cF (\h \oplus \h)$ describes a quasi-free mixed states (here and in the following, we denote simply by $\Omega$ the vacuum in $\cF (\h \oplus \h)$). We claim now that $R_{\nu_N} \Omega$ is exactly the quasi-free mixed state with reduced density $\omega_N$ and with vanishing pairing density. In fact, for any $f \in \h$, we find 
\be
\begin{split}
R^{*}_{\nu_N} a_{l}(f) R_{\nu_N} & = a_{l}(u_N f) - a^{*}_{r}(\overline{v}_N \overline{f}), \\ 
R^{*}_{\nu_N} a_{r}(f) R_{\nu_N} &= a_{r}(\overline{u}_N f) + a^{*}_{l}(v_N \overline f).
\end{split}
\ee
Equivalently,
\be\label{a8b}
\begin{split}
R^{*}_{\nu_N} a_{x,l} R_{\nu_N} & = a_{l}(u_{N,x}) - a^{*}_{r}(\overline{v}_{N,x}), \\ 
R^{*}_{\nu_N} a_{x,r} R_{\nu_N} &= a_{r}(\overline{u}_{N,x}) + a^{*}_{l}(v_{N,x}),
\end{split}
\ee
with the usual notation $u_{N,x} (y) = u_N (y;x)$, $v_{N,x} (y)= v_N (y;x)$. Using (\ref{a8b}) and the unitarity of $R_{\nu_N}$ we find, from (\ref{eq:gam1-FF}), 
\be
\begin{split}
\g_{R_{\nu_N} \Omega}(x;y) &= \langle R_{\nu_N}\Omega, a^{*}_{y,l}a_{x,l} R_{\nu_N}\Omega\rangle \\
& = \langle \Omega, R^{*}_{\nu_N}a^{*}_{y,l}R_{\nu_N}R^{*}_{\nu_N}a_{x,l}R_{\nu_N}\Omega\rangle \\
& = \langle \Omega, (a^{*}_{l}(u_{y}) - a_{r}(\overline{v}_{y}))(a_{l}(u_{x}) - a^{*}_{r}(\overline{v}_{x})) \Omega\rangle \\
& = \langle \Omega, a_{r}(\overline{v}_{y})a^{*}_{r}(\overline{v}_{x}) \Omega\rangle\\
& = (v_N^{*}v_N)(x;y) = \omega_{N}(x;y)
\end{split}
\ee
where we used the canonical anticommutation relations and the fact that $a_{l}$ and $a_{r}$ annihilate the vacuum. Also, from (\ref{eq:alpha-FF}), 
\be
\begin{split}
\alpha_{R_{\nu_N} \Omega}(x;y) &= \langle R_{\nu_N}\Omega, a_{y,l}a_{x,l} R_{\nu_N}\Omega\rangle\\
&= \langle \Omega, (a_{l}(u_{y}) - a^{*}_{r}(\overline{v}_{y})(a_{l}(u_{x}) - a^{*}_{r}(\overline{v}_{x}))\Omega\rangle\\
&= \langle \Omega, a_{l}(u_{y})a^{*}_{r}(\overline{v}_{x})\Omega\rangle = 0
\end{split}
\ee
where we used that $\{a_{l}(u_{y}),\, a_{r}^{*}(\overline{v_{x}})\} = 0$.

This representation of quasi-free mixed states is a special example of a well known construction in quantum statistical mechanics, known as the \emph{Araki-Wyss representation} \cite{AW}.

\medskip

\emph{Dynamics of quasi-free mixed states.} We consider the time evolution of initial quasi-free mixed states satisfying certain semiclassical estimates (motivated by the idea that, physically, we are interested in the evolution of equilibrium states at positive temperature). For such initial data, we show that the evolution remains close to a mixed quasi-free state with reduced density evolving according to the Hartree-Fock equation (\ref{eq:HF}). The next theorem is taken from \cite{BJPSS}.
\begin{theorem}\label{thm:mixedHF}
Let $V\in L^{1}(\bR^{3})$ and assume that
\begin{equation}\label{eq:Vass}
\int dp\, (1+|p|^2) |\widehat{V}(p)| < \infty.
\end{equation}
Let $\omega_{N}$ be a sequence of operators on $\h = L^{2}(\bR^{3})$ with $0\leq \omega_{N}\leq 1$, $\tr \, \omega_N = N$ and such that $\tr\,(1-\D)\omega_{N}<\infty$ and
\begin{equation}\label{eq:oass}
\begin{split}
\|[v_N, x]\|_{\text{HS}} &\leq CN^{1/2}\e, \qquad \|[v_N,\e\nabla]\|_{\text{HS}} \leq CN^{1/2}\e, \\
\|[u_{N}, x]\|_{\text{HS}} &\leq CN^{1/2}\e, \qquad  \|[u_N,\e\nabla]\|_{\text{HS}} \leq CN^{1/2}\e,
\end{split}
\end{equation}
with $v_{N} = \sqrt{\omega_{N}}$, $u_{N} = \sqrt{1 - \omega_{N}}$, for a suitable constant $C>0$. Let $\nu_{N}$ denote the Bogoliubov transformations (\ref{eq:def-nuN}), such that $R_{\nu_{N}}\Omega_{\cF (\h\oplus\h )}$ is the quasi-free state on $\cF (\h \oplus \h)$ with one-particle reduced density $\omega_N$ and with vanishing pairing density. 

Let $\g^{(1)}_{N,t}$ be the one-particle reduced density associated with the evolved state
\be\label{eq:qf-evo}
\psi_{N,t} = e^{-i\cL_{N}t/\e}R_{\nu_{N}}\Omega_{\cF(\h\oplus \h)}
\ee
where the Liouvillian $\cL_{N}$ has been defined in (\ref{a0c}). Let $\omega_{N,t}$ be the solution of the time-dependent Hartree-Fock equation
\be\label{eq:HF2}
i\e\partial_{t}\omega_{N,t} = [-\e^{2}\Delta + V*\rho_{t} - X_{t},\omega_{N,t}]
\ee
with the initial data $\omega_{N,0} = \omega_{N}$. Then there exist constants $C,\,c>0$ such that
\be\label{eq:conv-HS}
\norm{\g^{(1)}_{N,t} - \omega_{N,t} }_{\text{HS}}^{2} \leq C \exp(c\exp(c|t|))\;\quad \text{ and } \quad \tr\, \big| \g^{(1)}_{N,t} - \omega_{N,t} \big| \leq C N^{1/2} \exp(c\exp(c|t|)).
\ee
\end{theorem}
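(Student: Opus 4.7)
The proof plan follows the conceptual strategy of Theorem~\ref{thm:HF}, but now adapted to the doubled Fock space $\cF(\h\oplus\h)$ on which the purified dynamics acts via the Liouvillian $\cL_N$. I would introduce the fluctuation dynamics
\[
\cU_N(t;s) := R_{\nu_{N,t}}^{*}\, e^{-i\cL_{N}(t-s)/\eps}\, R_{\nu_{N,s}},
\]
where $\nu_{N,t}$ is the Araki--Wyss Bogoliubov transformation defined in (\ref{eq:def-nuN})--(\ref{a8}) from the Hartree--Fock solution $\omega_{N,t}$, i.e.\ with $v_{N,t}=\sqrt{\omega_{N,t}}$ and $u_{N,t}=\sqrt{1-\omega_{N,t}}$. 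Writing $\psi_{N,t}=R_{\nu_{N,t}}\cU_N(t;0)\Omega$ and conjugating the left operators entering (\ref{eq:gam1-FF}) via (\ref{a8b}), the expectation of $a^{*}_{y,l}a_{x,l}$ in $\psi_{N,t}$ decomposes into (i) the leading pairing $\langle\Omega, a_{r}(\overline{v}_{N,t,y})a_{r}^{*}(\overline{v}_{N,t,x})\Omega\rangle=\omega_{N,t}(x;y)$ and (ii) normal-ordered quadratic expressions in $a_{l}^{\sharp}, a_{r}^{\sharp}$ tested in $\cU_N(t;0)\Omega$, with kernels built from products of $u_{N,t}$ and $v_{N,t}$.

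The next step, in parallel with (\ref{eq:HFerror1})--(\ref{eq:HFHSbound}), is to estimate the residual terms using the Hilbert--Schmidt bounds (\ref{eq:HS1})--(\ref{eq:HS2}), extended trivially to the left/right representations. Since $\|u_{N,t}\|,\|v_{N,t}\|\leq 1$, one gets for any Hilbert--Schmidt observable $O$ on $\h$
\[
\bigl|\tr\, O\,(\gamma^{(1)}_{N,t}-\omega_{N,t})\bigr|\leq C\,\|O\|_{\text{HS}}\,\bigl\langle\cU_N(t;0)\Omega,(\cN+1)\,\cU_N(t;0)\Omega\bigr\rangle,
\]
where $\cN=\cN_{l}+\cN_{r}$. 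Choosing $O=\gamma^{(1)}_{N,t}-\omega_{N,t}$ reduces the Hilbert--Schmidt bound in (\ref{eq:conv-HS}) to controlling the growth of $\cN$ under $\cU_N$. To derive this by Gronwall, I would compute the time derivative of $\langle\cU_N(t;0)\Omega,\cN\,\cU_N(t;0)\Omega\rangle$ using the mixed-state analogue of (\ref{eq:rot-N}), namely
\[
R_{\nu_{N,t}}\,\cN\, R_{\nu_{N,t}}^{*}=\cN+2N-2\bigl[d\Gamma_{l}(\omega_{N,t})+d\Gamma_{r}(\omega_{N,t})\bigr],
\]
which follows from a direct computation with (\ref{a8b}). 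The Hartree--Fock equation is designed precisely so that the quadratic mean-field pieces cancel in the derivative; the surviving interaction contributions, after Fourier expansion of $V$ and Bogoliubov conjugation as in (\ref{eq:HFgenerator})--(\ref{eq:HFgenerator2}), can be bounded by
\[
\frac{C}{N\eps}\int dp\,|\hat V(p)|\bigl(\|[e^{ip\cdot x},v_{N,t}]\|_{\text{HS}}^{2}+\|[e^{ip\cdot x},u_{N,t}]\|_{\text{HS}}^{2}\bigr)\bigl\langle\cU_N(t;0)\Omega,(\cN+1)\,\cU_N(t;0)\Omega\bigr\rangle.
\]

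The main technical obstacle is the propagation of the semiclassical commutator estimates (\ref{eq:oass}) along the Hartree--Fock flow. Proposition~\ref{thm:propagsc} propagates bounds on $[x,\omega_{N,t}]$ and $[\eps\nabla,\omega_{N,t}]$, whereas here one needs Hilbert--Schmidt control of commutators involving the non-smooth functionals $v_{N,t}=\sqrt{\omega_{N,t}}$ and $u_{N,t}=\sqrt{1-\omega_{N,t}}$, which do not satisfy autonomous evolution equations. My plan is to use the integral representation
\[
\sqrt{\omega_{N,t}}=\frac{1}{\pi}\int_{0}^{\infty}\frac{ds}{\sqrt{s}}\,\frac{\omega_{N,t}}{\omega_{N,t}+s},
\]
and its analogue for $\sqrt{1-\omega_{N,t}}$, together with the resolvent identity $[x,(\omega_{N,t}+s)^{-1}]=-(\omega_{N,t}+s)^{-1}[x,\omega_{N,t}](\omega_{N,t}+s)^{-1}$, splitting the $s$-integral into a small-$s$ region controlled by the improved bound $\|\omega_{N,t}(\omega_{N,t}+s)^{-1}\|\leq 1$ and a large-$s$ tail controlled by $\|(\omega_{N,t}+s)^{-1}\|\leq s^{-1}$. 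Combined with a Hilbert--Schmidt adaptation of the trace-norm proof of Proposition~\ref{thm:propagsc} for $\omega_{N,t}$ itself, this yields closed integro-differential inequalities for $\|[x,v_{N,t}]\|_{\text{HS}}$, $\|[\eps\nabla,v_{N,t}]\|_{\text{HS}}$ and their $u$-counterparts, which Gronwall's lemma closes with an exponential time-dependence; the double exponential in (\ref{eq:conv-HS}) then arises from inserting this exponentially growing commutator bound into the Gronwall estimate for $\cN$. Finally, the trace-norm bound in (\ref{eq:conv-HS}) follows from the Hilbert--Schmidt bound by a spectral-cutoff Cauchy--Schwarz argument that exploits $0\leq\gamma^{(1)}_{N,t},\omega_{N,t}\leq 1$ and $\tr\,\gamma^{(1)}_{N,t}=\tr\,\omega_{N,t}=N$ to control the effective rank of $\gamma^{(1)}_{N,t}-\omega_{N,t}$ by $N$.
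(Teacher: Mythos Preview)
Your overall architecture---fluctuation dynamics on the doubled Fock space, reduction to growth of $\cN=\cN_l+\cN_r$, and a Gronwall argument fed by propagated semiclassical bounds---matches the paper's sketch. However, you have missed the central new difficulty that the paper explicitly singles out: in the mixed-state setting $u_{N,t}=\sqrt{1-\omega_{N,t}}$ and $v_{N,t}=\sqrt{\omega_{N,t}}$ are \emph{not orthogonal}. In the pure-state proof of Proposition~\ref{prop:HFgrowth}, the crucial step after (\ref{eq:HFgenerator}) is to rewrite the kernel $v_{N,t}e^{ip\cdot x}u_{N,t}$ as $v_{N,t}[e^{ip\cdot x},u_{N,t}]$ using $v_{N,t}u_{N,t}=0$; this is what produces the commutator whose Hilbert--Schmidt norm carries the extra factor $\eps$ needed to close the Gronwall estimate. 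When $u_{N,t}v_{N,t}\neq 0$, the object $v_{N,t}e^{ip\cdot x}u_{N,t}$ has Hilbert--Schmidt norm of order $N^{1/2}$, not $N^{1/2}\eps$, so the bound you wrote down does not follow from the argument you cite, and your Gronwall inequality would be off by a factor $\eps^{-1}=N^{1/3}$.

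The paper indicates that two ingredients are required to repair this: first, one must exploit cancellations between different terms in the generator of the fluctuation dynamics (terms that were individually harmless in the pure-state case now have large pieces that must be combined before estimating); second, one introduces an \emph{auxiliary dynamics} $\wt\cU_N(t;s)$ which stays close to $\cU_N(t;s)$ but whose effect on $\cN$ can be controlled directly. Neither of these appears in your plan. Your discussion of propagating the Hilbert--Schmidt commutator bounds for $v_{N,t}$ and $u_{N,t}$ via the integral representation of the square root addresses a genuine technical point, but it is secondary to the structural issue above; even with perfect control of $\|[x,v_{N,t}]\|_{\text{HS}}$ and $\|[\eps\nabla,v_{N,t}]\|_{\text{HS}}$, you still need a mechanism to make those commutators appear in the estimate for $\partial_t\langle\cU_N(t;0)\Omega,\cN\,\cU_N(t;0)\Omega\rangle$.
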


\emph{Remarks.} 
\begin{itemize}
\item Similarly to Theorem \ref{thm:HF}, the convergence towards the Hartree-Fock dynamics can be extended to more general initial data than those appearing in (\ref{eq:qf-evo}). Let $\cN = d\Gamma_{l}(1) + d\Gamma_{r}(1)$ be the total number of particles operator in the doubled Fock space and let $\xi_N$ be a sequence in $\cF (\h \oplus \h)$ with $\langle \xi_N, \cN^{10} \xi_N \rangle \leq C$ and such that $\xi_N = \chi (\cN \leq K  N) \xi_N$ (for a sufficiently large constant $K > 0$ independent of $N$). In \cite{BJPSS} it is shown that the one-particle reduced density of $R_{\nu_N} \xi_N$ can be approximated by the solution of the Hartree-Fock equation. 
\item The semiclassical commutator estimates (\ref{eq:oass}) play the same role as (\ref{eq:semiclassicalmain}) in the analysis of the evolution of Slater determinants. There is however an important difference. Eq.\ (\ref{eq:oass}) gives bounds for the Hilbert-Schmidt norm of the commutators, while (\ref{eq:semiclassicalmain}) was expressed in terms of their trace norm. In fact, for Slater determinants we do not expect (\ref{eq:oass}) to be correct. In this case the decay of the  kernel $\omega_N (x;y)$ for $|x-y| \gg \eps$ is quite weak (because $\omega_N$ is a projection and the decay of $\omega_N$ is dictated by the regularity of the Fourier transform in the $(x-y)$-direction) and does not suffice, in general, to prove (\ref{eq:oass}). On the other hand, in Theorem \ref{thm:mixedHF}, $\omega_N$ does not need to be a projection, and one can expect faster decay of the kernel $\omega_N (x;y)$ in the $(x-y)$-direction. This is the reason why the assumption (\ref{eq:oass}) is appropriate for the study of the dynamics at positive temperature.  

For instance, a reasonable approximation for the one-particle reduced density of a thermal state of the Hamiltonian (\ref{eq:fermi1}) describing trapped interacting fermions is given, in the mean-field limit, by the Weyl quantization 
\be
\wt{\omega}_N (x;y) = \frac{1}{(2\pi \eps)^3} \int dv \, M \left( \frac{x+y}{2}, v \right)  e^{i v \cdot \frac{(x-y)}{\eps}} 
\ee
where $M(x,v)$ denotes the phase-space density,
\begin{equation} \label{eq:Mpos} M (x,v) = g_{T,\mu} \left(v^2 - c \rho_{\text{TF}}^{2/3} (x) \right)\;,
\end{equation}
corresponding to the Fermi-Dirac distribution 
\begin{equation}\label{eq:FDdi} 
g_{T,\mu} (E) = \frac{1}{1+ e^{(E-\mu)/T}}\;, 
\end{equation}
depending on the temperature $T>0$ and on the chemical potential $\mu\geq 0$. In (\ref{eq:Mpos}), $\rho_{\text{TF}}$ is the minimizer of the Thomas-Fermi functional (\ref{eq:TF}) and the normalization constant $c>0$ has to be chosen so that 
\[ \int M(x,v) dx dv = 1\;. \]
For the reduced density $\wt{\omega}_N$, we find
\[ [\wt{\omega}_N, x] (x;y) = \frac{i\eps}{(2\pi\eps)^3} \int dv \, (\nabla_v M) \left( \frac{x+y}{2}, v \right) e^{iv \cdot \frac{x-y}{\eps}} \]
and 
\[ [\wt{\omega}_N , \eps \nabla ] (x;y) = - \frac{\eps}{(2\pi\eps)^3} \int dv \, (\nabla_x M) \left( \frac{x+y}{2}, v \right) e^{iv \cdot \frac{x-y}{\eps}} \]
Hence,
\[ \begin{split} \| [\wt{\omega}_N, x] \|^2_{\text{HS}} &\leq \eps^2 N \int dx dv |(\nabla_v M) (x,v)|^2  \leq CN \eps^2, \\  \| [\wt{\omega}_N, x] \|^2_{\text{HS}} &\leq \eps^2 N \int dx dv |(\nabla_x M) (x,v)|^2 \leq C N\eps^2, \end{split}\]
using the regularity of the Fermi-Dirac distribution (\ref{eq:FDdi}) (and assuming some regularity of the Thomas-Fermi density $\rho_{\text{TF}}$). This motivates the assumption (\ref{eq:oass}) for initial data describing thermal states at positive temperature (the condition (\ref{eq:oass}) is actually expressed in terms of $v_N = \sqrt{\omega_N}$ and $u_N = \sqrt{1-\omega_N}$ but the same argument is expected to hold also in this case). 
\item The proof of Theorem \ref{thm:mixedHF} follows a strategy conceptually similar to the one of Theorem \ref{thm:HF}. As for pure states, the rate of the convergence towards the Hartree-Fock evolution can be estimated by controlling the growth of the expectation of the number of particles operator $\cN = d\Gamma_{l}(1) + d\Gamma_{r}(1)$ with respect to the fluctuation dynamics, which is now defined by 
\be
\cU_{N}(t;s) = R^{*}_{\nu_{N,t}}e^{-i\cL_{N} t/\e}R_{\nu_{N,s}}\;.
\ee
It turns out, however, that new ideas are needed here to control the growth of \[ \langle \cU_N (t;0) \Omega, \cN \cU_N (t;0) \Omega \rangle_{\cF (\h \oplus \h)}.\] The main difference with respect to Section \ref{sec:mf_fermions} is that now $u_N = \sqrt{1-\omega_N}$ and $v_N= \sqrt{\omega_N}$ are not orthogonal. To circumvent this problem, one has to use certain cancellations between different terms in the generator of the fluctuation dynamics. Moreover, it is important to introduce an auxiliary dynamics $\wt\cU_N (t;s)$ which, on the one hand, can be proven to stay close to the original fluctuation dynamics, and, on the other hand, only changes the expectation of the number of particles in a controllable way. The details of the proof can be found in \cite{BJPSS}.
\end{itemize}

\appendix
\section{The Role of Correlations in the Gross-Pitaevskii Energy}
\label{sec:gpenergy}

In this appendix we consider bosonic systems in the Gross-Pitaevskii regime, as in Section \ref{sec:GP}. We show that the energy of Fock space states of the form $W(\sqrt{N}\varphi) T_0 \xi_N$, for $\xi_N \in \cF$ with a finite number of particles and finite energy ($\xi_N$ describes excitations of the condensate), is to leading order given by the Gross-Pitaevskii functional (\ref{eq:GPen}) evaluated on the one-particle wave function $\ph \in L^2 (\bR^3)$. This is an instructive calculation since it shows that the Bogoliubov transformation $T_0$ converts part of the many-body kinetic energy into a contribution to the quartic term in the Gross-Pitaevskii functional. Without the Bogoliubov transformation $T_0$ the approximate coherent state $W(\sqrt{N}\varphi) \xi_N$ would have a higher energy, given by a functional similar to (\ref{eq:GPen}) but with coupling constant $b_0 = \int V (x) dx$, which is always larger than $4\pi a_0$. Hence, introducing the Bogoliubov transformation $T_0$ lowers the energy by a quantity of order $N$ (while the change of the number of particles is only of order one; see \eqref{eq:transformN}). This observation supports the claim that states of the form $W(\sqrt{N}\varphi) T_0 \xi_N$ provide a good approximation for the many-body ground state and that $T_0$ implements the correct correlation structure.

\medskip

{\it Energy of Bogoliubov states.}
Consider the Hamilton operator describing a Bose gas in the Gross-Pitaevskii regime, trapped by a confining potential $V_{\text{ext}}$,
\[ \cH_N^{\text{trap}} = \int d x \, a_x^* \left( -\Delta_x + V_{\text{ext}} (x) \right) a_x + \frac{1}{2} \int d x d y \, N^2 V (N (x-y)) a_x^* a_y^* a_y a_x.\]
Let $\xi_N \in \cF$ be such that \begin{equation}\label{eq:assxiN} \left\langle \xi_N, \left(\cN+\frac{1}{N} \cN^2+\cH_N^{\text{trap}}\right)\xi_N \right\rangle \leq C
\end{equation}
uniformly in $N$. As in (\ref{eq:Tt}), we define the Bogoliubov transformation 
\[ T_0 = \exp \left( \frac{1}{2} \int dx dy (k_0 (x;y) a_x^* a_y^* - \overline{k}_0 (x;y) a_x a_y) \right) \]
with the kernel $k_0 (x;y) = - N \omega (N (x-y)) \ph (x) \ph(y)$. We claim that for sufficiently regular $\ph \in L^2 (\bR^3)$ with $\| \ph \|_2 = 1$  
\begin{equation}\label{eq:gpgpgp} \begin{split} \Big\langle W(\sqrt{N} \varphi) T_0\xi_N &, \cH_N^{\text{trap}} W(\sqrt{N}\varphi) T_0\xi_N \Big\rangle \\ &= N \int d x \left( |\nabla \varphi (x)|^2 + V_{\text{ext}} (x) |\varphi (x)|^2 + 4 \pi a_0 |\varphi (x)|^4 \right) + \mathcal{O} (\sqrt{N}) \\ & = N \cE_{\text{GP}} (\varphi) + \mathcal{O} (\sqrt{N}). \end{split} \end{equation}
Taking $\varphi$ to be the normalized minimizer of $\cE_{\text{GP}}$, we conclude from \cite{LSY} that $W(\sqrt{N}\varphi) T_0\xi_N$ has, in leading order, the same energy as the ground state of the restriction of $\cH_N^\text{trap}$ to the $N$-particle sector.

\medskip

{\it Sketch of the proof of (\ref{eq:gpgpgp}).}  Let us consider the case $\xi_N = \Omega$. We compute 
\begin{equation}\label{eq:vac-ex}\Big\langle W(\sqrt{N} \varphi) T_0 \Omega, \cH_N^{\text{trap}} W(\sqrt{N}\varphi) T_0\Omega \Big\rangle.
\end{equation}
conjugating the Hamiltonian first with the Weyl operator (producing a shift of the creation and annihilation operators) and then with the Bogoliubov transformation (which acts on the annihilation and creation operators as given by (\ref{eq:bogo-aa})). At the cost of commutators appearing, all terms can be brought into normal order. At the end, all terms with creation and annihilation operators written in normal order vanish when we consider the vacuum expectation. Hence, we find the following contributions to (\ref{eq:vac-ex}). {F}rom the kinetic energy we have
\begin{equation}\label{eq:kineticexpansion}\begin{split}
\int dx \, &\left\langle \Omega, T_0^* W(\sqrt{N}\varphi)^* \nabla_x a^*_x \nabla_x a_x W(\sqrt{N}\varphi) T_0 \Omega \right\rangle = N\int d x \lvert \nabla \varphi(x)\rvert^2 + \int d x \, \| \nabla_x \text{sh}_x \|^2 \,.   \end{split}
\end{equation}
{F}rom the external potential we obtain 
\begin{equation}\label{eq:externalexpansion}\begin{split}
\int dx V_\text{ext} (x) \, &\left\langle \Omega, T_0^* W(\sqrt{N}\varphi)^*  a^*_x a_x W(\sqrt{N}\varphi) T_0 \Omega \right\rangle = N \int d x \lvert \varphi(x)\rvert^2 V_{\text{ext}}(x) + \int d x V_{\text{ext}}(x) \norm{\text{sh}_x}_2^2 \, .  
\end{split}\end{equation}
Finally, from the interaction, we get 
\begin{equation}\label{eq:quarticexpansion}
\begin{split}
\int dx dy N^2 V(N(x-y)) &\left\langle \Omega, T_0^* W(\sqrt{N}\varphi)^* a^*_x a^*_y a_y a_x W(\sqrt{N}\varphi) T_0 \Omega \right\rangle 
\\ & = \frac{1}{2}\int dxdy N^2 V(N(x-y)) \lvert \langle \text{ch}_y,\text{sh}_x\rangle\rvert^2\\& \quad + \frac{1}{2} \int d x d y N^3V(N(x-y))\left(\langle \text{sh}_x,\text{ch}_y \rangle \varphi(x) \varphi(y) + \text{c.\,c.}\right)
\\&\quad+ \frac{N}{2}\int d x d y N^3 V(N(x-y)) \lvert \varphi(x)\rvert^2 \lvert \varphi(y)\rvert^2 \, .
\end{split}\end{equation}

To evaluate the terms on the r.\,h.\,s.\ of (\ref{eq:kineticexpansion}), (\ref{eq:externalexpansion}) and (\ref{eq:quarticexpansion}) it is useful (like in Section \ref{sec:GP}) to think of $\text{sh}_x(y) \simeq k_0 (y;x)$ and $\text{ch}_x(y) \simeq \delta(x-y)$; more precisely recall that, in contrast to the regular kernels of higher powers of $k_0$, $k_0(x;y)$ itself is singular as $(\lvert x-y\rvert+1/N)^{-1}$ for $N \to \infty$.  We then find (in the last step using Hardy's inequality)
\[ \begin{split} \norm{\text{sh}_x}^2_2 \simeq \int dy \, \lvert k_0(y;x)\rvert^2 &= \lvert \varphi(x)\rvert^2 \int dy \, \lvert \varphi(y)\rvert^2 \, \lvert N \omega(N(x-y))\rvert^2 \\ &\leq C \lvert \varphi(x)\rvert^2 \int dy \, \frac{\lvert \varphi(y)\rvert^2}{\lvert x-y\rvert^2} \leq \lvert \varphi(x)\rvert^2 \norm{\nabla \varphi}_2^2. \end{split} \]
This implies that 
\begin{equation}\label{eq:vextss} \int dx \, V_{\text{ext}}(x) \norm{\text{sh}_x}_2^2 \leq C, 
\end{equation}
i.\,e.\ the contribution is of order one (for sufficiently regular $\ph \in L^2 (\bR^3)$). Hence, this term can be neglected. 

As for the second term on the r.\,h.\,s.\ of (\ref{eq:kineticexpansion}), we notice that 
\[\begin{split}\int d x \| \nabla_x \text{sh}_x \|^2  & = \int d x \langle \text{sh}_x,(-\Delta_x) \text{sh}_x\rangle\\& \simeq - \int dx dy N \omega(N(x-y)) \overline{\varphi(x)} \overline{\varphi(y)} \Delta_x \left[ N \omega\left(N\left(x-y\right)\right) \varphi(x) \varphi(y) \right] \\
&=
 \int dx dy N^4 \omega(N(x-y)) \lvert \varphi(x)\rvert^2 \lvert \varphi(y)\rvert^2 (-\Delta \omega)(N(x-y)) + \mathcal{O}(\sqrt{N}) \end{split}\]
because contributions arising when one or two derivatives act on $\ph$ are of smaller order. 
The key observation now is the fact that $1-\omega$ satisfies the zero-energy scattering equation (\ref{eq:scat}), which implies that
\[(-\Delta \omega) (N(x-y)) = \frac{1}{2} V(N(x-y))(1-\omega(N(x-y))) \]
and thus that 
\begin{equation}\label{eq:Voo} \int d x \, \| \nabla_x \text{sh}_x \|^2 = \frac{1}{2} \int dx dy N^4 V (N (x-y)) \omega (N (x-y)) (1-\omega (N (x-y))) \, |\ph (x)|^2 |\ph (y)|^2 + O (\sqrt{N}). \end{equation}

Thus, from (\ref{eq:kineticexpansion}), (\ref{eq:externalexpansion}) and (\ref{eq:quarticexpansion}), we conclude (using again the approximations $\text{ch}_y (z) \simeq \delta (z-y)$ and $\text{sh}_x (z) \simeq k_0 (z;x)$) that
\[
\begin{split}
 \big\langle W (\sqrt{N} \ph) T_0 &\Omega , \cH_N^\text{trap} \, W(\sqrt{N} \ph) T_0 \Omega \big\rangle \\ = & \; N \int d x \, \lvert \nabla \varphi(x)\rvert^2 + N \int dx\, V_{\text{ext}}(x) \lvert \varphi(x)\rvert^2  \\
&+ \frac{N}{2} \int dx dy \, N^3 V\left(N\left(x-y\right)\right) \omega\left(N\left(x-y\right)\right)\left(1-\omega\left(N\left(x-y\right)\right) \right) \, \lvert \varphi(x)\rvert^2 \lvert \varphi(y)\rvert^2 \\
&+ \frac{N}{2} \int dx dy \, N^3 V(N(x-y)) \omega(N(x-y))^2 \lvert \varphi(x)\rvert^2 \lvert \varphi(y)\rvert^2\\
&- N \int dx dy \, N^3 V(N(x-y)) \omega(N(x-y)) \lvert \varphi(x) \rvert^2 \lvert \varphi(y)\rvert^2\\
&+ \frac{N}{2} \int dx dy \, N^3 V(N(x-y)) \lvert \varphi(x)\rvert^2 \lvert \varphi(y)\rvert^2 + \mathcal{O}(\sqrt{N})).
\end{split} \]
Combining all terms proportional to the interaction potential, we find  
\[ \begin{split} \big\langle W &(\sqrt{N} \ph) T_0 \Omega , \cH_N^\text{trap} \, W(\sqrt{N} \ph) T_0 \Omega \big\rangle \\ & = N \left[\int d x \left( \lvert \nabla\varphi(x) \rvert^2 + V_{\text{ext}}(x)\lvert\varphi(x)\rvert^2\right) \right. \\ &\hspace{3cm} \left. + \frac{1}{2} \int dx dy N^3 V\left(N\left(x-y\right)\right) f(N(x-y)) \lvert \varphi(x)\rvert^2 \lvert \varphi(y)\rvert^2\right] + \mathcal{O}(\sqrt{N}).
  \end{split} \]
Since $\int dx \, V(x) f(x) = 8 \pi a_0$, we obtain 
in the limit $N \to \infty$ (using again some regularity of $\ph$)
 \[ \begin{split} \big\langle W &(\sqrt{N} \ph) T_0 \Omega , \cH_N^\text{trap} \, W(\sqrt{N} \ph) T_0 \Omega \big\rangle= N \left[\int d x \left( \lvert \nabla\varphi(x) \rvert^2 + V_{\text{ext}}(x) + 4\pi a_0 \lvert \varphi(x)\rvert^4 \right) \right] + \mathcal{O}(\sqrt{N})\, ,
  \end{split}  \]
which proves (\ref{eq:gpgpgp}). This computation can be easily extended (estimating all normally ordered terms emerging from \eqref{eq:vac-ex} by \eqref{eq:assxiN}) to arbitrary $\xi_N$ satisfying (\ref{eq:assxiN}).

\end{document}